\def\bkR{{\rm I\kern-.17em R}}
\def \1n{1\hskip -3pt \mbox{N}}
\def \Sum {\displaystyle \sum }
\newtheorem{Proposition}{Proposition}
\newtheorem{Corollary}{Corollary}
\theoremstyle{Definition}
\newtheorem{Remark}{Remark}
\numberwithin{equation}{section}
\numberwithin{table}{section}
\newtheorem{lm}{Lemma}
\chardef\@x10\chardef\@xv60
\def\tcitime{
\def\@time{%
  \@minute\time\@hour\@minute\divide\@hour\@xv
  \ifnum\@hour<\@x 0\fi\the\@hour:%
  \multiply\@hour\@xv\advance\@minute-\@hour
  \ifnum\@minute<\@x 0\fi\the\@minute
  }}%
\def\QCTOpt[#1]#2{%
  \def\QCTOptB{#1}
  \def\QCTOptA{#2}
}
\def\QCTNOpt#1{%
  \def\QCTOptA{#1}
  \let\QCTOptB\empty
}
\def\Qct{%
  \@ifnextchar[{%
    \QCTOpt}{\QCTNOpt}
}
\def\QCBOpt[#1]#2{%
  \def\QCBOptB{#1}
  \def\QCBOptA{#2}
}
\def\QCBNOpt#1{%
  \def\QCBOptA{#1}
  \let\QCBOptB\empty
}
\def\Qcb{%
  \@ifnextchar[{%
    \QCBOpt}{\QCBNOpt}
}
\def\PrepCapArgs{%
  \ifx\QCBOptA\empty
    \ifx\QCTOptA\empty
      {}%
    \else
      \ifx\QCTOptB\empty
        {\QCTOptA}%
      \else
        [\QCTOptB]{\QCTOptA}%
      \fi
    \fi
  \else
    \ifx\QCBOptA\empty
      {}%
    \else
      \ifx\QCBOptB\empty
        {\QCBOptA}%
      \else
        [\QCBOptB]{\QCBOptA}%
      \fi
    \fi
  \fi
}
\def\GRAPHICSPS#1{%
 \ifcase\GRAPHICSTYPE
   \special{ps: #1}%
 \or
   \special{language "PS", include "#1"}%
 \fi
}%
\def\graffile#1#2#3#4{%
    \leavevmode
    \raise -#4 \BOXTHEFRAME{%
        \hbox to #2{\raise #3\hbox to #2{\null #1\hfil}}}%
}%
\def\draftbox#1#2#3#4{%
 \leavevmode\raise -#4 \hbox{%
  \frame{\rlap{\protect\tiny #1}\hbox to #2%
   {\vrule height#3 width\z@ depth\z@\hfil}%
  }%
 }%
}%
\newif\ifwasdraft
\def\GRAPHIC#1#2#3#4#5{%
 \ifnum\draft=\@ne\draftbox{#2}{#3}{#4}{#5}%
  \else\graffile{#1}{#3}{#4}{#5}%
  \fi
 }%
\def\addtoLaTeXparams#1{%
    \edef\LaTeXparams{\LaTeXparams #1}}%
\newif\ifBoxFrame \BoxFramefalse
\newif\ifOverFrame \OverFramefalse
\newif\ifUnderFrame \UnderFramefalse
\def\BOXTHEFRAME#1{%
   \hbox{%
      \ifBoxFrame
         \frame{#1}%
      \else
         {#1}%
      \fi
   }%
}
\def\doFRAMEparams#1{\BoxFramefalse\OverFramefalse\UnderFramefalse\readFRAMEparams#1\end}%
\def\readFRAMEparams#1{%
 \ifx#1\end%
  \let\next=\relax
  \else
  \ifx#1i\dispkind=\z@\fi
  \ifx#1d\dispkind=\@ne\fi
  \ifx#1f\dispkind=\tw@\fi
  \ifx#1t\addtoLaTeXparams{t}\fi
  \ifx#1b\addtoLaTeXparams{b}\fi
  \ifx#1p\addtoLaTeXparams{p}\fi
  \ifx#1h\addtoLaTeXparams{h}\fi
  \ifx#1X\BoxFrametrue\fi
  \ifx#1O\OverFrametrue\fi
  \ifx#1U\UnderFrametrue\fi
  \ifx#1w
    \ifnum\draft=1\wasdrafttrue\else\wasdraftfalse\fi
    \draft=\@ne
  \fi
  \let\next=\readFRAMEparams
  \fi
 \next
 }%
\def\IFRAME#1#2#3#4#5#6{%
      \bgroup
      \let\QCTOptA\empty
      \let\QCTOptB\empty
      \let\QCBOptA\empty
      \let\QCBOptB\empty
      #6%
      \parindent=0pt%
      \leftskip=0pt
      \rightskip=0pt
      \setbox0 = \hbox{\QCBOptA}%
      \@tempdima = #1\relax
      \ifOverFrame
          \typeout{This is not implemented yet}%
          \show\HELP
      \else
         \ifdim\wd0>\@tempdima
            \advance\@tempdima by \@tempdima
            \ifdim\wd0 >\@tempdima
               \textwidth=\@tempdima
               \setbox1 =\vbox{%
                  \noindent\hbox to \@tempdima{\hfill\GRAPHIC{#5}{#4}{#1}{#2}{#3}\hfill}\\%
                  \noindent\hbox to \@tempdima{\parbox[b]{\@tempdima}{\QCBOptA}}%
               }%
               \wd1=\@tempdima
            \else
               \textwidth=\wd0
               \setbox1 =\vbox{%
                 \noindent\hbox to \wd0{\hfill\GRAPHIC{#5}{#4}{#1}{#2}{#3}\hfill}\\%
                 \noindent\hbox{\QCBOptA}%
               }%
               \wd1=\wd0
            \fi
         \else
            \ifdim\wd0>0pt
              \hsize=\@tempdima
              \setbox1 =\vbox{%
                \unskip\GRAPHIC{#5}{#4}{#1}{#2}{0pt}%
                \break
                \unskip\hbox to \@tempdima{\hfill \QCBOptA\hfill}%
              }%
              \wd1=\@tempdima
           \else
              \hsize=\@tempdima
              \setbox1 =\vbox{%
                \unskip\GRAPHIC{#5}{#4}{#1}{#2}{0pt}%
              }%
              \wd1=\@tempdima
           \fi
         \fi
         \@tempdimb=\ht1
         \advance\@tempdimb by \dp1
         \advance\@tempdimb by -#2%
         \advance\@tempdimb by #3%
         \leavevmode
         \raise -\@tempdimb \hbox{\box1}%
      \fi
      \egroup%
}%
\def\DFRAME#1#2#3#4#5{%
 \begin{center}
     \let\QCTOptA\empty
     \let\QCTOptB\empty
     \let\QCBOptA\empty
     \let\QCBOptB\empty
     \ifOverFrame 
        #5\QCTOptA\par
     \fi
     \GRAPHIC{#4}{#3}{#1}{#2}{\z@}
     \ifUnderFrame 
        \nobreak\par #5\QCBOptA
     \fi
 \end{center}%
 }%
\def\FFRAME#1#2#3#4#5#6#7{%
 \begin{figure}[#1]%
  \let\QCTOptA\empty
  \let\QCTOptB\empty
  \let\QCBOptA\empty
  \let\QCBOptB\empty
  \ifOverFrame
    #4
    \ifx\QCTOptA\empty
    \else
      \ifx\QCTOptB\empty
        \caption{\QCTOptA}%
      \else
        \caption[\QCTOptB]{\QCTOptA}%
      \fi
    \fi
    \ifUnderFrame\else
      \label{#5}%
    \fi
  \else
    \UnderFrametrue%
  \fi
  \begin{center}\GRAPHIC{#7}{#6}{#2}{#3}{\z@}\end{center}%
  \ifUnderFrame
    #4
    \ifx\QCBOptA\empty
      \caption{}%
    \else
      \ifx\QCBOptB\empty
        \caption{\QCBOptA}%
      \else
        \caption[\QCBOptB]{\QCBOptA}%
      \fi
    \fi
    \label{#5}%
  \fi
  \end{figure}%
 }%
\def\makeactives{
  \catcode`\"=\active
  \catcode`\;=\active
  \catcode`\:=\active
  \catcode`\'=\active
  \catcode`\~=\active
}
   \gdef\activesoff{%
      \def"{\string"}
      \def;{\string;}
      \def:{\string:}
      \def'{\string'}
      \def~{\string~}
    }
\def\FRAME#1#2#3#4#5#6#7#8{%
 \bgroup
 \@ifundefined{bbl@deactivate}{}{\activesoff}
 \ifnum\draft=\@ne
   \wasdrafttrue
 \else
   \wasdraftfalse%
 \fi
 \def\LaTeXparams{}%
 \dispkind=\z@
 \def\LaTeXparams{}%
 \doFRAMEparams{#1}%
 \ifnum\dispkind=\z@\IFRAME{#2}{#3}{#4}{#7}{#8}{#5}\else
  \ifnum\dispkind=\@ne\DFRAME{#2}{#3}{#7}{#8}{#5}\else
   \ifnum\dispkind=\tw@
    \edef\@tempa{\noexpand\FFRAME{\LaTeXparams}}%
    \@tempa{#2}{#3}{#5}{#6}{#7}{#8}%
    \fi
   \fi
  \fi
  \ifwasdraft\draft=1\else\draft=0\fi{}%
  \egroup
 }%
\def\TEXUX#1{"texux"}
\long\def\QQQ#1#2{%
     \long\expandafter\def\csname#1\endcsname{#2}}%
\long\def\QQA#1#2{}%
\def\QTR#1#2{{\csname#1\endcsname #2}}
\def\EXPAND#1[#2]#3{}%
\def\NOEXPAND#1[#2]#3{}%
\def\LaTeXparent#1{}%
\def\ChildStyles#1{}%
\def\ChildDefaults#1{}%
\def\QTagDef#1#2#3{}%
\def\QQfnmark#1{\footnotemark}
\def\makeatletter\input gnuindex.sty\makeatother\makeindex{\makeatletter\input gnuindex.sty\makeatother\makeindex}%
\def\initial#1{\bigbreak{\raggedright\large\bf #1}\kern 2\p@\penalty3000}}%
 \def\abstract{%
  \if@twocolumn
   \section*{Abstract (Not appropriate in this style!)}%
   \else \small 
   \begin{center}{\bf Abstract\vspace{-.5em}\vspace{\z@}}\end{center}%
   \quotation 
   \fi
  }%
   \def\registered{\relax\ifmmode{}\r@gistered
                    \else$\m@th\r@gistered$\fi}%
 \def\r@gistered{^{\ooalign
  {\hfil\raise.07ex\hbox{$\scriptstyle\rm\text{R}$}\hfil\crcr
  \mathhexbox20D}}}}{}%
\newdimen\theight
\def\Column{%
 \vadjust{\setbox\z@=\hbox{\scriptsize\quad\quad tcol}%
  \theight=\ht\z@\advance\theight by \dp\z@\advance\theight by \lineskip
  \kern -\theight \vbox to \theight{%
   \rightline{\rlap{\box\z@}}%
   \vss
   }%
  }%
 }%
\def\qed{%
 \ifhmode\unskip\nobreak\fi\ifmmode\ifinner\else\hskip5\p@\fi\fi
 \hbox{\hskip5\p@\vrule width4\p@ height6\p@ depth1.5\p@\hskip\p@}%
 }%
\def\miss{\hbox{\vrule height2\p@ width 2\p@ depth\z@}}%
\def\tcol#1{{\baselineskip=6\p@ \vcenter{#1}} \Column}  %
\def\newfmtname{LaTeX2e}
\def\chkcompat{%
   \if@compatibility
   \else
     \usepackage{latexsym}
   \fi
}
  \DeclareOldFontCommand{\rm}{\normalfont\rmfamily}{\mathrm}
  \DeclareOldFontCommand{\sf}{\normalfont\sffamily}{\mathsf}
  \DeclareOldFontCommand{\tt}{\normalfont\ttfamily}{\mathtt}
  \DeclareOldFontCommand{\bf}{\normalfont\bfseries}{\mathbf}
  \DeclareOldFontCommand{\it}{\normalfont\itshape}{\mathit}
  \DeclareOldFontCommand{\sl}{\normalfont\slshape}{\@nomath\sl}
  \DeclareOldFontCommand{\sc}{\normalfont\scshape}{\@nomath\sc}
\def\alpha{\Greekmath 010B }%
\def\beta{\Greekmath 010C }%
\def\gamma{\Greekmath 010D }%
\def\delta{\Greekmath 010E }%
\def\epsilon{\Greekmath 010F }%
\def\zeta{\Greekmath 0110 }%
\def\eta{\Greekmath 0111 }%
\def\theta{\Greekmath 0112 }%
\def\iota{\Greekmath 0113 }%
\def\kappa{\Greekmath 0114 }%
\def\lambda{\Greekmath 0115 }%
\def\mu{\Greekmath 0116 }%
\def\nu{\Greekmath 0117 }%
\def\xi{\Greekmath 0118 }%
\def\pi{\Greekmath 0119 }%
\def\rho{\Greekmath 011A }%
\def\sigma{\Greekmath 011B }%
\def\tau{\Greekmath 011C }%
\def\upsilon{\Greekmath 011D }%
\def\phi{\Greekmath 011E }%
\def\chi{\Greekmath 011F }%
\def\psi{\Greekmath 0120 }%
\def\omega{\Greekmath 0121 }%
\def\varepsilon{\Greekmath 0122 }%
\def\vartheta{\Greekmath 0123 }%
\def\varpi{\Greekmath 0124 }%
\def\varrho{\Greekmath 0125 }%
\def\varsigma{\Greekmath 0126 }%
\def\varphi{\Greekmath 0127 }%
\def\nabla{\Greekmath 0272 }
\def\FindBoldGroup{%
   {\setbox0=\hbox{$\mathbf{x\global\edef\theboldgroup{\the\mathgroup}}$}}%
}
\def\Greekmath#1#2#3#4{%
    \if@compatibility
        \ifnum\mathgroup=\symbold
           \mathchoice{\mbox{\boldmath$\displaystyle\mathchar"#1#2#3#4$}}%
                      {\mbox{\boldmath$\textstyle\mathchar"#1#2#3#4$}}%
                      {\mbox{\boldmath$\scriptstyle\mathchar"#1#2#3#4$}}%
                      {\mbox{\boldmath$\scriptscriptstyle\mathchar"#1#2#3#4$}}%
        \else
           \mathchar"#1#2#3#4%
        \fi 
    \else 
        \FindBoldGroup
        \ifnum\mathgroup=\theboldgroup 
           \mathchoice{\mbox{\boldmath$\displaystyle\mathchar"#1#2#3#4$}}%
                      {\mbox{\boldmath$\textstyle\mathchar"#1#2#3#4$}}%
                      {\mbox{\boldmath$\scriptstyle\mathchar"#1#2#3#4$}}%
                      {\mbox{\boldmath$\scriptscriptstyle\mathchar"#1#2#3#4$}}%
        \else
           \mathchar"#1#2#3#4%
        \fi     	    
	  \fi}
\newif\ifGreekBold  \GreekBoldfalse
\let\SAVEPBF=\pbf
\def\pbf{\GreekBoldtrue\SAVEPBF}%
  \newcounter{equationnumber}  
  \def\mathletters{%
     \addtocounter{equation}{1}
     \edef\@currentlabel{\theequation}%
     \setcounter{equationnumber}{\c@equation}
     \setcounter{equation}{0}%
     \edef\theequation{\@currentlabel\noexpand\alph{equation}}%
  }
    \def\BibTeX{{\rm B\kern-.05em{\sc i\kern-.025em b}\kern-.08em
                 T\kern-.1667em\lower.7ex\hbox{E}\kern-.125emX}}}{}%
\def\AmS{{\protect\usefont{OMS}{cmsy}{m}{n}%
                A\kern-.1667em\lower.5ex\hbox{M}\kern-.125emS}}}{}%
\let\DOTSI\relax
\def\RIfM@{\relax\ifmmode}%
\def\FN@{\futurelet\next}%
\def\iint{\DOTSI\intno@\tw@\FN@\ints@}%
\def\iiint{\DOTSI\intno@\thr@@\FN@\ints@}%
\def\iiiint{\DOTSI\intno@4 \FN@\ints@}%
\def\idotsint{\DOTSI\intno@\z@\FN@\ints@}%
\def\ints@{\findlimits@\ints@@}%
\newif\iflimtoken@
\newif\iflimits@
\def\findlimits@{\limtoken@true\ifx\next\limits\limits@true
 \else\ifx\next\nolimits\limits@false\else
 \limtoken@false\ifx\ilimits@\nolimits\limits@false\else
 \ifinner\limits@false\else\limits@true\fi\fi\fi\fi}%
\def\multint@{\int\ifnum\intno@=\z@\intdots@                          
 \else\intkern@\fi                                                    
 \ifnum\intno@>\tw@\int\intkern@\fi                                   
 \ifnum\intno@>\thr@@\int\intkern@\fi                                 
 \int}
\def\multintlimits@{\intop\ifnum\intno@=\z@\intdots@\else\intkern@\fi
 \ifnum\intno@>\tw@\intop\intkern@\fi
 \ifnum\intno@>\thr@@\intop\intkern@\fi\intop}%
\def\intic@{%
    \mathchoice{\hskip.5em}{\hskip.4em}{\hskip.4em}{\hskip.4em}}%
\def\negintic@{\mathchoice
 {\hskip-.5em}{\hskip-.4em}{\hskip-.4em}{\hskip-.4em}}%
\def\ints@@{\iflimtoken@                                              
 \def\ints@@@{\iflimits@\negintic@
   \mathop{\intic@\multintlimits@}\limits                             
  \else\multint@\nolimits\fi                                          
  \eat@}
 \else                                                                
 \def\ints@@@{\iflimits@\negintic@
  \mathop{\intic@\multintlimits@}\limits\else
  \multint@\nolimits\fi}\fi\ints@@@}%
\def\intkern@{\mathchoice{\!\!\!}{\!\!}{\!\!}{\!\!}}%
\def\plaincdots@{\mathinner{\cdotp\cdotp\cdotp}}%
\def\intdots@{\mathchoice{\plaincdots@}%
 {{\cdotp}\mkern1.5mu{\cdotp}\mkern1.5mu{\cdotp}}%
 {{\cdotp}\mkern1mu{\cdotp}\mkern1mu{\cdotp}}%
 {{\cdotp}\mkern1mu{\cdotp}\mkern1mu{\cdotp}}}%
\def\RIfM@{\relax\protect\ifmmode}
\def\text{\RIfM@\expandafter\text@\else\expandafter\mbox\fi}
\let\nfss@text\text
\def\text@#1{\mathchoice
   {\textdef@\displaystyle\f@size{#1}}%
   {\textdef@\textstyle\tf@size{\firstchoice@false #1}}%
   {\textdef@\textstyle\sf@size{\firstchoice@false #1}}%
   {\textdef@\textstyle \ssf@size{\firstchoice@false #1}}%
   \glb@settings}
\def\textdef@#1#2#3{\hbox{{%
                    \everymath{#1}%
                    \let\f@size#2\selectfont
                    #3}}}
\newif\iffirstchoice@
\def\Let@{\relax\iffalse{\fi\let\\=\cr\iffalse}\fi}%
\def\vspace@{\def\vspace##1{\crcr\noalign{\vskip##1\relax}}}%
\def\multilimits@{\bgroup\vspace@\Let@
 \baselineskip\fontdimen10 \scriptfont\tw@
 \advance\baselineskip\fontdimen12 \scriptfont\tw@
 \lineskip\thr@@\fontdimen8 \scriptfont\thr@@
 \lineskiplimit\lineskip
 \vbox\bgroup\ialign\bgroup\hfil$\m@th\scriptstyle{##}$\hfil\crcr}%
\def\Sb{_\multilimits@}%
\def\endSb{\crcr\egroup\egroup\egroup}%
\def\Sp{^\multilimits@}%
\newdimen\ex@
\def\rightarrowfill@#1{$#1\m@th\mathord-\mkern-6mu\cleaders
 \hbox{$#1\mkern-2mu\mathord-\mkern-2mu$}\hfill
 \mkern-6mu\mathord\rightarrow$}%
\def\leftarrowfill@#1{$#1\m@th\mathord\leftarrow\mkern-6mu\cleaders
 \hbox{$#1\mkern-2mu\mathord-\mkern-2mu$}\hfill\mkern-6mu\mathord-$}%
\def\leftrightarrowfill@#1{$#1\m@th\mathord\leftarrow
\mkern-6mu\cleaders
 \hbox{$#1\mkern-2mu\mathord-\mkern-2mu$}\hfill
 \mkern-6mu\mathord\rightarrow$}%
\def\overrightarrow{\mathpalette\overrightarrow@}%
\def\overrightarrow@#1#2{\vbox{\ialign{##\crcr\rightarrowfill@#1\crcr
 \noalign{\kern-\ex@\nointerlineskip}$\m@th\hfil#1#2\hfil$\crcr}}}%
\def\overleftarrow{\mathpalette\overleftarrow@}%
\def\overleftarrow@#1#2{\vbox{\ialign{##\crcr\leftarrowfill@#1\crcr
 \noalign{\kern-\ex@\nointerlineskip}$\m@th\hfil#1#2\hfil$\crcr}}}%
\def\overleftrightarrow{\mathpalette\overleftrightarrow@}%
\def\overleftrightarrow@#1#2{\vbox{\ialign{##\crcr
   \leftrightarrowfill@#1\crcr
 \noalign{\kern-\ex@\nointerlineskip}$\m@th\hfil#1#2\hfil$\crcr}}}%
\def\underrightarrow{\mathpalette\underrightarrow@}%
\def\underrightarrow@#1#2{\vtop{\ialign{##\crcr$\m@th\hfil#1#2\hfil
  $\crcr\noalign{\nointerlineskip}\rightarrowfill@#1\crcr}}}%
\def\underleftarrow{\mathpalette\underleftarrow@}%
\def\underleftarrow@#1#2{\vtop{\ialign{##\crcr$\m@th\hfil#1#2\hfil
  $\crcr\noalign{\nointerlineskip}\leftarrowfill@#1\crcr}}}%
\def\underleftrightarrow{\mathpalette\underleftrightarrow@}%
\def\underleftrightarrow@#1#2{\vtop{\ialign{##\crcr$\m@th
  \hfil#1#2\hfil$\crcr
 \noalign{\nointerlineskip}\leftrightarrowfill@#1\crcr}}}%
\def\qopnamewl@#1{\mathop{\operator@font#1}\nlimits@}
\let\nlimits@\displaylimits
\def\setboxz@h{\setbox\z@\hbox}
\def\varlim@#1#2{\mathop{\vtop{\ialign{##\crcr
 \hfil$#1\m@th\operator@font lim$\hfil\crcr
 \noalign{\nointerlineskip}#2#1\crcr
 \noalign{\nointerlineskip\kern-\ex@}\crcr}}}}
 \def\rightarrowfill@#1{\m@th\setboxz@h{$#1-$}\ht\z@\z@
  $#1\copy\z@\mkern-6mu\cleaders
  \hbox{$#1\mkern-2mu\box\z@\mkern-2mu$}\hfill
  \mkern-6mu\mathord\rightarrow$}
\def\leftarrowfill@#1{\m@th\setboxz@h{$#1-$}\ht\z@\z@
  $#1\mathord\leftarrow\mkern-6mu\cleaders
  \hbox{$#1\mkern-2mu\copy\z@\mkern-2mu$}\hfill
  \mkern-6mu\box\z@$}
\def\projlim{\qopnamewl@{proj\,lim}}
\def\injlim{\qopnamewl@{inj\,lim}}
\def\varinjlim{\mathpalette\varlim@\rightarrowfill@}
\def\varprojlim{\mathpalette\varlim@\leftarrowfill@}
\def\varliminf{\mathpalette\varliminf@{}}
\def\varliminf@#1{\mathop{\underline{\vrule\@depth.2\ex@\@width\z@
   \hbox{$#1\m@th\operator@font lim$}}}}
\def\varlimsup{\mathpalette\varlimsup@{}}
\def\varlimsup@#1{\mathop{\overline
  {\hbox{$#1\m@th\operator@font lim$}}}}
\def\align{\@verbatim \frenchspacing\@vobeyspaces \@alignverbatim
You are using the "align" environment in a style in which it is not defined.}
\let\csname endalign*\endcsname =\endtrivlist
\def\alignat{\@verbatim \frenchspacing\@vobeyspaces \@alignatverbatim
You are using the "alignat" environment in a style in which it is not defined.}
\let\csname endalignat*\endcsname =\endtrivlist
\def\xalignat{\@verbatim \frenchspacing\@vobeyspaces \@xalignatverbatim
You are using the "xalignat" environment in a style in which it is not defined.}
\let\csname endxalignat*\endcsname =\endtrivlist
\def\gather{\@verbatim \frenchspacing\@vobeyspaces \@gatherverbatim
You are using the "gather" environment in a style in which it is not defined.}
\let\csname endgather*\endcsname =\endtrivlist
\def\multiline{\@verbatim \frenchspacing\@vobeyspaces \@multilineverbatim
You are using the "multiline" environment in a style in which it is not defined.}
\let\csname endmultiline*\endcsname =\endtrivlist
\def\arrax{\@verbatim \frenchspacing\@vobeyspaces \@arraxverbatim
You are using a type of "array" construct that is only allowed in AmS-LaTeX.}
\def\tabulax{\@verbatim \frenchspacing\@vobeyspaces \@tabulaxverbatim
You are using a type of "tabular" construct that is only allowed in AmS-LaTeX.}
\let\csname endarrax*\endcsname =\endtrivlist
\let\csname endtabulax*\endcsname =\endtrivlist
\def\@@eqncr{\let\@tempa\relax
    \ifcase\@eqcnt \def\@tempa{& & &}\or \def\@tempa{& &}%
      \else \def\@tempa{&}\fi
     \@tempa
     \if@eqnsw
        \iftag@
           \@taggnum
        \else
           \@eqnnum\stepcounter{equation}%
        \fi
     \fi
     \global\tag@false
     \global\@eqnswtrue
     \global\@eqcnt\z@\cr}
 \def\endequation{%
     \ifmmode\ifinner 
      \iftag@
        \addtocounter{equation}{-1} 
        $\hfil
           \displaywidth\linewidth\@taggnum\egroup \endtrivlist
        \global\tag@false
        \global\@ignoretrue   
      \else
        $\hfil
           \displaywidth\linewidth\@eqnnum\egroup \endtrivlist
        \global\tag@false
        \global\@ignoretrue 
      \fi
     \else   
      \iftag@
        \addtocounter{equation}{-1} 
        \eqno \hbox{\@taggnum}
        \global\tag@false%
        $$\global\@ignoretrue
      \else
        \eqno \hbox{\@eqnnum}
        $$\global\@ignoretrue
      \fi
     \fi\fi
 } 
 \newif\iftag@ \tag@false
 \def\tag{\@ifnextchar*{\@tagstar}{\@tag}}
 \def\@tag#1{%
     \global\tag@true
     \global\def\@taggnum{(#1)}}
 \def\@tagstar*#1{%
     \global\tag@true
     \global\def\@taggnum{#1}%
}
\begin{document}

\title{The Causal-Noncausal Tail Processes}

\author{Gouri\'eroux, C.,$^{(1)}$, Y., Lu $^{(2)}$, and C.Y., Robert $^{(3)}$ }

\addtocounter{footnote}{1} \footnotetext{University of Toronto, Toulouse School of Economics and CREST.}
\addtocounter{footnote}{1} \footnotetext{Department of Mathematics and Statistics.  Concordia University. Corresponding author.  Email: yang.lu@concordia.ca}

\addtocounter{footnote}{1} \footnotetext{Universit\'e Lyon 1 and CREST.}

\date{}
\maketitle

\begin{center}
\textbf{The Causal-Noncausal Tail Processes: An Introduction} \vspace{1em}

Abstract
\end{center}

This paper considers one-dimensional mixed causal/noncausal autoregressive (MAR) processes with heavy tails, usually introduced to model trajectories with patterns, including asymmetric peaks and troughs, speculative bubbles, flash crashes, or jumps. We especially focus on the extremal behaviour of these processes, when at a given exogenous date the process is above a large threshold, and emphasize the roles of pure causal and noncausal components of the tail process. We provide the dynamic of the tail process and explain how it can be updated during the life of a speculative bubble. In particular, we discuss the prediction of the turning point(s) and introduce pure residual plots as diagnostic tools for the bubble episodes.\vspace{1em}

\textbf{Keywords}: Linear Process, Mixed Causal-Noncausal Process, Conditional Extreme Value, Conditional Tail Process, Conditional Pure Residual Plots, Bubble.

\section{Introduction}
Mixed causal-noncausal autoregressive (MAR) processes are stationary nonlinear processes whose trajectories can feature special patterns as asymmetric peaks and troughs, local trends, speculative bubbles, flash crashes, or jumps. These patterns are typically encountered when analyzing commodity prices, as the oil prices [Lof and Nyberg (2017), Cubbada et al. (2023)], the exchange rates of electronic currencies, as the Theter, or the bitcoin [Hencic and Gouri\'eroux (2019), Cavaliere et al. (2020)], financial indexes as the S\&P 500 or the Nasdaq [Fries (2022)], the evolution of climate risks, as the El Nino and La Nina phenomena [De Truchis, Fries and Thomas (2024)]. {We assume throughout the paper (except in section 5.3) that the MAR process is well-specified.}\footnote{It is out of the scope of the present paper to compare the causal-noncausal modelling with alternative models for bubble based on mildly explosive models introduced in Phillips, Shi and Yu (2015 a, b) [see Phillips and Shi (2020), Skrobotov (2024) for surveys of this literature]. }$^{,}$\footnote{Note that we will avoid a precise definition of a bubble, especially as opposed to fundamental series. The reason is due to recent results on stationary solutions in rational expectation models that include the standard forward solution (usually called fundamental) as well as stationary martingales with bubbles [Gouri\'eroux, Jasiak and Monfort (2020)], such as noncausal MAR. }

Although the estimation methodology of MAR processes has been well documented in the literature,  the prediction of such processes is fairly complicated and often simulation based [Gouri\'eroux and Jasiak (2016)].  This paper considers one-dimensional MAR processes and focuses on these extreme patterns.  In particular,  we introduce the tail process of a MAR process with Paretian (i.e. regular varying) error terms,  and explain how such results can be used to get simple approximations of the predictive distribution of a MAR process during bubble episodes.  In this respect, it complements recent results derived in the special case of MAR processes with $\alpha$-stable distributions\footnote{Such as Cauchy distributions, when $\alpha = 1.$} [Gouri\'eroux and Zako\"ian (2017), Fries and Zako\"ian (2019), Fries (2022),  De Truchis et al. (2025)].  Most of the results derived in this paper rely on different variants of the Single Big Jump (SBJ) heuristic or principle [Lehtomaa (2015),  Kulik and Soulier (2020)].  In its simplest form\footnote{See Feller (1991), Chapter VIII, Proposition on p. 278, or exercise 27 on p. 288.},  it says that,  if $X_1$ and $X_2$ are independent and have Paretian tails,  with survival functions that are asymptotically proportional one to the other\footnote{In this paper, we say that they have equivalent survival functions.  See Section 4.1 for equivalent density functions. }, i.e. : $\lim_{y \rightarrow \infty} \frac{\mathbb{P}[X_1>y]}{\mathbb{P}[X_2>y]}= \xi>0$,  then we have:
$$
\frac{\mathbb{P}[X_1+X_2>y]}{\mathbb{P}[X_1>y]+ \mathbb{P}[X_2>y]} \rightarrow 1,
$$
as $y$ increases to infinity.  This means that an extreme value of the sum $X_1+X_2$ is almost entirely due to one single extreme value of either $X_1$ or $X_2$.  A stronger version (see Lemma 2 in Section 4) is that conditional on $X_1+X_2>y$,  where $y$ is large,  or $X_1+X_2=y$,  with $y$ large,  the conditional distribution of the ratio $\frac{X_1}{X_1+X_2}$ converges to a Bernoulli distribution,  in other words, one of the two terms would be dominating.  These results can be extended to the sum of an arbitrary number of independent variables with equivalent survival functions,  and will allow us to derive quite simple limiting distributions for,  among others,  the predictive distribution of $y_{T+h},  h=1,2,...$,   where $(y_t)$ is a MAR process with a large current value $|y_T|$.

The plan of the paper is as follows.  Section 2 reviews the linear processes with heavy tails and the special case of mixed causal-noncausal autoregressive (MAR$(p,q)$) processes of orders $p$ and $q$. Section 3 analyses the extremal behaviour of these processes when, on a given date, the process is above a large threshold. First, we recall the form of the tail process derived in Kulik and Soulier (2020) for linear processes. This result is then applied to MAR processes. In particular, we emphasize the roles of the tail processes associated with the pure causal and noncausal components and the deterministic recursive equations satisfied by the tail process around the turning point of the underlying bubble. Section 4 considers the extremal behaviour of the MAR process for other types of extreme conditioning set, and discusses the updating of the predictive distribution with respect to the conditioning set. Section 5 introduces the (conditional) pure causal (resp. noncausal) residual plots and their confidence bands and explains how these conditional plots can be used as diagnostic tools to analyse the bubble episodes in the MAR framework.  Section 6 concludes. Proofs and additional examples are provided in appendices and online appendices.

\section{Linear Processes}

This section introduces linear processes with heavy tails and their two-sided moving average representations.

\subsection{Definition}

A (one-dimensional) linear process is a strictly stationary process $\left( y_{t}\right)
_{t\in \mathbb{Z}}$ with a two sided moving-average representation:

\begin{equation}
y_{t}=\sum_{h\in \mathbb{Z}}c_{h}\epsilon _{t-h},  \label{Eq_LinearProcess}
\end{equation}%

\noindent where $\left( \epsilon _{t}\right) _{t\in \mathbb{Z}}$ is a sequence of
independent, identically distributed (i.i.d.) random variables, and $\left(
c_{h}\right) $ is the sequence of moving average coefficients [see
Rosenblatt (2012) for an introduction and properties of linear processes].

Joint conditions on the distribution of $\epsilon _{t}$ and the sequence $%
\left( c_{h}\right) $ are required to ensure the existence of the series in $%
\left( \ref{Eq_LinearProcess}\right) $. They concern the tail index $\alpha $
of error $\epsilon _{t}$ assumed to be regularly varying:

\begin{equation}
\mathbb{P}[ |\epsilon _{t}|>y ] =y^{-\alpha }L\left( y\right), \alpha > 0,
\label{Eq_Slowly_varying}
\end{equation}%
where $L (.)$ is a slowly varying function, the existence of an extremal skewness $\pi$:

\begin{equation}
\label{skewness}
\lim_{y\rightarrow \infty }\frac{\mathbb{P}[ \epsilon _{t}>y]}{\mathbb{P}[|\epsilon _{t}|>y] }=\pi \in (0,1],
\end{equation}%
and the (powered) summability of the $c_{h}$:

\begin{equation}
\sum_{h\in \mathbb{Z}}|c_{h}|^{\delta }<\infty \text{,\quad for }\delta \in
(0,\alpha )\cap (0,1].  \label{Eq_Cond_c_k}
\end{equation}%

Then we have $\mathbb{E}[|\epsilon _{t}| ^{\delta }]<\infty $,
and 
$\left( y_{t}\right) $ is a
well-defined, strictly stationary process, with $\mathbb{E}[| y_{t}|
^{\delta }]<\infty .$

As seen in conditions $\left( \ref{Eq_Slowly_varying}\right) -\left( \ref%
{Eq_Cond_c_k}\right) $, we focus on linear processes when the errors $\left(
\epsilon _{t}\right) $ have Paretian tails\footnote{%
It is, of course, possible to consider processes with thin tails as Gaussian
processes, but such Gaussian processes do not provide the extreme patterns of interest.}. Indeed, with fat tails, we expect trajectories of process $%
\left( y_{t}\right) $ to respond in special ways to the drawing of an error $%
\epsilon _{t}$ in the tail, depending on the sequence $(c_{h})$. More
precisely, such a drawing can create jumps (in the causal case where $%
c_{h}=0 $, $h<0$), speculative bubbles (in the noncausal case where $c_{h}=0$%
, $h>0$), or asymmetric peaks and throughs (in the mixed causal-noncausal case) [see  Gouri\'eroux and Zako\"ian (2019) for a discussion]. We are especially interested in these extreme patterns.

It is known [Rosenblatt (2012)], that the representation $\left( \ref%
{Eq_LinearProcess}\right) $ of a linear process is not unique. The
distribution of $\epsilon _{t}$ and the sequence $(c_{h})$ are defined up to a signed scalar, and to the choice of a maturity origin, if at least one $c_{h}$ is
non zero. Then we can identify  the linear representation by imposing the maturity origin $0\in \mathbb{%
Z}$, such that $c_{0}=1$, for instance. Later on we assume simply $c_{0}>0$.\footnote{%
There exist other identification issues in the Gaussian case that we do not
discuss in this paper.}

\subsection{Mixed Causal-Noncausal Autoregressive Model}

It is usual to consider linear processes satisfying a mixed autoregressive
(MAR) specification [Lanne  and Saikkonen (2011),  Fries and Zako\"ian (2019)] of the type :
\begin{equation}
	\label{general}
\Phi \left( L\right) \Psi \left( L^{-1}\right) y_{t}=\epsilon _{t},
\end{equation}%
where:
\begin{eqnarray}
\Phi \left( L\right) &=&1-\phi _{1}L-...-\phi _{p}L^{p}, \\
\Psi \left( L\right) &=&1-\psi _{1}L-...-\psi _{q}L^{q},
\end{eqnarray}%
$L$ denoting the lag (backward) operator and $L^{-1}$ the lead (forward) operator. The roots of the operators $\Phi $ and $\Psi $
are strictly outside the unit circle. Such a representation is denoted MAR($%
p,q$), where $p$ and $q$ are the causal and noncausal orders, respectively.\footnote{Fries and Zako\"ian (2019) denote such a process MAR($q, p$),  instead.  We follow the initial notation of Lanne  and Saikkonen (2011).} We assume $\varphi_p \neq 0$, $\psi_q \neq 0$ for the orders to be uniquely defined. This specification has implicitly introduced identification restrictions on the autoregressive coefficients by assuming $ 
\Phi \left( 0\right) =1$ and $\Psi \left( 0\right)=1.$  
Such a MAR ($p,q$) model can be equivalently written as:%

\begin{equation}
\prod_{i=1}^{p}\left( 1-\lambda _{i}L\right) \prod_{j=1}^{q}\left( 1-\mu
_{j}L^{-1}\right) y_{t}=\epsilon _{t},
\end{equation}%

\noindent where $\lambda _{i},i=1,...,p,$ and $\mu _{j},j=1,...,q,$ are the inverses of the roots of polynomials $\Phi $ and $\Psi $, respectively. These roots can be real or complex, single or multiple. By definition, we have:

\begin{equation*}
|\lambda _{i}|<1,\forall i,|\mu _{j}|<1,\forall j.
\end{equation*}%

The strictly stationary solution of $\eqref{general} $ is unique
and admits a two sided moving average representation in $\left( \epsilon
_{t}\right) $ obtained by inverting the operators $\Phi \left( L\right) $ and $%
\Psi \left( L^{-1}\right) $. More precisely, we can write:%

\begin{equation}
y_{t}=\frac{1}{\Phi \left( L\right) \Psi \left( L^{-1}\right) }\epsilon _{t},
\end{equation}%
where:%

\begin{eqnarray}
\frac{1}{\Phi \left( L\right) } &=&\frac{1}{\prod_{i=1}^{p}\left( 1-\lambda
_{i}L\right) } \equiv \sum_{h=0}^{\infty }a_{h}L^{h}, \\
\frac{1}{\Psi \left( L^{-1}\right) } &=&\frac{1}{\prod_{j=1}^{q}\left(
1-\mu _{j}L^{-1}\right) }\equiv \sum_{h=0}^{\infty }b_{h}L^{-h}.
\end{eqnarray}%

Then we deduce:%

\begin{equation*}
y_{t}=\left( \sum_{h=0}^{\infty }a_{h}L^{h}\right) \left( \sum_{h=0}^{\infty
}b_{h}L^{-h}\right) \epsilon _{t}= \sum_{h\in \mathbb{Z}}c_{h}L^{h}\epsilon
_{t},
\end{equation*}%

\noindent where $c_{h}=\sum_{k=h^{+}}^{\infty }a_{h} b_{k-h},$ with $h^{+}=\max \left(
h,0\right) $.

Let us now discuss the closed form expressions of the sequences $a_{h}, b_{h},c_{h}$ as
functions of the $\lambda _{i},\mu _{j}$. For expository purposes, let us focus on polynomials of degree non larger than $2$, the general case being analyzed in Appendix A.2.

We have:%
\begin{eqnarray*}
\frac{1}{\left( 1-\lambda _{1}L\right) \left( 1-\lambda _{2}L\right) } &=&%
\frac{1}{\lambda _{2}-\lambda _{1}}\left( \frac{\lambda _{2}}{1-\lambda _{2}L%
}-\frac{\lambda _{1}}{1-\lambda _{2}L}\right) \\
&=&\frac{1}{\lambda _{2}-\lambda _{1}}\left( \sum_{h=0}^{\infty }\left(
\lambda _{2}^{h+1}-\lambda _{1}^{h+1}\right) L^h\right) .
\end{eqnarray*}%

Then we deduce:%

\begin{equation}
a_{h}=\frac{\lambda _{2}^{h+1}-\lambda _{1}^{h+1}}{\lambda _{2}-\lambda _{1}}%
,
\end{equation}%
and similarly:

\begin{equation}
\label{expressionbh}
b_{h}=\frac{\mu _{2}^{h+1}-\mu _{1}^{h+1}}{\mu _{2}-\mu _{1}}.
\end{equation}

These expressions are always valid, but can be particularized for conjugate
complex roots or for double real root. In fact, we have the following three cases:

\begin{enumerate}
\item When $\lambda _{1}$ and $\lambda _{2}$ are real and distinct: we have:
\begin{equation*}
a_{h}=\frac{\lambda _{2}^{h+1}-\lambda _{1}^{h+1}}{\lambda _{2}-\lambda _{1}}%
;
\end{equation*}

\item When $\lambda _{1}$ and $\lambda _{2}$ are complex conjugate pairs: $\lambda
_{1}=\rho \exp \left( i\omega \right)$, with $\rho>0$, we have:

\begin{equation*}
a_{h}=\rho ^{h}\frac{\sin \left( \omega \left( h+1\right) \right) }{\sin
\left( \omega \right) };
\end{equation*}

\item When $\lambda _{1}$ is a double real root:%
\begin{equation*}
a_{h}=\lambda _{1}^{h}\left( 1+h\right) .
\end{equation*}
\end{enumerate}

Let us provide below the closed form of the associated moving average coefficients $c_{h}$ for different MAR($p,q$) processes (with real roots):

\begin{description}
\item[-] MAR($1,0$) = AR($1$)%
\begin{equation*}
c_{h}=\lambda ^{h},h\geq 0,c_{h}=0,h<0.
\end{equation*}

\item[-] MAR($2,0$) = AR($2$) (distinct root)%
\begin{equation*}
c_{h}=\frac{\lambda _{2}^{h+1}-\lambda _{1}^{h+1}}{\lambda _{2}-\lambda _{1}}%
,h\geq 0,c_{h}=0,h<0.
\end{equation*}

\item[-] MAR($2,0$) = AR($2$) (double root)%
\begin{equation*}
c_{h}=\lambda ^{h}\left( 1+h\right) ,h\geq 0,c_{h}=0,h<0.
\end{equation*}

\item[-] MAR($0,1$)%
\begin{equation*}
c_{h}=0,h>0,c_{h}=\mu ^{-h},h\leq 0.
\end{equation*}

\item[-] MAR($0,2$) (distinct roots)%
\begin{equation*}
c_{h}=0,h>0,c_{h}=\frac{\mu _{2}^{-h+1}-\mu _{1}^{-h+1}}{\mu _{2}-\mu _{1}}%
,h\leq 0.
\end{equation*}

\item[-] MAR($1,1$)%
\begin{equation*}
c_{h}=\frac{1}{1-\lambda \mu }\lambda ^{h},h\geq 0,c_{h}=\frac{1}{1-\lambda
\mu }\mu ^{-h},h\leq 0.
\end{equation*}
\end{description}

\subsection{Positivity}

The two-sided moving average representation can be applied with $\epsilon _{t}$ following a continuous distribution on $( -\infty, \infty) $, and with moving average coefficients $c_{h}$ of any sign, except $c_{0}$, assumed strictly positive for identification. However, these representations are often applied to series $\left(
y_{t}\right) $ that take positive values as commodity prices [Gouri\'eroux and
Zako\"ian (2019), Fries (2021)],  traded volumes [Bilayi-Biakana et al.
(2019)], or the air pressure differential for the analysis of El Nino-La
Nina phenomena [De Truchis et al. (2024)].

Then the representation can be applied as either:

\begin{equation}
y_{t}=\sum_{h\in \mathbb{Z}}c_{h}\epsilon _{t-h},  \label{Eq_Positive_eps}
\end{equation}%
or:

\begin{equation}
\label{log}
\log y_{t}=\sum_{h\in \mathbb{Z}}\tilde{c}_{h}\tilde{\epsilon}%
_{t-h}\Leftrightarrow y_{t}=\exp \left( \sum_{h\in \mathbb{Z}}\tilde{c}_{h}%
\tilde{\epsilon}_{t-h}\right) .
\end{equation}%

The models \eqref{Eq_Positive_eps}-\eqref{log} are not compatible except in the white noise case $c_{h}=0$, $%
\forall h\neq 0$.

It is easily checked that the first representation $\left( \ref%
{Eq_Positive_eps}\right) $ of the positive series $\left( y_{t}\right) $
implies that the distribution of $\epsilon _{t}$ is on $\left( 0,\infty
\right)$, and all the coefficients $c_{h}$ are nonnegative (by the
identifying restriction $c_{0}>0$). Then the two representations, i.e. the linear one and the exponential one, will lead to trajectories with different
patterns. Indeed, when $\epsilon _{t}$ is constrained to be positive, we can
only have right fat tail effects, whereas left and right fat tail effects
can exist in the exponential model. This difference of patterns depend on
the variables and phenomena of interest. Typically, some left tail effects
can be interesting to capture, as flash crashes in market prices, or
liquidity gaps in traded volumes.

The nonnegativity of the coefficient $c_{h}$ implies restrictions on the
coefficients of the MAR representation, i.e. on $\lambda _{i},\mu _{j}$. In
our examples with $p,q$ (smaller or) equal to $2$, we see that this
nonnegativity condition is realized if and only if $\lambda _{1},\mu
_{1,}\lambda _{2},\mu _{2}$ are all real nonnegative, including the
possibility of double roots. It also induces restrictions on the coefficients $\varphi,  \psi$ as alternating signs: $\varphi_1 > 0, \varphi_2 < 0, \psi_1 >0, \psi_2 < 0$.

\section{Extremal Behaviour And Tail Process}

If we consider an extreme impulse $\delta $ on $\epsilon _{t}$, the
responses on $y_{t+h}$ will be $c_{h}\delta $, $h\in \mathbb{Z}$. Therefore, 
the sequence of $c_{h}$'s could be considered as the Impulse Response
Function (IRF) corresponding to a unitary shock on $\epsilon_t$. However, this interpretation is a bit misleading if the large
value of $\epsilon _{t}$ is not controlled, but results from a drawing in
the tail. More precise statements are obtained in a conditional extreme
value framework [Kulik and Soulier (2020), Section 15.3].\footnote{See also Giancaterini,  Hecq,  Jasiak and Neyazi (2025) for a recent application of this result to the MAR(1,1) framework.}

\subsection{The Conditional Extreme Value (CEV) Framework}

We have the following proposition obtained by applying Proposition 5.2.5 of
Kulik and Soulier (2020) to infinite moving-average processes.

\begin{Proposition} The conditional distribution of the process $(y_{t+h}/|y_{t}|)_h$, $h$ varying, conditional on $|y_{t}|>y$, converges to the distribution of a process $\left(
X_{h}\right) $ when $y$ tends to infinity:%

\begin{equation*}
\mathcal{L}\left(  \Big( \frac{y_{t+h}}{|y_{t}|} \Big)_h \mid |y_{t}|>y\right) \overset{%
\mathrm{d}}{\longrightarrow }\mathcal{L}\Big( \left( X_{h}\right)_h \Big) ,
\end{equation*}%

\noindent where $X_{h}=X_{0}c_{N+h}/c_{N}$, $X_{0}$ and $N$ being two independent random variables, $X_{0}$ is a variable taking values $+1,-1$ with
probabilities $\pi ,1-\pi $, where the extremal skewness $\pi$ is defined in eq. \eqref{skewness}, %
and $N$ is a discrete random variable whose distribution is defined by:
\begin{equation}
p_{j}:=\mathbb{P}[ N=j] =\frac{|c_{j}|^{\alpha }}{\sum_{n\in \mathbb{Z}}|c_{n}|^{\alpha }}%
,\quad j\in \mathbb{Z}.
\end{equation}

The process $\left( X_{h}\right) $ is called the (spectral) tail process [see Basrak and Segers (2009),  Section 1].\end{Proposition}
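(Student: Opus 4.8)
The plan is to follow the Single Big Jump heuristic recalled in the Introduction and to bring the statement under the cited Proposition 5.2.5 of Kulik and Soulier (2020), by verifying its hypotheses for the infinite moving average \eqref{Eq_LinearProcess}. Since the asserted convergence is that of a process indexed by $h\in\mathbb{Z}$ under the product topology, it suffices to establish convergence of the finite-dimensional distributions. I would therefore fix a window $-m\le h\le m$ and aim at
\[
\mathcal{L}\Big(\big(y_{t+h}/|y_t|\big)_{|h|\le m}\,\big|\,|y_t|>y\Big)\overset{\mathrm{d}}{\longrightarrow}\mathcal{L}\big((X_h)_{|h|\le m}\big),
\]
the passage to all $h\in\mathbb{Z}$ being then automatic by the product-topology characterization of weak convergence (equivalently, Kolmogorov consistency).

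First I would control the tail of the series. Writing $y_{t+h}=\sum_{k\in\mathbb{Z}}c_k\epsilon_{t+h-k}$, I would truncate each sum to $|k|\le M$ and show, using the summability condition \eqref{Eq_Cond_c_k} together with the regular variation \eqref{Eq_Slowly_varying} (via Potter-type bounds), that the discarded part $\sum_{|k|>M}c_k\epsilon_{t+h-k}$ is asymptotically negligible relative to the threshold $y$ as $M$ grows. This reduces the infinite moving average to a finite one and is, I expect, the main obstacle: one must guarantee simultaneously that the truncation remainder does not itself produce an exceedance of $y$ and that, with conditional probability tending to one, no two of the shocks $(\epsilon_{t+h-k})$ are large at once. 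The i.i.d. structure of $(\epsilon_t)$ and the moment bound $\mathbb{E}[|\epsilon_t|^{\delta}]<\infty$ from \eqref{Eq_Cond_c_k} are the tools I would use here.

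Next, for the truncated finite moving average, I would apply the Single Big Jump principle. Conditionally on $|y_t|>y$, exactly one shock $\epsilon_{t-N}$ dominates; by regular variation the probability that the dominant shock sits at lag $j$ is proportional to $\mathbb{P}[|c_j\epsilon_{t-j}|>y]\sim|c_j|^{\alpha}\,\mathbb{P}[|\epsilon_t|>y]$, which after normalization gives the discrete law $p_j=|c_j|^{\alpha}/\sum_{n}|c_n|^{\alpha}$ for $N$. Given that the dominant shock is $\epsilon_{t-N}$, every coordinate satisfies $y_{t+h}=c_{N+h}\epsilon_{t-N}+o(|y_t|)$ uniformly over the window, whence $y_{t+h}/|y_t|\to \mathrm{sign}(\epsilon_{t-N})\,c_{N+h}/|c_N|$. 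The extremal-skewness condition \eqref{skewness} forces $\mathrm{sign}(\epsilon_{t-N})$ to converge to a variable $X_0\in\{+1,-1\}$ with probabilities $\pi,1-\pi$, and, the errors being i.i.d., this limiting sign is independent of which lag $N$ is selected.

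Finally I would assemble the pieces: the limiting law of $N$ and the independent limiting sign combine into $X_h=X_0\,c_{N+h}/c_N$ (absorbing $\mathrm{sign}(c_N)$ into $X_0$, so that $X_0$ is precisely the limiting sign of $y_t$), and letting $M\to\infty$ transfers the finite-window limit from the truncated process back to the original infinite moving average. The residual bookkeeping—that the $o(|y_t|)$ errors are uniform over $|h|\le m$ and that the conditioning event $\{|y_t|>y\}$ carries the right asymptotic mass $\big(\sum_n|c_n|^{\alpha}\big)\mathbb{P}[|\epsilon_t|>y]$—is routine once the truncation estimate of the second paragraph is secured.
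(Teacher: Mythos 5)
Your proposal takes a genuinely different route from the paper, for the simple reason that the paper contains no proof of Proposition 1 at all: it imports the statement wholesale from Proposition 5.2.5 of Kulik and Soulier (2020). (Note also that your opening sentence promises to verify the hypotheses of that cited proposition, but what you actually develop is an independent derivation.) What you sketch --- reduction to finite-dimensional laws, truncation of the series, single-big-jump localization for the finite part, then $M\to\infty$ --- is in essence a self-contained reconstruction of the argument that sits behind the cited result, and its outline is sound: the law $\mathbb{P}[N=j]\propto |c_j|^{\alpha}$ follows from $\mathbb{P}[|c_j\epsilon_{t-j}|>y]\sim |c_j|^{\alpha}\,\mathbb{P}[|\epsilon_t|>y]$, and the conditioning mass $\mathbb{P}[|y_t|>y]\sim \big(\sum_n |c_n|^{\alpha}\big)\mathbb{P}[|\epsilon_t|>y]$ that you defer to the truncation step is exactly the classical tail-equivalence theorem for infinite series of regularly varying terms under conditions (2.2)--(2.4) (Cline (1983), which the paper itself cites and uses in its Lemma 3; see also Mikosch (1999)). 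That truncation estimate is where all the real work lies; your plan (Potter bounds, the $\delta$-moment, exclusion of two simultaneous large shocks) is the standard way to carry it out, but in your write-up it remains asserted rather than executed, so the proposal is a correct blueprint rather than a complete proof.

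The one step that does not go through as written is the final ``absorption''. Your own derivation gives the correct limit $y_{t+h}/|y_t|\to E\,c_{N+h}/|c_N|$ with $E=\mathrm{sign}(\epsilon_{t-N})$ independent of $N$ and $\mathbb{P}[E=1]=\pi$. Rewriting this as $X_0\,c_{N+h}/c_N$ with $X_0:=E\,\mathrm{sign}(c_N)$ is algebraically legitimate, and this $X_0$ is indeed the limiting sign of $y_t$; but it is then no longer independent of $N$, nor does it have law $(\pi,1-\pi)$, unless all nonzero $c_h$ share the same sign or $\pi=1/2$. Concretely, take $c_0=1$, $c_1=-1$, all other $c_h=0$, and $\pi=1$: then $X_0=+1$ exactly when $N=0$ and $X_0=-1$ exactly when $N=1$, so $X_0$ and $N$ are perfectly dependent and $\mathbb{P}[X_0=1]=1/2\neq\pi$. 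In other words, your argument proves the correct general form of the tail process, $E\,c_{N+h}/|c_N|$ with $E\perp N$, which matches the proposition as stated only in the equal-sign case --- in particular in the nonnegative-coefficient, positive-process setting the paper actually uses in its applications. A complete write-up should either keep $|c_N|$ in the denominator, or restrict the independence and marginal-law claims for $X_0$ to that case; as it stands, your last paragraph silently attributes to $X_0$ distributional properties that fail for mixed-sign coefficients.
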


Here,  the convergence of the process $(y_{t+h}/|y_t|)$ means that any finite dimensional joint distribution of this process converges to the corresponding finite dimensional joint distribution of process $(X_h)$.  Note that the tail process is indexed by the horizon $h$ from date $t$, not by the calendar time $t$. Thus, the distribution of the tail process does not depend on $t$.  Note also that it depends on the distribution of $(\epsilon_t)$ through the tail parameters $\alpha$ and $\pi$.  In other words,  even if the initial process is semiparametric in nature when the distribution of $(\epsilon_t)$  is unspecified, it becomes parametric in the CEV framework. 

This proposition says that  when correctly normalized (by $|y_{t}|$),  the distribution of the process $(y_{t+h})$ is discrete,  making it ``easier" to predict future values of the process.  Moreover,  variables $X_0$ and $N$ can be interpreted using the single jump heuristic.  Indeed,  first,  given $|y_t|$ is large,  we can either have $y_t>y$,  or $y_t< -y$, with approximate probabilities $\pi$ and $1-\pi$, respectively,  where $\pi= \lim_{y \to +\infty} \mathbb{P}[y_t>y \mid |y_{t}|>y]$.  Thus $X_0$ is the indicator variable deciding which one of these two cases arises.  Let us, without loss of generality, focus on the case where $y_t>y.$  

Because $y_t= \sum_{j} c_j \epsilon_{t-j}$ is a linear combination of independent variables, all with equivalent tails,  an infinite term extension of the single jump heuristic mentioned in the Introduction says that exactly one among the $ c_j \epsilon_{t-j}$'s, $j$ varying,  will be large,  with the probability that it is  $c_j \epsilon_{t-j}$ being $p_j$.  Thus $t-N$ is the stochastic index of the large jump,  with the convention that $N>0$ indicates the large jump $\epsilon_{t-N}$ corresponds to a past date. That is, $N$ measures the distance between the current time $t$ and the location of the big jump,  called the ``epicenter". Then, one has:
$$
y_t \approx c_N \epsilon_{t-N}, \qquad y_{t+h} \approx c_{N+h} \epsilon_{t-N},  
$$  
and hence $y_{t+h}/y_t \approx c_{N+h}/c_N$. 

The result is greatly simplified when it is applied to a positive series $%
\left( y_{t}\right) $, i.e., when $\left( \epsilon _{t}\right) $ as well as
the $c_{h}$'s are nonnegative. Indeed, we get:

\begin{equation*}
X_{h}=c_{N+h}/c_{N},h\in \mathbb{Z},
\end{equation*}%
with $N$ defined in the same way as in Proposition 1.  

Contrary to the naive derivation of the IRF given by $c_{h}$ (which
corresponds to $N=0$ under the identification restriction $c_{0}=1$), we note
that the tail process involves a change of time origin by the
stochastic drift $N$. Under conditional extreme value, the variables $X_{h}$,
$h\in \mathbb{Z}$, depend non linearly on the unique stochastic factor $N$,
which can create deterministic links between the $X_{h}$'s, and then a dynamics different from the unconditional dynamics of $\left( y_{t+h}/|y_{t}|\right) $.

Alternatively, Proposition 1 can be written for the relative changes:

\begin{equation*}
r_{t+h}= y_{t+h}/y_{t+h-1}.
\end{equation*}%

We get the following result:\vspace{1em}

\begin{Proposition} For a positive process $(y_t)$, the conditional distribution given $y_{t}>y$ of the process $(r_{t+h})$, $h$ varying, converges to the distribution of process $$\left(
Z_{h}\right) =\left( X_{h}/X_{h-1}\right)=\big(c_{N+h}/c_{N+h-1}\big),$$ %
\noindent when $y$ tends to infinity, where $N$ and $\left( X_{h}\right) $ are defined as in Proposition 1.\vspace{1em}
\end{Proposition}

Note that Proposition 1 (resp. Proposition 2) implicitly assumes that $c_{N}$
(resp. all $c_{N+h-1}$) is not equal to zero with a strictly positive
probability.  
\vspace{1em}

\begin{Remark} When the error $\epsilon_t$ has an $\alpha-$stable distribution and for special MAR processes, the asymptotic behaviours of $(y_t)$ conditional on $y_t > y$ have been derived directly by using the spectral representation of the multivariate $\alpha$-distribution [see Rootzen (1978), Samorodnitsky and Taqqu (1994) for this representation, and Fries (2022), Section 4, De Truchis et al. (2025), for its use in the MAR framework]. The approach based on Proposition 1 is much more general and allows to working directly with the tail process, as seen in the next sections.\end{Remark}

\begin{Remark} Whereas the trajectories of $(r_{t+h})_h$ can take very different patterns, this is not the case for their tail analogue, which  weights only a countable set of patterns.
\end{Remark}

Let us now consider a MAR$(p,q)$ model with a parametric error distribution and denote by $\theta$ the vector of parameters that include the AR coefficients and the tail parameter of the error distribution. Then the distribution of the stochastic drift $N$ and the support of the tail process $(Z_h)$ depend on $\theta$. In practice, $\theta$ is unknown, but can be estimated by approximate maximum likelihood from a batch of data $y_t, t=1,\ldots, T$. If the model is well-specified, this estimator is consistent, converges at speed $\sqrt{T}$, and is asymptotically normal [Breidt, David, Lii and Rosenblatt (1991), Lanne and Saikkonen (2011), Davis and Song (2020)]. Therefore, all these summaries of the distribution of the tail process can be estimated by plugging in the estimator $\hat{\theta}_T$ instead of the unknown value $\theta$. These estimated summaries will converge to their asymptotic counterparts and their asymptotic distribution will be derived by the delta-method, if these summaries are differentiable functions of $\theta$.\footnote{Note that this delta method cannot be applied to the estimated support of the tail process itself.} This estimation can also be performed in a semi-parametric framework.  If $\Phi$ and $\Psi$ are estimated by the generalized covariance estimator [Gouri\'eroux and Jasiak (2023)],  or by minimizing an integrated residual dependence measure [Valesco (2022)], then we can derive approximated errors $\widehat{\epsilon_t}$ and deduce from this sample of residuals estimators of $\alpha$ and of the density of $\epsilon_t$.

\subsection{Applications}

Let us now apply Proposition 1 and/or Proposition 2 to some MAR examples of
Section 2.2. For this illustration, we consider positive processes, that is $\lambda_i>0, i=1,...,p,   \mu_j >0,  j=1,...,q$, and a conditioning performed at a given date $t$ corresponding to residual maturity $0$. Additional examples are provided in Appendix A.2.\vspace{1em}

\subsubsection{Pure causal process of order $1$: MAR($1,0$) = AR($1$)}

In this example, the stochastic drift $N$ is nonnegative, which means that the SBJ $\epsilon_{t-N}$ is indexed by a past or current date. Moreover,  we have:%

\begin{equation*}
\mathbb{P}[ N=j] =p_{j}=\left( 1-\lambda ^{\alpha }\right) ^{-1}\lambda
^{\alpha j},j\geq 0.
\end{equation*}%

Thus $N$ follows a Pascal (geometric) distribution.  Then we have:

\begin{equation*}
X_{h}=X_0\lambda ^{N+h}/\lambda ^{N}=\lambda ^{h},h\geq 0.
\end{equation*}%

In other words,  for $h\geq 0$, the tail process is no longer stochastic.  This can be explained as follows.  Since $Y_t= \lambda Y_{t-1}+\epsilon_t$,  conditional on $|y_t|>y$,  we have approximately $y_{t+1}/y_t \approx y_{t+2}/y_{t+1} \approx \cdots \approx \lambda$.  In other words, upon normalization,  the future trajectory becomes deterministic.   

However,  the effect of the stochastic drift appears for negative $h$, since:%

\begin{equation}
X_{h}=X_0\frac{\lambda ^{N+h}}{\lambda ^{N}}\mathbb{I}_{N+h\geq 0}=\lambda ^{h}%
\mathbb{I}_{N\geq -h},h<0,
\end{equation}
that corresponds to a backward binomial tree. This simply means that when we look backward in time,  since $y_t= \lambda y_{t-1}+\epsilon_t$,   conditional on $|y_t|>y$,   the previous observation $y_{t-1}$ can be either very large (and approximately equal to $y_t/\lambda$,  or close to zero,  according to the SBJ described in the Introduction.  Similarly,  conditional on $y_{t-1}$ being large,  $y_{t-2}$ can be even larger,  or close to zero.  Hence the backward binomial tree.

\subsubsection{Pure noncausal process of order $1$: MAR($0,1$)}

The situation is symmetric for the pure noncausal process. The stochastic drift is nonpositive:
\begin{equation*}
X_{h}=X_0\mu ^{-h},h\leq 0,
\end{equation*}%
meaning that the SBJ is indexed by the current or a future date.  Moreover, we have:%
\begin{equation}
X_{h}=X_0\mu ^{-h}\mathbb{I}_{N\leq -h},h>0.
\end{equation}

Thus we have a deterministic evolution (up to the sign $X_0$) for negative $h$ and a forward binomial tree for $h>0$.  In particular,  conditional on $|y_t|>y$,  $y_{t+1}/y_t$ is either close to $1/\mu$ (corresponding to further accumulation of the bubble),  or close to zero (corresponding to the collapse of the bubble).  Moreover,  in case $y_{t+1}/y_t$ is close to $1/\mu$,  the same bi-modal pattern can be said of $y_{t+2}/y_{t+1}$,  and possibly also $y_{t+3}/y_{t+2}, ...$,  hence the forward binomial tree.  This property has previously been derived in Fries (2022).  

\subsubsection{Pure noncausal MAR($0,2$) with double root}

We get:

\begin{equation}
X_{h}=X_0\mu ^{-h}\left( 1-h\right) \mathbb{I}_{N\leq -h},\forall h,
\end{equation}%

\noindent with a nonpositive $N$ factor. As in Section 3.2.2, we get a deterministic evolution (up to  the sign $X_0$) before $h=0$ (date $t$)
and an evolution with binomial tree after $h=0$ (date $t$). The difference
with Section 3.2.2 is in the explosion rate in the explosive branch. In both
Sections 3.2.2 and 3.2.3., the opposite of the stochastic drift $-N$ gives the
stochastic maturity until the bubble crash, that is the transition to value zero.

In the pure causal, or noncausal cases, the tail process $\left(
X_{h}\right) $ can take the value zero for some $h$ with strictly positive
probability. In this case the asymptotic changes $Z_{h}$ are not always
defined and Proposition 2 cannot be applied. This is no longer the case for
mixed models.\vspace{1em}

\begin{Remark} The stochastic binomial trees play a special role in Finance, where they are used as approximations of continuous time models by state and time discretizations, such as the Black-Scholes model [Cox, Ross and Rubinstein (1979)]. Such interpretations are valid for pure processes of order 1. In our framework with fat tails, the standard Black-Scholes equations will have to be replaced by a stochastic differential equation $dy_t = \mu y_t dt + \sigma y_t d \mathcal{L}_t$, where $\mathcal{L}_t$ denotes a Levy process, and then an extremal behaviour of $y_{t+h}/y_t$ given $y_t > y$ with $y$ large, will lead to this type of tree.
\end{Remark}

\subsubsection{Mixed causal-noncausal process MAR($1,1$)}

In this case the drift $N$ can take positive as well as negative values and
it is easier to work with the tail variables $Z_{h}$. We have:%
\begin{equation}
\label{mar110}
c_{h}/c_{h-1} =
\begin{cases}
& \lambda ,\quad \text{if }h\geq 1, \\
&\mu ^{-1},\quad \text{if }h\leq 0.
\end{cases}
\end{equation}
We deduce that:
\begin{equation}
\label{mar11}
Z_{h}  = c_{N+h}/c_{N+h-1}  =\mu ^{-1}\mathbb{I}_{N\leq -h}+\lambda \mathbb{I}_{N\geq 1-h}.
\end{equation}

In the mixed case, we get a binomial tree in terms of changes with a branch
exploding at rate $\mu ^{-1}$\ with probability $\mathbb{P}[N\leq -h]$,
and another branch decreasing at rate $\lambda $ with probability $\mathbb{P}[N\geq 1-h] =1-\mathbb{P}[N\leq -h]$.

It is also easily checked that the distribution $p_j = \mathbb{P}[N=j]$ of the stochastic drift $N$  is given by:
$$
p_j =\begin{cases}
\Big[\frac{1}{1-\lambda^{\alpha}} + \frac{1}{1-\mu^{\alpha}} - 1\Big]^{-1} \mu^{-\alpha j}, &\; \mbox{if}\; j \leq 0, \\
\Big [\frac{1}{1-\lambda^{\alpha}} + \frac{1}{1-\mu^{\alpha}} - 1\Big]^{-1} \lambda^{\alpha j}, &\; \mbox{if}\; j \geq 0,
\end{cases},
$$
where the two formulas coincide for $j=0$.  Thus the distribution of $N$ is a mixture of two types of ``geometric" distributions, that are a standard one for $j \geq 0$ and another written in reversed time for $j < 0$. This mixture distribution is symmetric if and only if the causal and noncausal roots are equal. The mode of the distribution is for $N=0$, but its mean can be of any sign depending if $\lambda$ is larger, or smaller than $\mu$ [see also Giancaterini et al. (2025)].

\subsubsection{Pure noncausal MAR($0,2$) with distinct roots}
Let us assume that $\mu_1$ and $\mu_2$ are distinct and positive, with $\mu_1>\mu_2$.  Then we have $c_{h+1}/c_{h}=\frac{\mu_1^{h+2}-\mu_2^{h+2}}{\mu_1^{h+1}-\mu_2^{h+1}}$ for any $h\geq 0$,  and it is easily checked that this sequence is decreasing,  with limit $\mu_1$.  Thus,  the typical future trajectory of an AR(2) process during a bubble is the following: when the current time $t$ is still far from the ``epicenter" of the bubble,  the process increases at approximately the rate $\mu_1^{-1}$; then the rate of accumulation decreases to its minimum value $b_1^{-1}=(\mu_1+\mu_2)^{-1}$,  and then the bubble collapses.  Note that this bubble pattern is almost deterministic,  on the contrary to noncausal AR(1) process.  This property has also been obtained by De Truchis et al.  (2025,  Remark 4.2) in the special case of $\alpha-$stable processes.  

\subsection{Causal and Noncausal Components}

To understand the behaviour of the tail process for a general mixed autoregressive process, it is useful to consider the causal and noncausal components of the MAR process [Lanne and Saikkonen (2011)]. Let us consider a MAR process $(y_t)$ such that:
\begin{equation*}
  \Phi (L) \Psi (L^{-1}) y_t = \epsilon_t \Longleftrightarrow y_t = \Sum^{+ \infty}_{h=-\infty} c_h \epsilon_{t-h}.
\end{equation*}
Its pure noncausal component $(u_t)$ is defined by:
\begin{equation}
	\label{ut}
u_t = \Phi (L) y_t = \frac{1}{\Psi (L^{-1})} \epsilon_t \equiv \Sum^0_{h=-\infty} b_h \epsilon_{t-h}.
\end{equation}
Its pure causal component $(v_t)$ is defined by:
\begin{equation}
	\label{vt}
v_t = \Psi (L^{-1}) y_t = \frac{1}{\Phi (L)} \epsilon_t = \Sum^\infty_{h=0} a_h \epsilon_{t-h}.
\end{equation}
The lemma below explains how to derive the moving-average coefficients $(a_h), (b_h)$ of the pure components from the moving average coefficients $(c_h)$ of the process. We first denote by $\tilde{L}$ the lag (backward) operator on maturities $h$.  That is,  for any sequence $(c_h)$,  $\tilde{L}(c_h)=c_{h-1}$,  for any $h \in \mathbb{Z}$.  

\begin{lm}  We have:\vspace{1em}

i) $b_h = \Phi (\tilde{L}) c_h, \forall h.$

ii) $a_h = \Psi (\tilde{L}^{-1}) c_h, \forall h.$

iii) In particular we get:

$$
\Phi (\tilde{L}) c_h  =  0, \forall h \geq 1, \qquad \Psi (\tilde{L}^{-1}) c_h  =  0, \forall h \leq -1.
$$
\end{lm}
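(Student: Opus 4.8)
The plan is to exploit a simple \emph{duality} between the lag operator $L$ acting on the calendar time $t$ of the process and the maturity-shift operator $\tilde{L}$ acting on the index $h$ of the coefficient sequence $(c_h)$. The crucial observation is that shifting the process in time merely reindexes its moving-average coefficients: since
\[
L^{k} y_t = y_{t-k} = \sum_{h} c_h\,\epsilon_{t-k-h} = \sum_{m} c_{m-k}\,\epsilon_{t-m},
\]
the coefficient attached to $\epsilon_{t-m}$ in $L^{k} y_t$ is exactly $c_{m-k} = (\tilde{L}^{k} c)_m$; the same computation with the inverse lag gives $L^{-k} y_t = \sum_{m} c_{m+k}\,\epsilon_{t-m}$, whose coefficient at $\epsilon_{t-m}$ is $(\tilde{L}^{-k} c)_m$. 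Because $\Phi$ and $\Psi$ are finite-degree polynomials, linearity then yields, reading off the coefficient of each $\epsilon_{t-h}$,
\[
\Phi(L) y_t = \sum_{h}\big(\Phi(\tilde{L}) c_h\big)\,\epsilon_{t-h}, \qquad \Psi(L^{-1}) y_t = \sum_{h}\big(\Psi(\tilde{L}^{-1}) c_h\big)\,\epsilon_{t-h}.
\]

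Next I would match these expansions against the defining moving-average representations of the pure components. By \eqref{ut} one has $u_t = \Phi(L) y_t = \sum_h b_h\,\epsilon_{t-h}$, and by \eqref{vt} one has $v_t = \Psi(L^{-1}) y_t = \sum_h a_h\,\epsilon_{t-h}$. Identifying the coefficient of $\epsilon_{t-h}$ term by term in the two displays above delivers $b_h = \Phi(\tilde{L}) c_h$ and $a_h = \Psi(\tilde{L}^{-1}) c_h$ for every $h$, which are exactly claims i) and ii).

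For iii) I would invoke the built-in causality of the two pure components. The noncausal component $u_t = \frac{1}{\Psi(L^{-1})}\epsilon_t$ loads only on $\epsilon_{t+k}$, $k\ge 0$, so in the convention of \eqref{ut} its coefficient $b_h$ vanishes for every $h \ge 1$; combined with i) this forces $\Phi(\tilde{L}) c_h = 0$ for all $h \ge 1$. Symmetrically, the causal component $v_t = \frac{1}{\Phi(L)}\epsilon_t$ loads only on $\epsilon_{t-k}$, $k\ge 0$, so $a_h = 0$ for every $h \le -1$, and ii) then yields $\Psi(\tilde{L}^{-1}) c_h = 0$ for all $h \le -1$.

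The only delicate point is the identification step, i.e. the right to equate coefficients of two convergent infinite moving averages in the same innovation sequence. I would settle this at the level of transfer functions rather than probabilistically: writing $C(z)=\sum_h c_h z^h = 1/[\Phi(z)\Psi(z^{-1})]$ (the Laurent expansion that defines $(c_h)$), multiplication by $z^{k}$ corresponds precisely to $\tilde{L}^{k}$ on coefficients, so the algebraic identities $\Phi(z)\,C(z) = 1/\Psi(z^{-1})$ and $\Psi(z^{-1})\,C(z) = 1/\Phi(z)$ give i)--iii) directly by reading off Laurent coefficients, with the vanishing in iii) coming from the one-sided support of the expansions of $1/\Psi(z^{-1})$ and $1/\Phi(z)$. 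The summability assumption \eqref{Eq_Cond_c_k} guarantees that all the series involved ($y_t$, $u_t$, $v_t$ and their finite $\Phi(L)$-, $\Psi(L^{-1})$-combinations) converge, so the probabilistic reading of the lemma is legitimate; everything else is bookkeeping, since $\Phi(\tilde{L})$ and $\Psi(\tilde{L}^{-1})$ are finite-order operators and raise no further convergence issue when applied to $(c_h)$.
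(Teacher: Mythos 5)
Your proof is correct and follows essentially the same route as the paper's: expand $\Phi(L)y_t$ (resp.\ $\Psi(L^{-1})y_t$) in the two-sided moving average, shift the summation index to read off the coefficient of $\epsilon_{t-h}$ as $\Phi(\tilde{L})c_h$ (resp.\ $\Psi(\tilde{L}^{-1})c_h$), identify with the coefficients $b_h$, $a_h$ of the pure components, and obtain iii) from the one-sided support of $u_t$ and $v_t$. Your extra paragraph justifying coefficient identification via the Laurent expansion $C(z)=1/[\Phi(z)\Psi(z^{-1})]$ is a welcome touch of rigor that the paper leaves implicit, but it does not change the substance of the argument.
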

\begin{proof}
See Appendix A.1.
\end{proof}

Let us now assume $y_t > y$ and standardize by $y_t$ the pure causal and noncausal components. We have, for large $y$:

\begin{eqnarray}
  u_{t+h}/y_t & =& [\Phi (L) y_{t+h}]/y_t \stackrel{d}{\rightarrow} \Phi (\tilde{L}) X_h = X_0\Phi (\tilde{L}) \frac{c_{h+N}}{c_N}, \\
  \mbox{and}\; \; v_{t+h}/y_t & = & [\Psi (L^{-1}) y_{t+h}]/y_t \stackrel{d}{\rightarrow} \psi (\tilde{L}^{-1}) X_h = X_0\Psi (\tilde{L}^{-1}) \frac{c_{h+N}}{c_N}.
\end{eqnarray}

By applying Lemma 1 iii), we get the following: \vspace{1em}

\begin{Proposition} The tail process $(X_h)$ satisfies the deterministic recursion:

$$
\begin{array}{lcl}
\Phi (\tilde{L}) X_h & = & 0, \; \mbox{if}\; h+N \geq 1, \\ \\
\Psi (\tilde{L}^{-1}) X_h& =& 0, \; \mbox{if}\; h+N \leq -1.
\end{array}
$$
\end{Proposition}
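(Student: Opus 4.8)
The plan is to read the recursion off the explicit form of the tail process supplied by Proposition 1, combined with Lemma 1(iii); indeed the two displayed limits immediately preceding the statement have already brought the operators $\Phi(\tilde L)$ and $\Psi(\tilde L^{-1})$ to bear on $X_h$, so what remains is to identify the right-hand sides as shifted, annihilated coefficient sequences.

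First I would recall from Proposition 1 that $X_h = X_0\, c_{N+h}/c_N$, where $X_0$ and $N$ do not depend on $h$, and condition on $N=j$ for a fixed integer $j$ with $p_j>0$. Then $X_0/c_j$ is a fixed nonzero scalar and, as a function of the maturity $h$, $X_h = (X_0/c_j)\, c_{j+h}$ is simply a scalar multiple of the shifted sequence $(c_{j+h})_h$. This conditioning is what makes the recursion \emph{deterministic}: for each realization of $N$ the statement is an ordinary finite-difference identity in $h$.

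Next I would apply the finite-order operator $\Phi(\tilde L)=1-\phi_1\tilde L-\cdots-\phi_p\tilde L^p$, where $\tilde L$ acts on the maturity index via $\tilde L X_h = X_{h-1}$. Since $\Phi(\tilde L)$ is a finite linear combination of shifts and the prefactor does not depend on $h$, the scalar passes through:
$$
\Phi(\tilde L) X_h = \frac{X_0}{c_j}\,\Phi(\tilde L)\, c_{j+h} = \frac{X_0}{c_j}\,\big(\Phi(\tilde L) c\big)_{j+h}.
$$
The last equality is the only bookkeeping point: applying $\Phi(\tilde L)$ in $h$ to $c_{j+h}$ is the same as applying $\Phi(\tilde L)$ to the sequence $(c_m)$ and evaluating at $m=j+h$. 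Now I invoke Lemma 1(iii), which gives $\Phi(\tilde L)c_m=0$ for every $m\ge 1$; substituting $m=j+h=N+h$ yields $\Phi(\tilde L)X_h=0$ whenever $N+h\ge 1$. The noncausal identity is entirely symmetric: conditioning on $N=j$ and applying $\Psi(\tilde L^{-1})$ produces $\Psi(\tilde L^{-1})X_h=(X_0/c_j)\big(\Psi(\tilde L^{-1})c\big)_{j+h}$, which vanishes by the second half of Lemma 1(iii) as soon as $N+h\le -1$.

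I expect the only genuinely delicate point to be the justification that these finite-difference operators commute with the limiting operation, i.e.\ that $\Phi(\tilde L)(y_{t+h}/y_t)\to\Phi(\tilde L)X_h$ and likewise for $\Psi(\tilde L^{-1})$. This is precisely what the two displays above the statement assert, and it follows from the convergence of all finite-dimensional distributions in Proposition 1 together with the continuous mapping theorem, since $\Phi(\tilde L)$ and $\Psi(\tilde L^{-1})$ are continuous (indeed linear) functions of finitely many coordinates $X_h,X_{h-1},\ldots$. Everything else is the algebra of shifting indices and the substitution $m=N+h$ into Lemma 1(iii), the defining inequalities ``$N+h\ge 1$'' and ``$N+h\le -1$'' being read off realization-by-realization after conditioning on $N$.
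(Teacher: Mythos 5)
Your proof is correct and takes essentially the same route as the paper: the paper likewise obtains the recursion by applying $\Phi(\tilde L)$ and $\Psi(\tilde L^{-1})$ to the limit $X_h = X_0\, c_{N+h}/c_N$ (via the convergence of the standardized pure components $u_{t+h}/y_t$ and $v_{t+h}/y_t$ displayed just before the statement) and then invoking Lemma 1(iii) at the shifted index $N+h$. Your explicit conditioning on $N=j$ and the continuous-mapping remark merely spell out steps the paper leaves implicit.
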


Note that,  for given $N$,  these deterministic recursive equations do not depend on the distribution of the error $\epsilon_t$.

By eq. \eqref{ut} [resp.  eq.  \eqref{vt})], ${U}_h =\Phi (\tilde{L}) X_h$ (resp. ${V}_h =$ $\Psi (\tilde{L}^{-1}) X_h)$ can be interpreted as the pure noncausal (resp. pure causal) components of the tail process.  Note that the pure tail processes depend on the exogenous date $t$,  but also on the MAR($p,q$) process assumed to be well-specified (see Section 5.3 for a discussion of mis-specified pure tail processes).  We deduce the following corollary:

\begin{Corollary} The pure noncausal tail process ${U}_h$ is zero, if $h \geq 1-N$.

\hspace{1,5cm} The pure causal tail process  ${V}_h$ is zero, if $h \leq -1-N$.\end{Corollary}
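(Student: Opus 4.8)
The plan is to read the claim directly off Proposition 3, since the two pure tail processes were introduced precisely so that their vanishing is governed by that recursion. First I would recall the definitions $U_h = \Phi(\tilde L) X_h$ and $V_h = \Psi(\tilde L^{-1}) X_h$ stated just above the corollary. Proposition 3 asserts that $\Phi(\tilde L) X_h = 0$ whenever $h+N \geq 1$ and $\Psi(\tilde L^{-1}) X_h = 0$ whenever $h+N \leq -1$. Substituting the definitions, this reads $U_h = 0$ for $h+N \geq 1$ and $V_h = 0$ for $h+N \leq -1$.

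The only remaining step is to rewrite the index conditions. The inequality $h+N \geq 1$ is equivalent to $h \geq 1-N$, which gives the first assertion; likewise $h+N \leq -1$ is equivalent to $h \leq -1-N$, giving the second. This completes the argument.

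If instead one wants a derivation that does not route through Proposition 3, I would go back to Lemma 1. By part (i), $\Phi(\tilde L) c_h = b_h$, and by construction in \eqref{ut} the coefficients $b_h$ of the pure noncausal component are supported on $h \leq 0$, so $b_h = 0$ for $h \geq 1$. Evaluating at the shifted index $h+N$ then gives $U_h = X_0\, b_{h+N}/c_N = 0$ as soon as $h+N \geq 1$. The symmetric statement for $V_h$ uses Lemma 1(ii), namely $\Psi(\tilde L^{-1}) c_h = a_h$, together with the support $a_h = 0$ for $h \leq -1$ coming from \eqref{vt}.

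There is no genuine obstacle here: the corollary is essentially a cosmetic restatement of Proposition 3, and all the substantive content --- that the pure components annihilate the appropriate one-sided tail of $(c_h)$ --- was already established in Lemma 1(iii). The only point worth double-checking is the bookkeeping of the inequalities and the support conventions ($b_h = 0$ for $h \geq 1$ versus $a_h = 0$ for $h \leq -1$), so that the boundary index $h+N = 0$ is correctly left out of both vanishing regions.
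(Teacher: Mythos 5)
Your proposal is correct and matches the paper's own route exactly: the corollary is read off Proposition 3 via the definitions $U_h=\Phi(\tilde L)X_h$, $V_h=\Psi(\tilde L^{-1})X_h$ and the trivial index rewriting $h+N\geq 1 \Leftrightarrow h\geq 1-N$, $h+N\leq -1 \Leftrightarrow h\leq -1-N$. Your alternative derivation through Lemma 1 (writing $U_h=X_0\,b_{N+h}/c_N$ and using the support of $(b_h)$, $(a_h)$) is also sound, but it is not genuinely different, since it simply unpacks how the paper itself obtains Proposition 3 from Lemma 1 iii).
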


Let us consider a series of positive observations $y_t, t=1, \ldots, T$,  suppose that $y_T > y$, and consider the tail process associated with this date $T$. Then we have three regimes:

i) a pure causal regime, if $h \geq 1-N$.  Under this regime,  the right side of the tail process $(X_h)_{h \geq 1-N}$ is (deterministically) Markov of order $p$; 

ii) a pure noncausal regime, if $h \leq -1 - N$. Under this regime,  the left side of the tail process $(X_h)_{h \leq -1-N}$ is deterministically Markov of order $q$; 

iii) a mixture of the causal and noncausal regimes if $h =-N$ (that corresponds to the stochastic time index $T-N$ in the underlying calendar time).\vspace{1em}

The above switching regimes interpretation shows that it can be informative in practice to plot not only the trajectory of the observations $y_t$, but also:

 i) the trajectories of the pure causal and noncausal components $\hat{u}_t, \hat{v}_t$ (estimated by replacing the parameters by consistent estimates), and

 ii) at given date $T$ with $y_T > y$ large, the trajectories in maturity $h$ of $\hat{u}_{T+h}/y_T, \hat{v}_{T+h}/y_T, h$ varying, as a descriptive tool to detect the turning point (see Section 5).
 \subsection{One sided Tail Processes}

In some other cases,  it could also be interesting to investigate the tail process based on other decompositions of $y_t$ in terms of $u_t$,  $v_t$.

For a general MAR($p, q$),  the causal-noncausal decomposition of the MAR(1,1) process given in eq. \eqref{yt} can be extended to [see Gouri\'eroux, Jaskak (2016),  section 2.3]:
\begin{equation}
\label{pqvu}
y_t= L^q b_1(L) v_t+ b_2(L) u_t,
\end{equation}
where polynomials $b_1$,  $b_2$ are non zero,  and of degree non larger than $p-1$ and $q-1$,  respectively:
$$
b_1(L)=\sum_{i=0}^{q-1} b_{1,i}L^i, \qquad b_2(L)=\sum_{j=0}^{p-1} b_{2,j}L^j.
$$
These two polynomials result from the partial fraction decomposition:
$$
\displaystyle \frac{1}{\Phi(L) [L^p \Psi(L^{-1})]}= \frac{b_1(L)}{\Phi(L)}+ \frac{b_2(L)}{L^p \Psi(L^{-1})}.
$$
 This decomposition separates the two-sided moving average representation for $y_t$ into two terms,  with the first term $y_{1,t}:=L^q b_1(L) v_t=\sum_{h=1}^{\infty} c_h \epsilon_{t-h}$ containing only past $\epsilon$'s,  and the second term $y_{2,t}:=b_2(L) u_t=\sum_{h=-\infty}^{0} c_h \epsilon_{t-h}$ containing current and future $\epsilon$'s.  Then one can also define the tail processes of the pure one-sided causal process $(y_{1,t})$ and the pure one-sided noncausal process $(y_{2,t})$, respectively.  By Proposition 1,  we get:
\begin{align*}
&\mathcal{L}\Big( \frac{(y_{1,t+h})}{|y_{1,t}|} \mid |y_{1,t}|> y \Big) \rightarrow (X_{1,h}),
&\mathcal{L}\Big( \frac{(y_{2,t+h})}{|y_{2,t}|} \mid |y_{2,t}|>y \Big) \rightarrow (X_{2,h}),
\end{align*}
as $y$ increases to infinity.  We call them one-sided tail processes,  
since $i)$ they are the weak limits of one-sided infinite moving averages $(y_{1,t})$ and $(y_{2,t})$,  respectively; $ii)$, their distributions are such that: $X_{1,h}/X_0$ is deterministic for $h \leq 0$,  whereas $X_{2,h}/X_0$ is deterministic for $h > 0$ (see Sections 3.2.1-3.2.3 for similar properties for the tail process of a one-sided causal or noncausal AR process).  Thus,  both are one-sidedly deterministic.  The interpretation of this result is that $(X_{1,h})$ is ``easy" to predict in the reverse time direction (for $h \leq 0$),  whereas $(X_{2,h})$ is easy to predict in the calendar time direction.  

 
These one-sided tail process differ from the pure tail causal and noncausal components of Section 3.3. In particular, they have the property that the two one sided tail processes are independent.

Like $(X_h)$,  the one-sided tail processes $(X_{1,h})$ and $(X_{2,h})$ have their specific stochastic drifts $N^*_1>0$ and $N^*_2 \leq 0$,  which provide the index of the SBJ among past (resp. current and future) $\epsilon_t$'s,  respectively. Then  the distribution of $N^*_1$,  $N^*_2$ and $N$ are related through: 
$$
N=_{(d)} \eta N^*_1+ (1-\eta) N^*_2,
$$  
where $=_{(d)} $ denotes the equality in distribution,  $\eta$ is a Bernoulli variable with probability parameter $\frac{\sum_{h=1}^{\infty} |c_h|^{\alpha}}{\sum_{h=-\infty}^{\infty} |c_h|^{\alpha}}$ and is independent of $N^*_1$ and $N^*_2$. Indeed the SBJ among all $\epsilon$'s can correspond either to a past date,  or a future (including current) date,  as determined by the realization of the Bernoulli variable $\xi$.  Here,  we are using the fact that given $y_t=y_{1,t}+y_{2,t}$ is large,  exactly one among $y_{1,t}$ and $y_{2,t}$ is large,  and $\xi=1$ if and only if $y_{1,t}$ is large,  and $\xi=0$,  otherwise.  

Similarly,  we can write the ratio $y_{t+h}/y_t$ as:
$$
\frac{y_{t+h}}{y_t}= \frac{y_{1,t}}{y_{1,t}+y_{2,t}} \frac{y_{1,t+h}}{y_{1,t}}+  \frac{y_{2,t}}{y_{1,t}+y_{2,t}} \frac{y_{2,t+h}}{y_{2,t}}.
$$
and deduce:
\begin{equation}
	\label{Xh}
X_h=_{(d)} \eta X^*_{1,h}+ (1-\eta) X^*_{2,h},  \forall h \in \mathbb{Z},
\end{equation}
where $(X^*_{1,h})$ and $(X^*_{2,h})$ are independent copies of $(X_{1,h})$ and $(X_{2,h})$ and are mutually independent. 


This decomposition has an analog in terms of the causal and noncausal components of $(X_h)$.  Indeed,  from eq. \eqref{pqvu},  we deduce that:
\begin{equation}
\label{pqvux}
X_h= L^q b_1(\tilde{L}) V_h+b_2(\tilde{L}) U_h.  
\end{equation}


\subsection{The Turning Point($s$)}

Let us assume a nonnegative sequence $(c_{h})$, $h\in \mathbb{Z}$, with a
unique maximum at $h_{0}=\arg \max_{h}c_{h}$. Then the tail process $\left(
X_{h}\right) $ takes its maximum value when $N+h=h_{0}$, that is at $%
h_{N}=h_{0}-N$. This stochastic maturity provides the date $T+h_{N}$ of the
turning point of the bubble. Before this date, the process is locally in an
increasing phase and decreases after this date.

In the examples of MAR($1,0$), MAR($0,1$), MAR($1,1$), with positive $\mu $
and $\lambda $, we have $h_{0}=0$ and $h_{N}=-N$. However in other cases as
in a MAR($0,2$) with double root and resonance we can have $h_{0}\neq 0$.

In practice the stochastic maturity $h_N$ has to be predicted. Pointwise
predictions can be computed by considering either the mode of the
distribution of $h_{N}$,  or its expectation $\mathbb{E}[
h_{N}]=h_{0}-\mathbb{E}[N] $. This expectation is not equal to
the mode in general due to the asymmetric causal and noncausal dynamics.
Prediction intervals can also be deduced from the distribution of $N$ . Then these prediction intervals have to be estimated in practice by replacing $\theta$ by $\hat{\theta}_T$ in $h_0$ and in the distribution of $N$, and then  the estimation risk has to be taken into account.

When $p$ and/or $q$ are larger or equal to $2$, the sequence $c_{h}$ can
feature several local maxima, that are turning points, at different levels
due to the multiple roots. This can reveal cluster of bubbles with different rates of explosion. 

\subsection{Serial Dependence in a CEV Framework}

Example 1 of the pure causal AR(1) model shows that the serial dependence (in $h$) in
the CEV framework is very different from the unconditional serial dependence (in $h$)
of $y_{t+h}/|y_{t}|$. Moreover the forward serial dependence (for $h\geq 0$%
) and the backward serial dependence (for $h\leq 0$) can be very different, as seen in the MAR(1,1) case.\vspace{1em}

Let us consider a MAR($1,1$) process: From eq.  (3.6) the tail process $(Z_h)$ can be written:%

\begin{eqnarray*}
Z_{h} &=&\mu ^{-1}\mathbb{I}_{N\leq -h}+\lambda \mathbb{I}_{N\geq -h+1} \\
&=&\lambda +\left( \mu ^{-1}-\lambda \right) \mathbb{I}_{N\leq -h}.
\end{eqnarray*}%
We deduce that:

\begin{eqnarray*}
Cov[ Z_{h},Z_{k}] &=&\left( \mu ^{-1}-\lambda \right)
^{2}Cov[\mathbb{I}_{N\leq -h},\mathbb{I}_{N\leq -k}]\\
&=&\left( \mu ^{-1}-\lambda \right) ^{2}\left[ F_N\left( \min \left(
-h,-k\right) \right) -F_N\left( -h\right) F_N\left( -k\right) \right] ,
\end{eqnarray*}%
where $F_N$ is the c.d.f. of the distribution of $N$. Then the serial
correlation is:

\begin{eqnarray*}
\rho \left( Z_{h}, Z_{k}\right) &=&\frac{Cov[Z_{h},Z_{k}] }{\sqrt{%
\mathbb{V}[Z_{h}]\mathbb{V}[Z_{k}]}} \\
&=&\frac{F_N\left( \min \left( -h,-k\right) \right) -F_N\left( -h\right) F_N\left(
-k\right) }{\left( F_N\left( -h\right) \left( 1-F_N\left( -h\right) \right)
\right) ^{1/2}\left( F_N\left( -k\right) \left( 1-F_N\left( -k\right) \right)
\right) ^{1/2}}.
\end{eqnarray*}

Whereas the underlying process $y_{t+h}/y_{t+h-1}$, $h$ varying, is
stationary, we observe that conditioning by a large value $y_{t}>y$ destroys
the stationarity of the tail process, since $\gamma \left( h,k\right) $ no longer depends on $\left( h,k\right) $ by $h-k$ only. This is largely due to the constraint $X_0=1$, whereas the other $X_h$ values, $h \neq 0$, are not equal to 1.

\section{Alternative Extreme Conditioning}

The standard Conditional Extreme Value Theory (CEVT) applied to time series has mainly considered the conditioning  set $|y_{t}|>|y|$, as in Propositions 1 and 2. However, other conditioning by large values can be considered, such as $%
y_{t}=y $, $y$ large, or $y_{t}>y,y_{t-1}>y$, or $y_{t}>y_{t-1}>y$. In
particular the literature on MAR(1,1) processes with stable distributed errors
has derived closed form expressions of some power moments of $y_{t+h}$ given
$y_{t}=y$, up to power 4 [Fries and Zako\"ian (2019), Fries (2021), Section 3].

It is important to check if a tail process approach can still be used with alternative conditioning sets and to discuss the updating of
predictive distributions with respect to time and with respect to the form
of the extreme conditioning set.

\subsection{Analysis for the Conditioning Set $y_{t}=y$, $y$ large}
Regular variation of the tail of a (multivariate) distribution is an essential tool for describing domains of attraction of multivariate extremes (De Haan and Resnick, 1987, p83).  However, the analysis can be done based on assumptions on either the survival function (corresponding to the conditioning $y_t>y$), or the density function (corresponding to the conditioning $y_t=y$).  The results are often similar,  but may require different assumptions on the (multivariate) distributions.   

In this subsection,  we show that most of the results,  such as Propositions 1 and 2,  remain valid, if the process is positive and we replace $y_t>y $,  $y$ large,  by $y_t=y$,  $y$ large.  The analysis is based on the closed form expression of the transition probability density function (p.d.f.) for a MAR\ process,  as well as the following lemma, which is key to most weak convergence results of the paper.
\subsubsection{The Single Big Jump (SBJ) Principle}

\begin{lm}[Single Big Jump]
	\label{technicallemma}
Assume two independent variables $Z_1$, $Z_2$ whose p.d.f.'s have polynomial (i.e. Paretian) decaying right tails \footnote{Under some mild regularity conditions (such as the Ultimate Monotonicity of the Density (UMD), see the Monotone Density Theorem [Bingham,  Goldie,   Teugels (1989,  Theorem 1.7.2)], 
 the regular variation of the survivor function of $\epsilon$, that is,  $\mathbb{P}[\epsilon>u]=\frac{l_2(u)}{u^{\alpha}}$ for some slowly varying function $l_2(\cdot)$ implies  the regular variation of the corresponding density. } 
$f_{i}(z)=z^{-\alpha-1}l_i(z), i=1,2$,  where $l_1, l_2$ are slowly varying functions at $+\infty$\footnote{That is,  for any given positive $r$,  the ratio $\frac{l_i(rz)}{l_i(z)}$ converges to 1 as $z$ increases to infinity.  }, if moreover they have equivalent p.d.f.'s:
\begin{equation}
		\label{tailequivalence}
		\lim_{z\to +\infty}\frac{f_{1}(z)}{f_{2}(z)}= \xi>0,
		\end{equation} 
 then the conditional distribution of 
 \begin{equation}
 \label{conditionalequality}
 R=\frac{Z_1}{Z_1+Z_2}\mid S=Z_1+Z_2=s
\end{equation} 
  converges weakly (i.e., in distribution) to the Bernoulli distribution with success parameter $p=\frac{\xi}{\xi +1}$,  that is independent of $\alpha$, as $s$ goes to $+\infty$.  
\end{lm}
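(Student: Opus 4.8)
The plan is to work directly with the conditional density of $R$ given $S=s$, obtained from the change of variables $(Z_1,Z_2)\mapsto(R,S)=\big(Z_1/(Z_1+Z_2),\,Z_1+Z_2\big)$, whose Jacobian has absolute value $s$. Since $Z_1=Rs$ and $Z_2=(1-R)s$ on $\{S=s\}$, independence gives
\begin{equation*}
f_{R\mid S}(r\mid s)=\frac{s\,f_1(rs)\,f_2((1-r)s)}{f_S(s)},\qquad r\in(0,1),
\end{equation*}
where $f_S$ is the density of the sum. The whole argument then reduces to controlling numerator and denominator as $s\to\infty$ and showing that the conditional mass escapes to the two endpoints $r=0$ and $r=1$ with the asserted weights.

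First I would pin down the denominator. Writing the convolution $f_S(s)=\int_0^s f_1(u)f_2(s-u)\,du$, splitting at $u=s/2$, and using on each half that slow variation of $l_1,l_2$ gives $f_i(s-u)\sim f_i(s)$ for $u$ in compact sets, one obtains the density version of the single big jump,
\begin{equation*}
f_S(s)\sim f_1(s)+f_2(s)\sim(1+\xi)\,f_2(s),\qquad s\to\infty,
\end{equation*}
the last step invoking the tail equivalence \eqref{tailequivalence}. Next, for any fixed $r\in(0,1)$ the regular variation of the $f_i$ gives $s\,f_1(rs)f_2((1-r)s)\asymp s^{-2\alpha-1}$ up to slowly varying factors, whence $f_{R\mid S}(r\mid s)\asymp s^{-\alpha}\to0$. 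Thus the open interior carries no mass in the limit and all mass must accumulate at $\{0,1\}$.

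It remains to compute the two endpoint masses. For the mass near $r=1$, I would substitute $z_2=(1-r)s$ (so $dr=dz_2/s$) to get, for fixed small $\epsilon>0$,
\begin{equation*}
\mathbb{P}[R>1-\epsilon\mid S=s]=\int_0^{\epsilon s}\frac{f_1(s-z_2)\,f_2(z_2)}{f_S(s)}\,dz_2.
\end{equation*}
Since $f_1(s-z_2)/f_1(s)\to1$ pointwise in $z_2$ by slow variation and $\int_0^\infty f_2=1$, this integral tends to $f_1(s)/f_S(s)\to\xi/(1+\xi)=p$. The symmetric substitution $z_1=rs$ near $r=0$ yields mass $f_2(s)/f_S(s)\to1/(1+\xi)=1-p$. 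Combined with the vanishing interior mass, $f_{R\mid S}(\cdot\mid s)$ converges weakly to the two-point law placing mass $p$ at $1$ and $1-p$ at $0$, i.e.\ the Bernoulli distribution with success parameter $p=\xi/(\xi+1)$, which is manifestly free of $\alpha$.

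The main obstacle is justifying the two limit interchanges, namely replacing $f_1(s-z_2)$ by $f_1(s)$ inside the integral over the growing domain $(0,\epsilon s)$. The hard part will be producing an $s$-uniform integrable dominating function: on $z_2\in(0,s/2)$ the Potter bounds for the slowly varying $l_1$ control $f_1(s-z_2)/f_1(s)$ by a constant multiple of $(1-z_2/s)^{-\alpha-1-\delta}\le C$, so $f_2(z_2)$ supplies integrability and dominated convergence applies; on $z_2\in(s/2,\epsilon s)$ both $s-z_2$ and $z_2$ are of order $s$, so that contribution is $O\big(s\cdot s^{-2\alpha-2}/f_S(s)\big)=O(s^{-\alpha})\to0$ and is discarded. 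The regularity needed to pass from tails of survival functions to tails of densities (the ultimate monotonicity and Monotone Density Theorem invoked in the footnote) enters precisely here, guaranteeing the Potter-type control on the $f_i$ themselves rather than merely on the $\overline{F}_i$.
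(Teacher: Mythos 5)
Your proof is correct and follows essentially the same route as the paper's Online Appendix B.1: the change of variables $(Z_1,Z_2)\mapsto (R,S)$ with conditional density $s\,f_1(rs)\,f_2\bigl((1-r)s\bigr)/f_S(s)$, vanishing of the interior mass, and concentration at the endpoints $r=0,1$, the only differences being technical (you prove the denominator asymptotics $f_S(s)\sim f_1(s)+f_2(s)\sim(1+\xi)f_2(s)$ explicitly by splitting the convolution, a fact the paper uses only implicitly, and you justify the limit interchanges via Potter bounds and dominated convergence where the paper cites the uniform convergence theorem for regularly varying functions). One point in your favour: your endpoint weights, mass $\xi/(1+\xi)$ at $r=1$ and $1/(1+\xi)$ at $r=0$, agree with the lemma statement and with the paper's sketch in B.1.1, whereas the paper's formal write-up in B.1.2 transposes the constants $\xi/(1+\xi)$ and $1/(1+\xi)$ (in its bounds on $f_2\bigl(s(1-r)\bigr)/f(s)$ and $f_1(rs)/f(s)$ and hence in its final limit), which is a typo in the paper rather than a flaw in your argument.
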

Note that the conditional distribution of
 \begin{equation}
 \label{conditionalinequality}
 R=\frac{Z_1}{Z_1+Z_2}\mid S=Z_1+Z_2=s,
\end{equation}
 also converges weakly to the same limit as $s$ increases to infinity.  Indeed,  by integrating out the conditional distribution of $S$ conditional on $S>s$,  we get a link between the conditional distributions \eqref{conditionalequality} and \eqref{conditionalinequality}:
$$
\ell(R| S>s)=\int_{s}^\infty \ell(R|S=s) \ell(u|S>s) \mathrm{d} u.
$$
where $\ell(u|S>s)$ is the conditional density of $S$ given that $S>s$.  Thus, if on the right hand side (RHS), $\ell(R|S=s)$ converges to a limit that does not depend on $s$, then on the left hand side $\ell(R| S>s)$ also converges to the same limit\footnote{Assuming that the weak convergence of $\ell(R|S=s)$ is ``uniform". }, but not the other way around. 
 
 
 Lemma \ref{technicallemma} was first proved in the literature by Lehtomaa (2015), under rather restrictive assumptions (i.i.d.  positive variables with log concave density).  In the online  appendix B.1, we provide a more general proof of Lemma 2.  We also provide its extension to more than two variables in Online Appendix B.3.
 
In a time series context,  important examples of random variables with equivalent p.d.f.'s are finite moving averages of i.i.d. errors,  such as $\epsilon_t+ \psi \epsilon_{t+1}$,  with an i.i.d. error term $(\epsilon_t)$ that has Paretian tail.  Then we have [Bingham, Goldie and Omey (2006)]:
$$
\lim_{z \to +\infty } \frac{f_{\epsilon_t+ \psi \epsilon_{t+1}}(z)}{f_{\epsilon}(z)} =
1+\psi^{\alpha}.
$$
The extension of this result to the case of infinite moving average is also possible, but requires some technical conditions. We have:
\begin{lm}
\label{ultimatemonotone}
Let us assume that the density of $\epsilon_t$ is regularly varying at $+\infty$ and that the p.d.f.'s of $\epsilon_t$ and $u_t=\epsilon_t+ \psi \epsilon_{t+1} +\psi^2 \epsilon_{t+2}+\cdots$ satisfy the UMD condition: 
$$
f_u(z)/v_1(z) \rightarrow 1, \qquad f_{\epsilon}(z)/v_2(z) \rightarrow 1,
$$
as $z$ increases to infinity, where $v_1, v_2$ are monotone on a certain interval $(M, \infty)$, 
then we have:
$$
\lim_{z \to +\infty } \frac{f_{u}(z)}{f_{\epsilon}(z)} =
1+\psi^{\alpha}+\psi^{2\alpha}+\cdots =\frac{1}{1-\psi^{\alpha}},  \text{ if } \psi \in (0,1) . 
$$
\end{lm}
\begin{proof}
	See Online Appendix B.2.
\end{proof}
A similar property has been established by Cline (1983) for the ratio of the survival functions of $u_t$ and $\epsilon_t$. 
Intuitively, the UMD condition is needed since a survivor function is ``more regular" than its derivative, i.e. the density.

As an illustration, if $\epsilon_t$ follows Cauchy $\mathcal{C}(0,1)$ with scale parameter $\alpha=1$ and location parameter 0,  then $u_t$ defined in eq. \eqref{ut} is also Cauchy distributed with scale parameter $\frac{1}{1-\psi}$. In this case, the density ratio $\frac{f_u(z)}{f_{\epsilon}(z)}$  has a closed form and we can calculate directly the limiting ratio of the p.d.f.'s: 
$$
\lim_{z\to +\infty}\frac{f_u(z)}{f_{\epsilon}(z)} =\lim_{z\to +\infty}(1-\psi)\frac{1+z^2}{1+(1-\psi)^2z^2}=\frac{1}{1-\psi},
$$
Hence, we recover the formula in Lemma 2 with $\alpha=1$.

\subsubsection{Extremal behaviour of the MAR(1,1)}
Let us first consider the case of a positive stationary MAR$(1,1)$ process to understand the arguments of the proof before discussing the more general framework. We assume that:
\begin{equation*}
\left( 1-\phi L\right) \left( 1-\psi L^{-1}\right) y_{t}=\epsilon _{t},
\end{equation*}%
with $\phi,  \psi \in (0,1)$.\footnote{These restrictions on $\phi$ and $\psi$ are implied by the positivity and stationarity of the process. }

We have both:
\begin{align}
\label{yt} y_t&=\frac{1}{1-\phi \psi} (v_t+\psi u_{t+1}), \\
\label{ytplus1}y_{t+1}&=\frac{1}{1-\phi \psi} (\phi v_t+  u_{t+1}),
\end{align}
where $v_{t+1}=\phi v_t+\epsilon_{t+1}=y_t-\psi y_{t+1}$ is pure causal and $u_t=\psi u_{t+1}+ \epsilon_{t}=y_t-\phi y_{t-1}$ is pure noncausal. 

Because $\frac{1}{1-\phi \psi}v_t$ and $\frac{\psi}{1-\phi \psi} u_{t+1}$ are independent and have equivalent tails,  the limiting ratio of their two p.d.f.'s is: $\xi= \frac{\frac{1}{1-\phi^{\alpha}}}{\frac{\psi^{\alpha}}{1-\psi^{\alpha}}}$.  By applying Lemma 2, $\frac{1}{1-\phi \psi}\frac{v_t}{y_t}$ (resp. $\frac{\psi u_{t+1}}{1-\phi\psi}\frac{1}{y_t}$) converges to the Bernoulli distribution with probability parameter $\frac{\xi}{\xi+1}=\frac{1-\psi^{\alpha}}{1-\phi^{\alpha}\psi^{\alpha}}$ (resp. $\frac{1}{\xi+1}$).   

From eqs. \eqref{yt} and \eqref{ytplus1}, we get:
\begin{equation}
\label{keyprediction}
\frac{y_{t+1}}{y_t} =\phi+\frac{\psi u_{t+1}}{1-\phi\psi}\frac{1}{y_t},
\end{equation}
and we conclude that:
\begin{Proposition}
	\label{equaly}
	For the MAR(1,1) process, the conditional distribution of $r_{t+1}=y_{t+1}/y_{t}$ given $y_t=y$ converges to the discrete variable with masses at $\psi^{-1}$ and $\phi$,  with weights $\frac{1}{\xi+1}=\frac{\psi^{\alpha}-\phi^{\alpha}\psi^{\alpha}}{1-\phi^{\alpha}\psi^{\alpha}}$ and $\frac{\xi}{\xi+1}=\frac{1-\psi^{\alpha}}{1-\phi^{\alpha}\psi^{\alpha}}$,  respectively.   
\end{Proposition}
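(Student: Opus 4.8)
The plan is to reduce the statement to a single application of the Single Big Jump principle (Lemma~\ref{technicallemma}) through the causal/noncausal split already recorded in \eqref{yt}--\eqref{ytplus1}. I would set $Z_1=\frac{1}{1-\phi\psi}v_t$ and $Z_2=\frac{\psi}{1-\phi\psi}u_{t+1}$, so that $y_t=Z_1+Z_2$. Since $v_t=\sum_{h\ge 0}\phi^h\epsilon_{t-h}$ depends only on $\{\epsilon_s:s\le t\}$ while $u_{t+1}=\sum_{k\ge 0}\psi^k\epsilon_{t+1+k}$ depends only on $\{\epsilon_s:s\ge t+1\}$, the variables $Z_1$ and $Z_2$ are independent; both are positive under the standing positivity assumptions on $(\epsilon_t)$ and on the roots.

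First I would rewrite the target ratio purely in terms of $R:=Z_1/(Z_1+Z_2)=Z_1/y_t$. Substituting $v_t=(1-\phi\psi)y_tR$ and $\psi u_{t+1}=(1-\phi\psi)y_t(1-R)$ into $r_{t+1}=\frac{\phi v_t+u_{t+1}}{v_t+\psi u_{t+1}}$ and cancelling the common factor $(1-\phi\psi)y_t$ gives the convex combination
\begin{equation*}
r_{t+1}=\phi R+\psi^{-1}(1-R).
\end{equation*}
The crucial feature is that this affine map is \emph{independent of the conditioning level} $y$: it sends $R=1$ (causal part dominates) to $\phi$ and $R=0$ (noncausal part dominates) to $\psi^{-1}$. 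Hence, once the conditional law of $R$ given $y_t=y$ is identified, the conditional law of $r_{t+1}$ follows by the continuous mapping theorem.

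Next I would verify the hypotheses of Lemma~\ref{technicallemma} for $(Z_1,Z_2)$ and compute $\xi$. By Lemma~\ref{ultimatemonotone}, the densities of $v_t$ and $u_{t+1}$ are tail-equivalent to $f_\epsilon$ with constants $\tfrac{1}{1-\phi^\alpha}$ and $\tfrac{1}{1-\psi^\alpha}$, respectively. Rescaling by a positive constant $a$ multiplies a density of regular-variation index $-\alpha-1$ by $a^\alpha$ in the limit, so $f_{Z_1}(z)\sim(1-\phi\psi)^{-\alpha}(1-\phi^\alpha)^{-1}f_\epsilon(z)$ and $f_{Z_2}(z)\sim\psi^\alpha(1-\phi\psi)^{-\alpha}(1-\psi^\alpha)^{-1}f_\epsilon(z)$, whence
\begin{equation*}
\xi=\lim_{z\to\infty}\frac{f_{Z_1}(z)}{f_{Z_2}(z)}=\frac{1-\psi^\alpha}{\psi^\alpha(1-\phi^\alpha)},
\end{equation*}
in agreement with the value quoted before the statement. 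Applying Lemma~\ref{technicallemma} to the independent, tail-equivalent pair $(Z_1,Z_2)$ with $S=Z_1+Z_2=y_t$ then shows that, conditional on $y_t=y$, the ratio $R$ converges weakly as $y\to\infty$ to $\mathrm{Bernoulli}(p)$ with $p=\frac{\xi}{\xi+1}=\frac{1-\psi^\alpha}{1-\phi^\alpha\psi^\alpha}$.

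Finally I would push this through the affine map. Since $R\Rightarrow\mathrm{Bernoulli}(p)$ conditionally and $r_{t+1}=\phi R+\psi^{-1}(1-R)$ is continuous in $R$ and free of $y$, the continuous mapping theorem yields that the conditional law of $r_{t+1}$ given $y_t=y$ converges to the two-point law placing mass $p=\frac{1-\psi^\alpha}{1-\phi^\alpha\psi^\alpha}$ at $\phi$ and mass $1-p=\frac{\psi^\alpha-\phi^\alpha\psi^\alpha}{1-\phi^\alpha\psi^\alpha}$ at $\psi^{-1}$, which is the claim. The main obstacle is the tail-equivalence step: because $v_t$ and $u_{t+1}$ are infinite, not finite, moving averages, the constants $\tfrac{1}{1-\phi^\alpha}$ and $\tfrac{1}{1-\psi^\alpha}$ are only available under the ultimate-monotonicity regularity conditions of Lemma~\ref{ultimatemonotone}, and one must take care that the affine rescalings preserve \emph{density} tail-equivalence (not merely survival-function equivalence), so that the density-level hypothesis \eqref{tailequivalence} of Lemma~\ref{technicallemma} genuinely holds.
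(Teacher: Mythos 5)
Your proof is correct and follows essentially the same route as the paper: the causal/noncausal split $y_t=Z_1+Z_2$ with $Z_1=\frac{1}{1-\phi\psi}v_t$ and $Z_2=\frac{\psi}{1-\phi\psi}u_{t+1}$, an application of Lemma \ref{technicallemma} to this independent, density-tail-equivalent pair (with the same $\xi=\frac{1-\psi^{\alpha}}{\psi^{\alpha}(1-\phi^{\alpha})}$, justified via Lemma \ref{ultimatemonotone}), and the resulting Bernoulli limit for $R=Z_1/y_t$. In fact, your exact affine identity $r_{t+1}=\phi R+\psi^{-1}(1-R)$ is a cleaner and strictly correct version of the paper's eq.\ \eqref{keyprediction}, which as printed contains a typo (the exact relation is $y_{t+1}=\phi y_t+u_{t+1}$, i.e.\ $r_{t+1}=\phi+\frac{1-\phi\psi}{\psi}\frac{Z_2}{y_t}$, not $\phi+\frac{Z_2}{y_t}$), and your version delivers the mass locations $\phi$ and $\psi^{-1}$ directly via the continuous mapping theorem.
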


This result is the analog of Proposition 2 for $h=1$,  with conditioning set $y_t=y$ instead of $| y_t| >y$.  More precisely,  Proposition 2 says that given $|y_t| >y$,  the conditional distribution of $r_{t+1}$ converges to that of $Z_1=c_{N+1}/c_N$,  whose expression is given by eq. (3.6).  In particular,  we have:
$$
\mathbb{P}[Z_h=\lambda]=\mathbb{P}[N \geq 0] = \frac{\frac{1}{1-\lambda^{\alpha}}}{\frac{1}{1-\lambda^{\alpha}}+ \frac{1}{1-\mu^{\alpha}}-1}=\frac{1-\psi^{\alpha}}{1-\phi^{\alpha}\psi^{\alpha}},
$$
where $\psi=\mu, \phi=\lambda$.  

 The fact that we find the same limiting conditional distributions for $Z_1$ with the two different conditioning sets $y_t>y$ and $y_t=y$ is not surprising.  Indeed,  by integrating out $y$,  we get that the conditional distribution of $y_{t+1}/y_{t}$ given $y_t>y$ converges to the same limiting discrete distribution.  
 
 It is also straightforward to extend this result to the joint distribution of $(y_{t+h}/y_t), $ $h$ varying,  given $y_t=y$,  which will converge to the same limiting distribution as in Proposition 2.  For instance,  if we focus on the prediction for the next two periods,  that is,  $(y_{t+1},y_{t+2})/y_t$,  we get (see Online Appendix B.3):
 \begin{Proposition}
 \label{h12}
 	For the MAR(1,1) process, the conditional distribution of $(y_{t+1}/y_t,y_{t+2}/y_t)$ given $y_t=y$ converges to the discrete variable with masses at $(\phi,  \phi^2)$, $(\psi^{-1}, \frac{\phi}{\psi})$ and $(\psi^{-1}, \frac{\phi}{\psi})$,  with weights $\frac{\frac{1}{1-\phi^{\alpha}}}{\frac{1}{1-\phi^{\alpha}}+\frac{\psi^{\alpha}\phi^{\alpha}}{1-\psi^{\alpha}}+1}$,  $\frac{\frac{\psi^{\alpha}\phi^{\alpha}}{1-\psi^{\alpha}}}{\frac{1}{1-\phi^{\alpha}}+\frac{\psi^{\alpha}\phi^{\alpha}}{1-\psi^{\alpha}}+1}$ and $\frac{1}{\frac{1}{1-\phi^{\alpha}}+\frac{\psi^{\alpha}\phi^{\alpha}}{1-\psi^{\alpha}}+1}$,  respectively.   
 \end{Proposition}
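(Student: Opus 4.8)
The plan is to reduce the two--step statement to a single application of the multivariate Single Big Jump (SBJ) principle, in the same spirit in which the one--step Proposition \ref{equaly} rests on the two--term splitting of $y_t$ in \eqref{yt}. First I would refine that splitting by peeling off one more error from the noncausal part: using $u_{t+1}=\epsilon_{t+1}+\psi u_{t+2}$ together with the causal recursion for $v$, and equivalently by regrouping the moving--average representation $y_{t+j}=\sum_k c_{k+j}\epsilon_{t-k}$ according to the position $k$ of the error, equations \eqref{yt}--\eqref{ytplus1} turn into the three--term representation
\begin{align*}
y_t&=\tfrac{1}{1-\phi\psi}\bigl(v_t+\psi\,\epsilon_{t+1}+\psi^{2} u_{t+2}\bigr),\\
y_{t+1}&=\tfrac{1}{1-\phi\psi}\bigl(\phi\,v_t+\epsilon_{t+1}+\psi\,u_{t+2}\bigr),\\
y_{t+2}&=\tfrac{1}{1-\phi\psi}\bigl(\phi^{2}v_t+\phi\,\epsilon_{t+1}+u_{t+2}\bigr),
\end{align*}
in which the three summands $v_t$, $\epsilon_{t+1}$ and $u_{t+2}$ are mutually independent, being measurable with respect to $\{\epsilon_s:s\le t\}$, $\{\epsilon_{t+1}\}$ and $\{\epsilon_s:s\ge t+2\}$ respectively, and each has a Paretian density of index $\alpha$.

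Next I would attach to each contribution of $y_t$ its tail--density weight. By Lemma \ref{ultimatemonotone}, $f_{v_t}/f_{\epsilon}\to(1-\phi^{\alpha})^{-1}$ and $f_{u_{t+2}}/f_{\epsilon}\to(1-\psi^{\alpha})^{-1}$, while $\epsilon_{t+1}$ carries weight $1$. Multiplying by the $\alpha$--th power of the coefficients $\tfrac{1}{1-\phi\psi},\tfrac{\psi}{1-\phi\psi},\tfrac{\psi^{2}}{1-\phi\psi}$ and dropping the common factor $(1-\phi\psi)^{-\alpha}$, the three summands of $y_t$ receive relative weights $w_v=\tfrac{1}{1-\phi^{\alpha}}$, $w_{\epsilon}=\psi^{\alpha}$ and $w_u=\tfrac{\psi^{2\alpha}}{1-\psi^{\alpha}}$. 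Conditioning on $y_t=y$ and letting $y\to\infty$, the three--variable density version of Lemma \ref{technicallemma} (Online Appendix B.3) guarantees that asymptotically exactly one summand is of order $y$ while the other two are $o(y)$, the three alternatives occurring with probabilities $w_v,w_{\epsilon},w_u$ renormalized by their sum $w_v+w_{\epsilon}+w_u=\tfrac{1}{1-\phi^{\alpha}}+\tfrac{1}{1-\psi^{\alpha}}-1$.

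It then suffices to read $(y_{t+1}/y_t,y_{t+2}/y_t)$ off the representation in each regime, the subdominant terms disappearing from the ratios. If $v_t$ dominates (the epicenter sits at a current or past date) the three lines collapse to $(\phi,\phi^{2})$; if $\epsilon_{t+1}$ dominates (epicenter at $t+1$) to $(\psi^{-1},\phi/\psi)$; and if $u_{t+2}$ dominates (epicenter at $t+2$ or earlier) to $(\psi^{-1},\psi^{-2})$. Normalizing $w_v,w_{\epsilon},w_u$ by their sum yields the three masses; they coincide with $\mathbb{P}[N\ge 0]$, $\mathbb{P}[N=-1]$ and $\mathbb{P}[N\le -2]$ for the stochastic drift $N$ of Proposition 1, and merging the last two regimes recovers the one--step marginal of Proposition \ref{equaly}.

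The delicate point is the rigorous density--conditioning SBJ for three independent infinite moving averages. Two issues demand care. First, since we condition on $\{y_t=y\}$ rather than $\{y_t>y\}$, the tail equivalences for $v_t$ and $u_{t+2}$ are statements about densities, so they rely on the ultimate--monotonicity hypothesis of Lemma \ref{ultimatemonotone} rather than on the cruder survival--function asymptotics. Second, I must upgrade the scalar statement ``one summand dominates'' to joint convergence of the pair $(y_{t+1}/y_t,y_{t+2}/y_t)$, that is, show that on the event that a prescribed summand has order $y$ the other two are simultaneously negligible in both coordinates. I would obtain this from the closed--form transition density of the MAR$(1,1)$ process: writing the joint density of $(y_t,y_{t+1},y_{t+2})$, conditioning on $y_t=y$, and passing to the limit term by term as in the proof of Proposition \ref{equaly}, which also explains why the same limit is reached under the conditioning $y_t>y$.
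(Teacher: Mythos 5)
Your proposal is correct and follows essentially the same route as the paper's own proof, which (in the online appendix) consists precisely of writing $y_t,\,y_{t+1},\,y_{t+2}$ in terms of the three independent summands $v_t$, $\epsilon_{t+1}$, $u_{t+2}$ and invoking Lemma $2'$; your write-up just makes the weight computation explicit. Three remarks. First, your decomposition corrects a slip in the paper's displayed equations, which write $\phi^{2}u_{t+2}$ where $\psi^{2}u_{t+2}$ is required. Second, your point masses and weights, namely $(\phi,\phi^{2})$, $(\psi^{-1},\phi/\psi)$, $(\psi^{-1},\psi^{-2})$ with weights proportional to $\tfrac{1}{1-\phi^{\alpha}}$, $\psi^{\alpha}$, $\tfrac{\psi^{2\alpha}}{1-\psi^{\alpha}}$ and normalizing constant $\tfrac{1}{1-\phi^{\alpha}}+\tfrac{1}{1-\psi^{\alpha}}-1$, are the internally consistent ones: they coincide with $\mathbb{P}[N\geq 0]$, $\mathbb{P}[N=-1]$, $\mathbb{P}[N\leq -2]$ computed from the drift distribution of Section 3.2.4, and merging the two masses with first coordinate $\psi^{-1}$ recovers the marginal of Proposition \ref{equaly}; the statement as printed (with the duplicated mass $(\psi^{-1},\phi/\psi)$ and weights involving $\tfrac{\psi^{\alpha}\phi^{\alpha}}{1-\psi^{\alpha}}$) fails both checks, so the discrepancy between your conclusion and the printed proposition reflects typos in the latter, not an error in your argument. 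Third, the extra caution in your final paragraph is unnecessary: Lemma $2'$ already delivers weak convergence of the full share vector $\bigl(Z_1,Z_2,Z_3\bigr)/y_t$ given $y_t=y$, and $\bigl(y_{t+1}/y_t,\,y_{t+2}/y_t\bigr)$ is an exact linear function of that vector, so the continuous mapping theorem yields the joint limit without any appeal to the closed-form transition density.
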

Asymptotically,  the two ratios $y_{t+1}/y_t$ and $y_{t+2}/y_{t+1}=\frac{y_{t+2}/y_t}{y_{t+1}/y_t}$ are independent,  with the same limiting distribution with masses at $\phi$ and $\psi^{-1}$,  which is exactly the limiting distribution obtained in Proposition 2.

 \subsubsection{Extremal behaviour of the MAR ($p,q$)}
 
 The general MAR($p,q$) framework can be analyzed in a similar way. We can compute in closed form the transition of the process, and in particular show that this process is Markov of order $p+q$. Then a careful analysis of the behaviour of the distribution of the process conditional on $y_t = y$, with $y$ large, shows a convergence to a tail process with discrete values functions of the moving-average coefficients $c_h$. Let us first recall the expressions of the transition p.d.f.  of a MAR($p,q$) process:
 
 \begin{Proposition}[Fries and Zako\"ian (2019), Proposition 3.1]
\label{markovpq} 
  i)  A MAR$(p,q)$ process is a Markov process of order $p+q$. 
  
 ii).  The transition density of the MAR$(p,q)$ process $(y_t)$ is:
 	\begin{eqnarray*}
 	 l(y_{t+1} | \underline{y_t}) = f_{\epsilon} (\Psi (L^{-1}) \Phi (L) y_{t+1-q}) \times \frac{f_{(u_{t+1}, \ldots, u_{t+1-q})} (\Phi (L) y_{t+1}, \ldots, \Phi (L) y_{t+2-q})}{f_{(u_{t},\ldots, u_{t+1-q})} (\Phi (L) y_t, \ldots, \Phi (L) y_{t+1-q})},
 	\end{eqnarray*}
 where $f_{(u_t,\ldots, u_{t+1-q})}$ denotes the joint density of the pure causal component.
 	\end{Proposition}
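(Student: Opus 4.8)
The plan is to reduce the transition density of $(y_t)$ to that of its pure (non\-causal) component $u_t=\Phi(L)y_t$, for which a clean \emph{reverse-time} factorization is available, and to read off the Markov order from that reduction. First I would record the two directions of invertibility of $\Phi$. Since the roots of $\Phi$ lie strictly outside the unit circle, $1/\Phi(L)=\sum_{h\ge 0}a_hL^h$ is causal, so $y_s=\sum_{h\ge 0}a_h u_{s-h}$ is a function of $(u_r)_{r\le s}$ only; conversely $u_s=y_s-\phi_1 y_{s-1}-\cdots-\phi_p y_{s-p}$ is a function of $(y_r)_{r\le s}$. Hence the past $\sigma$-fields coincide, $\sigma(y_r:r\le t)=\sigma(u_r:r\le t)=:\mathcal{F}_t$. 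Writing $y_{t+1}=\phi_1 y_t+\cdots+\phi_p y_{t+1-p}+u_{t+1}$, the first $p$ terms are $\mathcal{F}_t$-measurable, so conditionally on $\mathcal{F}_t$ the map $y_{t+1}\mapsto u_{t+1}$ is a translation by an $\mathcal{F}_t$-measurable quantity, with unit Jacobian. Therefore $l(y_{t+1}\mid\underline{y_t})=l(u_{t+1}\mid u_r,\,r\le t)$, and the whole problem becomes computing the forward transition of the noncausal AR($q$) process $(u_t)$.

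Next I would exploit the local relation $\epsilon_s=\Psi(L^{-1})u_s=u_s-\psi_1 u_{s+1}-\cdots-\psi_q u_{s+q}$, which shows that in reversed time $(u_t)$ is a \emph{causal} AR($q$) driven by the i.i.d.\ sequence $(\epsilon_s)$. Consequently any finite window admits a reverse-time factorization into a $q$-dimensional boundary block and reverse transitions $l_{\mathrm{rev}}(u_s\mid u_{s+1},\dots,u_{s+q})=f_\epsilon(\epsilon_s)$:
\[
f(u_{t+1-K},\ldots,u_{t+1})=f_{(u_{t+2-q},\ldots,u_{t+1})}\prod_{s=t+1-K}^{t+1-q}f_\epsilon(\epsilon_s),\qquad
f(u_{t+1-K},\ldots,u_{t})=f_{(u_{t+1-q},\ldots,u_{t})}\prod_{s=t+1-K}^{t-q}f_\epsilon(\epsilon_s).
\]

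The third step is to form the forward conditional as the ratio of these two windows. The two products telescope: they agree for all $s\le t-q$, so the only surviving innovation factor is $f_\epsilon(\epsilon_{t+1-q})$ with $\epsilon_{t+1-q}=\Psi(L^{-1})u_{t+1-q}=\Psi(L^{-1})\Phi(L)y_{t+1-q}$, and what remains is the ratio of the $q$-dimensional boundary blocks. This yields
\[
l(u_{t+1}\mid u_{t},\ldots,u_{t+1-K})=f_\epsilon(\epsilon_{t+1-q})\,
\frac{f_{(u_{t+2-q},\ldots,u_{t+1})}}{f_{(u_{t+1-q},\ldots,u_{t})}},
\]
which is \emph{independent of} $K$ for every $K\ge q$. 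Passing $K\to\infty$ (the conditional densities given finite pasts form a reverse martingale, whose limit is the conditional density given $\mathcal{F}_t$) shows this common value is exactly $l(u_{t+1}\mid u_r,r\le t)$. Rewriting $u_s=\Phi(L)y_s$ throughout gives precisely the asserted transition density, the boundary ratio being that of the joint densities of the consecutive blocks $(u_{t+2-q},\ldots,u_{t+1})$ and $(u_{t+1-q},\ldots,u_{t})$ evaluated at $(\Phi(L)y_{t+1},\ldots,\Phi(L)y_{t+2-q})$ and $(\Phi(L)y_t,\ldots,\Phi(L)y_{t+1-q})$.

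Finally, part~(i) falls out for free: the expression above depends on the path only through $u_{t+1-q},\ldots,u_{t+1}$, and since each $u_s$ is a function of $y_s,\ldots,y_{s-p}$, the transition depends only on the new value $y_{t+1}$ together with the $p+q$ lags $y_t,\ldots,y_{t+1-p-q}$, i.e.\ $(y_t)$ is Markov of order $p+q$. I expect the only delicate point to be the rigor of conditioning on the infinite past: this is precisely why I emphasize that the finite-$K$ conditional is \emph{exactly} $K$-independent for $K\ge q$, which sidesteps any subtle limiting argument and reduces the analysis to the finite-window factorization identity and the existence of the $q$-dimensional marginal (boundary) densities of $(u_s)$, both guaranteed by stationarity and the assumption that $\epsilon_t$ admits a density $f_\epsilon$.
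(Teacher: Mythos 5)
Your proof is correct. One point of context: the paper does not prove this proposition at all --- it is imported as a citation from Fries and Zakoian (2019, Proposition 3.1) --- so there is no internal proof to compare against. Your argument is, however, exactly the route the paper gestures at in the remark immediately following the proposition (that the transition of $(y_t)$ coincides with that of the noncausal component $(u_t)=(\Phi(L)y_t)$, which is Markov of order $q$): the equality of past $\sigma$-fields $\sigma(y_r:r\le t)=\sigma(u_r:r\le t)$ together with the unit-Jacobian translation $y_{t+1}\mapsto u_{t+1}=\Phi(L)y_{t+1}$ reduces everything to the noncausal AR($q$); the reverse-time causal AR($q$) structure gives the window factorization with a $q$-dimensional boundary block; and the exact $K$-independence of the finite-past conditional for $K\ge q$ identifies the conditional density given the infinite past. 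Three small repairs would make it airtight. First, the limiting argument is L\'evy's \emph{upward} martingale convergence theorem (the conditioning $\sigma$-fields $\sigma(u_t,\ldots,u_{t+1-K})$ increase with $K$), not a reverse martingale; this is harmless since, as you stress, the finite-$K$ conditionals are all equal from $K=q$ onward. Second, your derivation silently corrects a typo in the statement as printed: the numerator must be the $q$-dimensional block density of $(u_{t+1},\ldots,u_{t+2-q})$, not of the $(q+1)$-tuple $(u_{t+1},\ldots,u_{t+1-q})$ that the subscript suggests; by stationarity it is then the same function as the denominator's, which is what makes the ratio meaningful. Third, the existence of the $q$-dimensional boundary densities, which you only assert, follows from a one-line argument worth including: conditionally on $(u_{t+2},\ldots,u_{t+q+1})$, the map $(\epsilon_{t+2-q},\ldots,\epsilon_{t+1})\mapsto(u_{t+2-q},\ldots,u_{t+1})$ is unitriangular with unit Jacobian and the $\epsilon$'s are independent of the conditioning block, so the block admits a density; the same triangular change of variables applied to the whole window also yields your factorization identity directly, without even invoking reverse-time Markovianity.
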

This last formula says that the conditional distribution of $ l(y_{t+1} | \underline{y_t})$ is equal to the conditional distribution of the noncausal process $(u_t)=(\Phi (L) y_{t})$ given its own past: $ l(y_{t+1} | \underline{y_t}) = f_{\epsilon} (\Psi (L^{-1}) u_{t+1-q}) \times \frac{f_{(u_{t+1}, \ldots, u_{t+1-q})} (u_{t+1}, \ldots, u_{t+2-q})}{f_{(u_{t},\ldots, u_{t+1-q})} (u_t, \ldots, u_{t+1-q})}$.  In other words, it is often convenient to transform $(y_t)$ to $(u_t)$ and analyze the asymptotic behaviour of this latter process first,  which is Markov of order $q$. 
 
 This closed form expression of the transition density can be used to derive the conditional distribution of $(y_{t+h}/y_t, h$ varying) given $y_t$  and its behaviour when $y_t = y$ is large (see Appendix A.3).
 
 \begin{Proposition} For a MAR$(p,q)$ process, the conditional distribution of $(y_{t+h}/y,h$ varying) given $y_t = y$ tends to a discrete distribution with support $(c_{h+n}/c_n, h$ varying) with $ n \in \mathbb{Z}$, when $y$ tends to infinity.\end{Proposition}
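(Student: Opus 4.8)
The plan is to reduce the claim to the convergence of finite-dimensional distributions and then apply the Single Big Jump principle to the two-sided moving-average representation. Fix a finite set $H\subset\mathbb{Z}$ of horizons with $0\in H$; since a law on $\mathbb{R}^{\mathbb{Z}}$ is determined by its finite-dimensional margins, it suffices to show that the conditional law of $(y_{t+h}/y)_{h\in H}$ given $y_t=y$ converges, as $y\to\infty$, to the law of $(c_{h+n}/c_n)_{h\in H}$ where the index $n$ is drawn with probability $p_n=|c_n|^{\alpha}/\sum_{m}|c_m|^{\alpha}$. The natural starting point is the reindexed representation $y_{t+h}=\sum_{j\in\mathbb{Z}}c_{h+j}\,\epsilon_{t-j}$, which exhibits every coordinate $y_{t+h}$ as an infinite linear combination of the single i.i.d. family $(\epsilon_{t-j})_{j\in\mathbb{Z}}$, all sharing the same conditioning variable $y_t=\sum_j c_j\epsilon_{t-j}$.

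The core step is the multivariate SBJ. The summands $c_j\epsilon_{t-j}$ are independent with pairwise equivalent Paretian tails, the tail of $c_j\epsilon_{t-j}$ being $|c_j|^{\alpha}$ times that of $\epsilon_t$ (up to the sign and skewness bookkeeping of Section 2.1). By the density version of the principle in Lemma~\ref{technicallemma}, together with its extension to infinitely many terms, conditioning on $y_t=y$ forces in the limit $y\to\infty$ exactly one index $n$ to carry the large value, with $c_n\epsilon_{t-n}\approx y$ and hence $\epsilon_{t-n}\approx y/c_n$, while every other summand remains $O_{\mathbb{P}}(1)$; the limiting probability that this index equals $n$ is $p_n$. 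Substituting into $y_{t+h}=c_{h+n}\epsilon_{t-n}+\sum_{j\neq n}c_{h+j}\epsilon_{t-j}$ yields $y_{t+h}=c_{h+n}(y/c_n)+O_{\mathbb{P}}(1)$, so that $y_{t+h}/y\to c_{h+n}/c_n$ jointly over $h\in H$, which is exactly the asserted discrete limit.

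To make this rigorous under the equality conditioning $y_t=y$, rather than the survival conditioning $|y_t|>y$ of Proposition~1, I would route the argument through the closed-form transition density of Proposition~\ref{markovpq}. Because the MAR$(p,q)$ process is Markov of order $p+q$, the joint density of $(y_{t+h})_{h\in H}$ together with the finitely many lagged coordinates needed to fix the Markov state factorizes into transition densities of the stated form. Using the remark that $l(y_{t+1}\mid\underline{y_t})$ coincides with the conditional law of the pure noncausal component $(u_t)=(\Phi(L)y_t)$, which is Markov of order $q$, one transforms the problem into one about finite blocks of the $(u_t)$ process; the relevant finite linear combinations of errors then have equivalent densities by the moving-average tail-equivalence of Lemma~\ref{ultimatemonotone}, so Lemma~\ref{technicallemma} applies coordinatewise and the non-dominant terms are integrated out. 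This replaces the intractable infinite sum by a finite computation controlled entirely by the coefficients $c_h$, and the past horizons $h<0$ are handled symmetrically through the pure causal component $(v_t)$.

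The main obstacle is the passage from the survival-function form of the SBJ to the density form required here. Equality conditioning needs regular variation of the \emph{densities} of the relevant combinations, not merely of their survival functions; this is precisely the extra regularity (ultimate monotonicity of the density) imposed in Lemma~\ref{ultimatemonotone}, and one must check that it propagates to the finite blocks appearing in the transition density. The second delicate point is uniformity: turning the heuristic $y_{t+h}=c_{h+n}(y/c_n)+O_{\mathbb{P}}(1)$ into genuine weak convergence of the ratio requires the contribution of the non-dominant summands to be shown negligible relative to $y$ uniformly across the finitely many coordinates $h\in H$, which is where the multivariate extension of the SBJ (Online Appendix B.3) and the summability bound~\eqref{Eq_Cond_c_k} do the work.
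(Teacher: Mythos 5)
Your overall strategy --- reduction to finite-dimensional margins, the reindexed representation $y_{t+h}=\sum_{j}c_{h+j}\epsilon_{t-j}$, and a density (equality-conditioning) version of the multivariate single-big-jump principle identifying the limit support $c_{h+n}/c_n$ with weights $p_n$ --- is the same as the paper's proof in Appendix A.3. However, your central step has a genuine gap: you invoke Lemma 2 ``together with its extension to infinitely many terms''. No such infinite-term extension exists in the paper (Lemma $2'$ of Online Appendix B.3 covers a \emph{finite} number of variables), and an SBJ statement taken directly over the infinitely many summands $c_j\epsilon_{t-j}$, $j\in\mathbb{Z}$, is precisely the difficulty to be overcome; citing the summability condition (2.4) names a tool, not an argument. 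The paper closes this gap with a truncation-and-lumping device that your proposal is missing: given the horizon range $K$ and $\epsilon>0$, choose $M>K$ with $\sum_{|h|\le M}c_h^{\alpha}\big/\sum_{h\in\mathbb{Z}}c_h^{\alpha}>1-\epsilon$ and write
\begin{equation*}
y_t=\sum_{h=-M}^{M}c_h\epsilon_{t-h}+\sum_{|h|>M}c_h\epsilon_{t-h},
\end{equation*}
so that $y_t$ is a sum of $2M+2$ \emph{independent} variables with equivalent Paretian tails: the $2M+1$ individual terms plus one lumped remainder, whose tail equivalence follows from the Cline-type result underlying Lemma 3. The finite-variable Lemma $2'$ then applies: conditionally on $y_t=y$ large, exactly one of these $2M+2$ components dominates, the probability that it is the lumped remainder being at most of order $\epsilon$; letting $\epsilon\to 0$ (i.e.\ $M\to\infty$) delivers the claimed discrete limit. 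Without this step, your assertion that ``exactly one index $n$ carries the large value while all others remain $O_{\mathbb{P}}(1)$'' is an unproved claim about an infinite collection of summands.

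Your proposed rigorization through the Markov transition density of Proposition 6 does not repair this. That is the route hinted at in the main text, but the paper's actual proof avoids it, and your sketch (``Lemma 2 applies coordinatewise and the non-dominant terms are integrated out'') leaves unspecified how the limit of a ratio of joint densities of $q$-dimensional blocks of $(u_t)$ is to be computed; any honest execution runs into the same infinite-sum issue, since $u_t$ is itself an infinite moving average, and would again be settled by the truncation device above. Note also that Appendix A.3 carries out the argument under explicit positivity assumptions ($\epsilon_t>0$ almost surely and $c_h\ge 0$), which you should state rather than defer to ``skewness bookkeeping''.
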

 This proposition is the analog of Proposition 2.

\subsection{Analysis for the Conditioning Set $y_{t}=y$ and $r_t,  r_{t-1},...$,  with $y$ large}
When the conditioning set $y_t>y$ is replaced by $y_t=y$, we get a similar limiting distribution by Proposition 7. The aim of this subsection is to show that when extra information is incorporated into the conditioning set,  the limiting conditional distribution could be updated to reflect this extra information.

\subsubsection{Extremal behaviour of MAR(1,1)}
Let us first consider the case of a MAR(1,1) process.  From Proposition \ref{markovpq},  the conditional p.d.f.  of $y_{t+1}$ given $y_{t},y_{t-1},y_{t-2},...$ is
equal to:%
\begin{eqnarray*}
\ell \left( y_{t+1}|y_{t},y_{t-1},y_{t-2},...\right) &=&\frac{f_{\left(
u_{t+1},u_{t}\right) }\left( y_{t+1}-\phi y_{t},y_{t}-\phi y_{t-1}\right) }{%
f_{u_{t}}\left( y_{t}-\phi y_{t-1}\right) } \\
&=&\frac{f_{u_{t}|u_{t+1}}\left( y_{t}-\phi y_{t-1}|y_{t+1}-\phi
y_{t}\right) f_{u_{t+1}}\left( y_{t+1}-\phi y_{t}\right) }{f_{u_{t}}\left(
y_{t}-\phi y_{t-1}\right) } \\
&=&\frac{f_{u}\left( y_{t+1}-\phi y_{t}\right) }{f_{u}\left( y_{t}-\phi
y_{t-1}\right) }f_{\epsilon }\left( y_{t}-\phi y_{t-1}-\psi \left(
y_{t+1}-\phi y_{t}\right) \right),
\end{eqnarray*}

\noindent where $f_{\epsilon}$ and $f_u$ are the p.d.f.'s of the error $\epsilon_t$ and the pure noncausal component $u_t$, respectively.  In particular,  the MAR(1,1) process is second-order Markov and,  it suffices to condition on $y_t$ and $r_t$ to get its predictive distribution given all the past.

Then using the same type of proof as in Proposition \ref{equaly},  we get: 

\begin{Proposition}
	\label{yr}
	 For the MAR (1,1) process, the conditional
distribution of $r_{t+1}$ given $y_{t}=y,r_{t}=r$, with $y$ large and $r$ fixed, converges (weakly) to a
discrete distribution with masses at $\phi $ and $\phi +\psi ^{-1}\left(
1-\phi /r\right) $,  with weights $1-\psi^{\alpha}$ and $ \psi^{\alpha}$,  when $y$ tends to infinity.\end{Proposition}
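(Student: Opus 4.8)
The plan is to exploit the explicit second-order Markov transition density displayed just before the statement, combine it with the single-big-jump decomposition of the pure noncausal innovation, and then invoke Lemma~\ref{technicallemma}. The starting observation is that conditioning on $y_t=y$ and $r_t=r$ is the same as conditioning on $y_t=y$ and $y_{t-1}=y/r$, which fixes the pure noncausal component at date $t$ at $u_t=y_t-\phi y_{t-1}=y\,(1-\phi/r)$. Provided $r>\phi$ (as holds during a bubble, where $r$ is close to $\psi^{-1}>1$), this quantity tends to $+\infty$ as $y\to\infty$. Since the transition density depends on $(y_t,y_{t-1})$ only through $u_t$, the conditional law of $y_{t+1}$, equivalently of $u_{t+1}=y_{t+1}-\phi y_t$, given $(y_t=y,r_t=r)$ coincides with the conditional law of $u_{t+1}$ given $u_t=s$, where $s:=y(1-\phi/r)$.

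First I would record the recursion $u_t=\psi u_{t+1}+\epsilon_t$, in which $u_{t+1}$ depends only on the innovations $\epsilon_{t+1},\epsilon_{t+2},\dots$ and is therefore independent of $\epsilon_t$. A one-line Jacobian computation then shows that the conditional density of $u_{t+1}$ at $a$ given $u_t=s$ is $f_u(a)\,f_\epsilon(s-\psi a)/f_u(s)$, which is exactly the conditional density arising when one splits the sum $S=Z_1+Z_2=s$ of the two independent Paretian variables $Z_1=\epsilon_t$ and $Z_2=\psi u_{t+1}$. This is precisely the setting of Lemma~\ref{technicallemma}: the conditional law of $R=\epsilon_t/u_t=Z_1/S$ given $S=s$ converges, as $s\to\infty$, to a Bernoulli distribution with success parameter $p=\xi/(\xi+1)$, where $\xi=\lim_{z\to\infty} f_{\epsilon}(z)/f_{\psi u_{t+1}}(z)$.

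Next I would compute $\xi$. Using $f_{\psi u_{t+1}}(z)=\psi^{-1}f_u(z/\psi)$, the regular variation of $f_\epsilon$ (so that $f_\epsilon(z/\psi)\sim\psi^{\alpha+1}f_\epsilon(z)$), and Lemma~\ref{ultimatemonotone} (so that $f_u(z)\sim(1-\psi^\alpha)^{-1}f_\epsilon(z)$), one finds $f_{\psi u_{t+1}}(z)\sim \tfrac{\psi^\alpha}{1-\psi^\alpha}f_\epsilon(z)$, hence $\xi=(1-\psi^\alpha)/\psi^\alpha$ and $p=1-\psi^\alpha$. The final step is purely algebraic: inverting the recursion gives $u_{t+1}=(s-\epsilon_t)/\psi=s(1-R)/\psi$, so that
\[
r_{t+1}=\phi+\frac{u_{t+1}}{y}=\phi+\frac{(1-R)}{\psi}\Bigl(1-\frac{\phi}{r}\Bigr),
\]
an affine function of $R$ with coefficients that do not depend on $s$. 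By the continuous mapping theorem the Bernoulli limit of $R$ transfers to a two-point limit for $r_{t+1}$: the atom $R=1$ (probability $1-\psi^\alpha$) gives $r_{t+1}=\phi$, while the atom $R=0$ (probability $\psi^\alpha$) gives $r_{t+1}=\phi+\psi^{-1}(1-\phi/r)$, which is exactly the claimed distribution.

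The main obstacle is the rigorous passage from the density-level conditioning on the measure-zero event $\{y_t=y,r_t=r\}$ to the weak-convergence statement of Lemma~\ref{technicallemma}: one must confirm that the conditional density of $u_{t+1}$ given $u_t=s$ is genuinely the single-big-jump conditional density, and, crucially, verify the Paretian-tail and density-equivalence hypotheses of Lemma~\ref{technicallemma} for $Z_2=\psi u_{t+1}$, which is where the ultimate-monotonicity assumption underlying Lemma~\ref{ultimatemonotone} enters. Once these tail regularity conditions are secured, the scaling computation for $\xi$ and the continuous-mapping argument are routine.
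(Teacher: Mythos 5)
Your proposal is correct and takes essentially the same route as the paper: the paper's (terse) argument likewise reduces the conditioning $(y_t=y,\, r_t=r)$ to the pure noncausal component $u_t=y\left(1-\phi/r\right)$, which is large, applies the single-big-jump Lemma 2 to the decomposition $u_t=\epsilon_t+\psi u_{t+1}$ with tail ratio $\xi=(1-\psi^{\alpha})/\psi^{\alpha}$ obtained from Lemma 3, and maps back to $r_{t+1}=\phi+\frac{u_{t+1}}{u_t}\left(1-\phi/r\right)$. The only difference is cosmetic: you justify the reduction through the displayed transition density depending on the past only through $u_t$, whereas the paper invokes the one-to-one map $(y_t,r_t)\leftrightarrow(v_{t-1},u_t)$ and the independence of $v_{t-1}$ and $(u_t,u_{t+1})$; both are valid and equivalent.
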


\textbf{Example:  Cauchy MAR(1,1).} 
As an illustration, let us consider the MAR(1,1) Cauchy process, with $\epsilon_t\sim \mathcal{C}(0,1)$. In this case, the conditional density of $r_{t+1}$ given $y_{t}=y,r_{t}=r$ is:
\begin{equation}
	\label{thedensity}
f(r_{t+1}|y_t=y,r_t=r)= \frac{\pi |y|[1+y^2(1-\phi/r)^2(1-\psi)^2]}{[1+y^2(r_{t+1}-\phi)^2(1-\psi)^2][1+y^2(1-\phi/r-\psi r_{t+1}+\phi \psi )^2(1-\psi)^2]}
\end{equation}
We can distinguish two cases:
\begin{itemize}
	\item When $r_t$ is close to $\phi$, the numerator and the second factor of the denominator partially cancel out and this pdf is close to $(1-\psi)  \frac{(1-\psi) |y|}{\pi  [(1-\psi)^2 y^2 (r-\phi)^2+1]}$, where we note that the ratio $ \frac{(1-\psi) |y|}{\pi  [(1-\psi)^2 y^2 (r-\phi)^2+1]}$ is the density of $\frac{Z_1}{(1-\psi)y}$, where $Z_1$ follows $\mathcal{C}(\phi,1)$. Clearly, $\frac{Z_1}{(1-\psi)y}$ converges weakly to the point mass at $\phi$ as $y$ increases to infinity. This latter pdf is often called the Lorentzian function. 
	\item When $r_t$ is close to $\phi +\psi ^{-1}\left(
	1-\phi /r\right) $, the numerator and the first factor of the denominator partially cancel out, and the RHS of eq. \eqref{thedensity} is close to $\psi \frac{\psi |y|}{\pi [1 + \psi^2 (-\phi + r - (1 - \phi/r)/\psi)^2 y^2]}$, with the ratio $\frac{\psi |y|}{\pi [1 + \psi^2 (-\phi + r - (1 - \phi/r)/\psi)^2 y^2]}$ representing the density of $\frac{Z_2}{\psi y}$, where $Z_2$ follows $\mathcal{C}(\phi +\psi ^{-1}\left(
	1-\phi /r\right),1).$ Clearly, $\frac{Z_2}{\psi y}$ converges weakly to the point mass at $\phi +\psi ^{-1}\left(
	1-\phi /r\right)$ as $y$ increases to infinity. 
\end{itemize}
Thus, we have:
\begin{align*}
f(r_{t+1}|y_t=y,r_t=r)&= (1-\psi) \Big[  \frac{(1-\psi) |y|}{\pi  [(1-\psi)^2 y^2 (r-\phi)^2+1]}\Big]+ \psi \Big[ \frac{\psi |y|}{\pi [1 + \psi^2 (-\phi + r - (1 - \phi/r)/\psi)^2 y^2]} \Big]\\
&+e(r_{t+1},y,r)
\end{align*}
where $e(r_{t+1},y,r)$ is the approximation error.  The sum of the first two terms converges towards a weighted average of two point masses, whereas using tedious elementary algebra, we can show that the error term is such that $\int e(r_{t+1},y,r) g(r_{t+1}) \mathrm{d}r_{t+1}=O(1/y)$ for any bounded function $g$.  This analysis shows that the convergence of $f(r_{t+1}|y_t=y,r_t=r)$ towards a mixture of point masses is at the rate of $1/y$. 
\vspace{1em}

We plot,  in Figures 1 and 2, the predictive distribution of $r_{t+1}$ given $y_t$ and $r_t$.  We take $\phi=0.6$ and $\psi=0.4$ in the MAR(1,1) Cauchy process,  and consider $y_t=100$ in both figures.  We set $r_t=2$ in Figure 1,  and $r_t=1$ in Figure 2.  As expected,  we get two modes for the conditional distribution in both cases.  The heights of the two modes are roughly similar across both figures,  but the location of one of the two modes differs between the two figures.  This is consistent with Proposition 8,  which says that the limiting distribution depends on $r_t$ only through the location of $\phi +\psi ^{-1}\left(
1-\phi /r_t\right) $.
\begin{figure}[H]
\centering
\includegraphics[scale=0.4]{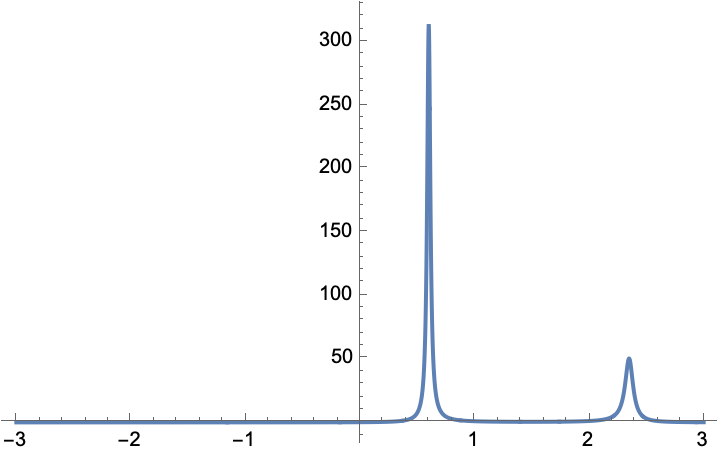}
\caption{Predictive Distribution of $r_{t+1}$ given $y_t=100$ and $r_t=2$ in a MAR(1,1) Cauchy model.}
\end{figure}

\begin{figure}[H]
\centering
\includegraphics[scale=0.4]{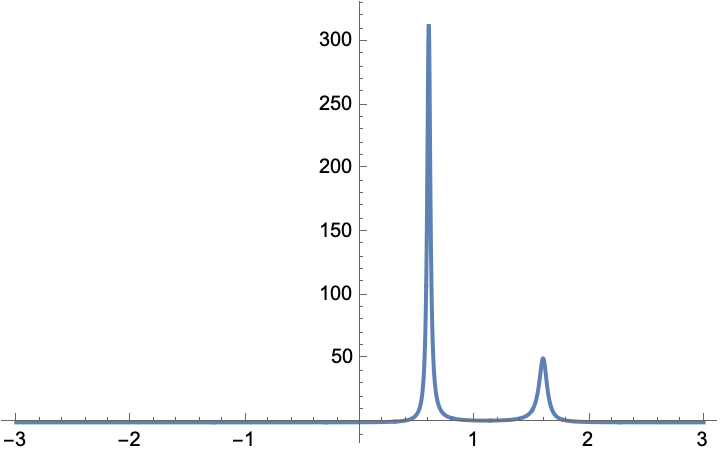}
\caption{Predictive Distribution of $r_{t+1}$ given $y_t=100$ and $r_t=1$ in a MAR(1,1) Cauchy model.}
\end{figure}


To our knowledge,  this kind of result in Proposition 8,  where the limiting distribution of the normalized future path $y_{t+h}/y_t$,  $h=1,2,\cdots$  depends on $r_t$,  has never been established before in the noncausal literature.   It can be useful in practice,  since even though Propositions 2 and \ref{equaly} say that,  conditional on $y_t$ being large,  the distribution of the past normalized values $y_{t+h}/y_t$,  where $h=-1,-2,\cdots$ converges to a discrete distribution,  their realized values are almost surely different from these limiting values.  Then Proposition \ref{yr} explains how to reconcile these realized values with Proposition \ref{equaly}.  

Let us now discuss the main differences between Propositions \ref{yr} and \ref{equaly}. 

\paragraph{ $i)$ Case 1:  $r_t$ takes  limiting values. } In Proposition \ref{yr}, we get potentially different locations for the two limiting point masses compared to Proposition \ref{equaly},  as well as new weights for the two point masses.  Moreover,  the new location of the second mass now depends on the value of $r_t$.  However, Proposition \ref{equaly} also says that this value is expected to be concentrated around two points $\phi$ and $\psi^{-1}$.  When $r_t$ takes exactly one of these two limiting values,  the locations of the two masses in Proposition \ref{yr} agree with those found in Proposition \ref{equaly},  since $\phi +\psi ^{-1}\left(1-\phi /\psi^{-1}\right) =\psi^{-1}$.   

Nevertheless,  even in the case where $r_t$ takes one of the most probably values $\phi$ and $\psi^{-1}$,  the weights of the two point masses are different in Propositions \ref{yr} and \ref{equaly}.  Thus the inclusion of $r_t$ in the conditioning set has an effect of (Bayesian) updating the weights of the two limiting point masses.  More precisely,   from the convergence of the conditional distribution of $l(\frac{y_{t-1}}{y_t}, \frac{y_{t+1}}{y_t})$ given $y_t=y$ towards $(X_{-1}, X_1)$,  we deduce that the conditional distributional of $\frac{y_{t+1}}{y_t}$ given $y_t=y,  \frac{y_{t}}{y_{t-1}}=r$ converges to the distribution of $X_1$ given $X_{-1}=X_0r^{-1}$,  where $r$ is either $\phi$,  or $\psi^{-1}$.  Since $X_{-1}=X_0(c_{N}/c_{N-1})^{-1}$,  whose expression is given in eq. \eqref{mar110},   the knowledge of the (limiting) value of $r_t$ provides information on the stochastic drift $N$,  and then on the turning point.  Indeed, we get:
\begin{align*}
r_t= \phi & \Longleftrightarrow N \geq 1, \\
r_t= \psi^{-1} & \Longleftrightarrow N \leq 0.
\end{align*}
In the first case,  we have $X_{1}=X_0c_{N+1}/c_N=\phi$,  since $N \geq 1$ implies \textit{a fortiori} that $N+1 \geq 1$,  and this value $\phi$ is consistent with Proposition \ref{yr},  since the location of the second point mass $\phi+ \psi^{-1}(1-\phi/r)$ is equal to the location of the first point mass $\phi$.  

In the second case,  $X_{1}=X_0c_{N+1}/c_N$ can be equal to either $\phi$ or $\psi^{-1}$,  with $$\mathbb{P}[X_1=\phi]=\mathbb{P}[N+1 \geq 1 \mid N \leq 0]=\frac{\mathbb{P}[N = 0]}{\mathbb{P}[N \leq 0]}=\frac{1-\phi^{\alpha}}{1-\phi^{\alpha}\psi^{\alpha}}.$$
Thus,  we find the same probability as in Proposition \ref{yr}.   Proposition \ref{yr} is an improvement over Proposition \ref{equaly},  which $i)$ discusses the case where $r_t$ might not be equal to one of the two limiting values; $ii)$ conducts a Bayesian updating on the distribution of $N$ using the additional information on $r_t$.

\paragraph{$ii)$ 
	Case 2:  $r_t$ does not take limiting values. } How to interpret the general case,  when $r_t$ is not necessarily equal to $\phi$ or $\psi^{-1}$? In the MAR(1,1) case,  the causal-noncausal decomposition \eqref{pqvu} becomes: 
$$
y_t=\frac{1}{1-\psi \phi}(\phi v_{t-1}+u_t).
$$
Given $y_t$ and $r_t=y_{t}/y_{t-1}$ with $r_t$ different from $\phi$ and $\psi^{-1}$, we have the following properties: 

\begin{lm}
\begin{itemize}
\item In the decomposition $y_t=\frac{1}{1-\psi \phi}(\phi v_{t-1}+u_t)$, both $v_{t-1}=y_{t-1}-\psi y_t$ and $u_t=y_t-\phi y_{t-1}$ are known,  and are both large,  in the sense that,  for fixed $r_t=r$,  both $v_{t-1}$ and $u_t$ go to infinity when $y_t$ increases to infinity.  
\item  The relationship between $(y_t, r_t)$ and $(v_{t-1}, u_t)$ is one-to-one.  This implies that the information set $(y_t, r_t)$ is equivalent to the information set $(v_{t-1}, u_t)$. 
\item Moreover, $v_{t-1}$ and $u_t$ are mutually independent.  In particular,  the conditional distribution of $(u_{t+h})_h /u_t$ given $(y_t, r_t)$,  or equivalently given $(v_{t-1}, u_t)$,  is equal to the conditional distribution of $(u_{t+h})_h /u_t$  given $u_t$ only,  with $u_t$ large. 
\end{itemize}
\end{lm}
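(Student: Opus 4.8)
The plan is to prove the three bullets separately, since each reduces to an elementary computation once the explicit one-sided moving-average representations of the two components are recalled. Throughout I would use that, for the MAR$(1,1)$ model with $\phi,\psi\in(0,1)$, inverting the autoregressive operators in eqs. \eqref{ut}--\eqref{vt} gives $v_{t-1}=\sum_{k\ge0}\phi^{k}\epsilon_{t-1-k}$ and $u_t=\sum_{k\ge0}\psi^{k}\epsilon_{t+k}$, together with the identities $u_t=y_t-\phi y_{t-1}$ and $v_{t-1}=y_{t-1}-\psi y_t$.

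For the first bullet I would substitute $y_{t-1}=y_t/r$ (legitimate since $y_t$ is large, hence nonzero) into these two identities, obtaining $u_t=y_t\,(1-\phi/r)$ and $v_{t-1}=y_t\,(1/r-\psi)$. Because $r\neq\phi$ and $r\neq\psi^{-1}$, the factors $1-\phi/r$ and $1/r-\psi$ are nonzero (and, in the positive regime $\phi<r<\psi^{-1}$, both strictly positive), so letting $y_t\to\infty$ at fixed $r$ forces $|u_t|,|v_{t-1}|\to\infty$. This is exactly the quantitative ``both are large'' statement needed to invoke Lemma \ref{technicallemma}.

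For the second bullet I would observe that $(u_t,v_{t-1})$ is the image of $(y_t,y_{t-1})$ under the linear map with matrix $\left(\begin{smallmatrix}1&-\phi\\-\psi&1\end{smallmatrix}\right)$, whose determinant is $1-\phi\psi\neq0$; inverting it reproduces the already displayed decomposition $y_t=\frac{1}{1-\phi\psi}(\phi v_{t-1}+u_t)$ and shows the map is a bijection. Since $(y_t,r_t)$ and $(y_t,y_{t-1})$ are in one-to-one correspondence away from $y_t=0$ through $y_{t-1}=y_t/r_t$, composing the two bijections yields the claimed one-to-one correspondence between $(y_t,r_t)$ and $(v_{t-1},u_t)$, so the two conditioning sets generate the same information.

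The third bullet is the only assertion with genuine probabilistic content, and I expect the conditional-independence bookkeeping there to be the main (though still routine) obstacle. From the representations above, $v_{t-1}$ is a function of $\{\epsilon_s:s\le t-1\}$ only, whereas the forward path $(u_{t+h})_{h\ge0}$ is a function of $\{\epsilon_s:s\ge t\}$ only; these index sets being disjoint, the i.i.d.\ assumption on $(\epsilon_t)$ makes $(u_{t+h})_{h\ge0}$ independent of $v_{t-1}$. Since $u_t$ is the $h=0$ coordinate of that forward path, the standard fact that $A\perp B$ implies $\mathcal{L}(A\mid B,g(A))=\mathcal{L}(A\mid g(A))$, applied with $A=(u_{t+h})_{h\ge0}$, $B=v_{t-1}$ and $g(A)=u_t$, gives that the law of $(u_{t+h})_{h\ge0}/u_t$ given $(v_{t-1},u_t)$ coincides with its law given $u_t$ alone; by the second bullet this equals the law conditional on $(y_t,r_t)$. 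The point requiring care is precisely that the collection must be taken forward ($h\ge0$), since $u_{t+h}$ for $h<0$ reintroduces errors $\epsilon_s$ with $s<t$ that are correlated with $v_{t-1}$; restricting to the predictively relevant forward path is what makes the reduction to ``conditioning on $u_t$ only'' valid.
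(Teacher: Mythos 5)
Your proof is correct, and it coincides with what the paper intends: the paper omits the proof of this lemma as ``obvious,'' and the computations you supply are exactly the obvious ones --- the identities $u_t=y_t(1-\phi/r)$, $v_{t-1}=y_t(1/r-\psi)$ for the first bullet, the linear change of variables with determinant $1-\phi\psi\neq 0$ for the second, and the disjointness of the error sets driving $v_{t-1}=\sum_{k\geq 0}\phi^k\epsilon_{t-1-k}$ and $u_{t+h}=\sum_{k\geq 0}\psi^k\epsilon_{t+h+k}$, $h\geq 0$, for the third. One point where you genuinely improve on the paper: your remark that the third bullet holds only for the forward path $(u_{t+h})_{h\geq 0}$ is a needed precision, since for $h<0$ the variable $u_{t+h}$ loads on errors $\epsilon_s$ with $s\leq t-1$ that also enter $v_{t-1}$, so the statement with $h$ unrestricted would be false; your restriction is exactly what the paper's application (predicting $y_{t+1}$ through $u_{t+1}/u_t$ in eq.~\eqref{predictytplus1}) uses.
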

The proof of Lemma 4  is obvious and omitted.  It suggests that we can apply Lemma 2 to $v_{t-1}$ and $u_t$ separately,  or equivalently, to investigate the two one-sided tail processes $(X_{1,h})$ and $(X_{2,h})$ associated with these two processes.  In particular,  since $(u_t)$ is pure noncausal,  given $u_t$ is large,  the conditional distribution of $u_{t+1}/u_t$ converges to a Bernoulli distribution with point masses at 0 and $\psi^{-1}$,  with weights $1-\psi^{\alpha}$ and $\psi$.  Finally,  we have:
\begin{equation}
\label{predictytplus1}
y_{t+1}/y_t= \frac{1}{1-\phi \psi}(\phi^2 v_{t-1}+\phi \epsilon_t + u_{t+1})=  \frac{1}{1-\phi \psi}
  \Big(\phi^2 \frac{v_{t-1}}{y_t}+\phi \frac{u_t}{y_t} + (1-\phi \psi) \frac{u_{t+1}}{u_t}\frac{u_t}{y_t}\Big),
\end{equation}
where the ratios $v_{t-1}/y_t$ and $\frac{u_t}{y_t}$ are known.   Thus the limiting distribution of $y_{t+1}/y_t$ is obtained from the limiting distribution of $u_{t+1}/u_t$ through given linear location and scale changes.  Formula \eqref{predictytplus1} highlights the critical role played by the pure noncausal process in decomposition \eqref{pqvu} (equal to $ \frac{1}{1-\phi \psi} u_t$ in the MAR(1,1) case).  In fact, the conditioning with respect to all past values of the process $y_t$ has the effect of conditioning out the causal part $\frac{1}{1-\phi \psi} v_{t-1}$ and it suffices to consider the prediction of the noncausal part. 

This result can also be interpreted using the one-sided tail processes $(X_{1,h})$ and $(X_{2,h})$.  Indeed,  in Section 3.3 we have seen that these two processes are independent,  and for prediction purpose ($h >0$),  $X_{1,h},  h>0$,  is deterministic.  It suffices to predict the one-sided tail process $(X_{2,h})$ associated with the noncausal part of $y_t$.

\paragraph{$iii)$ Single vs Double Big Jumps.} By applying twice the SBJ principle to $u_t=\sum_{h=0}^\infty \psi^h \epsilon_{t+h}$ and $v_{t-1}=\sum_{h=0}^{\infty} \phi^{h} \epsilon_{t-1-h}$,  respectively.  We see that there are two big jumps among the sequence $(\epsilon_t), t\in \mathbb{Z}$,  one with negative index (corresponding to a past date) and the other one with nonnegative index (corresponding to a future date or the current date). \footnote{In the special cases where $r_t=\phi$ (resp.  $r_t=\psi^{-1}$),  we have $u_t=0$,  or $v_{t-1}=0$.  In this case, the DBJ becomes SBJ. } In other words,  with the conditioning set  $y_{t}=y,r_{t}=r$,  instead of having a Single Big Jump (SBJ),  we are in a new framework of ``Double Big Jumps" (DBJ)\footnote{In extreme value theory jargon, this  can be interpreted as a kind of hidden regular variation [see e.g.  Resnick and Roy (2014)],  as opposed to standard regular variation,  which underlies Proposition 1 and the main theories in Kulik and Soulier
 (2020). } 
In the special case of MAR(1,1) processes,  this DBJ is particularly simple to analyze,  since to predict $y_{t+1}$,  by eq. \eqref{predictytplus1},  we do not need to infer the location of the past SBJ in $v_{t-1}$, but only the future SBJ in $u_t=\epsilon_t+ \psi u_{t+1}$. 

\subsubsection{Extremal behaviour of MAR($p, q$)}
A MAR($p, q$) process is Markov of order $p+q$,  and thus we are interested in the predictive distribution of $y_{t+1},y_{t+2},...$ given 
$y_t$, $r_t, r_{t-1},...,r_{t-p-q+2}$.  As in Lemma 3,  the idea is to rely on the causal-noncausal decomposition of process $(y_t)$ to transform the analysis of $(y_{t+h})/y_t$ into the analysis of its noncausal part.  More precisely,  in eq. \eqref{pqvu},  the first term 
$
p_t= L^q b_1(L) v_t= L^q b_1(L) \Psi(L^{-1}) y_t 
$
only depends on the current and past $p+q$ observations of $(y_t)$,  and are known given $y_t$, $r_t, r_{t-1},...,r_{t-p-q+2}$.  The second term $f_t=b_2(L)u_t$ is also known.  Thus,  as in Lemma 3, we can analyze the asymptotic distribution of $f_t$ given its own past.   

Moreover,  in the special case where $q=1$,  the degree of $b_2(L)$ is zero.  Thus it is a constant,  which is nonzero.  In this case, $f_t$ is a MAR($0, 1$) process.  Thus,  as in the MAR$(1,1)$ case,  we can apply the SBJ to the pure noncausal AR(1) process $(u_t)$.  
This is an important special case,  since empirical studies often find $q=1$ for economic data,  corresponding to a single noncausal root [see e.g.  Hecq,  Velasquez-Gaviria (2025)].   For such a process,  where $u_t=\Phi(L)y_t$ is a noncausal AR(1), we obtain the following analog of Proposition 8:
\begin{Proposition}
In the MAR($p,1$) model,  as $y$ increases to infinity,  the conditional distribution of $r_{t+1}=y_{t+1}/y_t$ given $y_t=y$ and $r_{t},r_{t-2},...,r_{t-p+1}$ converges to a discrete distribution with two point masses.  The locations of the two point masses are obtained by solving $\Phi(L)y_{t+1}=0$ and $\Phi(L)y_{t+1}=\psi^{-1}\Phi(L)y_{t}$ for $r_{t+1}$,  respectively\footnote{For instance,  solving $\Phi(L)y_{t+1}=0$ leads to:
$r_{t+1}=\phi_1+\frac{\phi_2}{r_{t}}+\frac{\phi_3}{r_{t}r_{t-1}}+\cdots+ \frac{\phi_p}{r_tr_{t-1}\cdots r_{t-1+p}}$.  }. In particular, the locations only depend on $r_{t},r_{t-2},...,r_{t-p+1}$, but not on $y$, and the weights of these two point masses are $1-\psi^{\alpha}$ and $ \psi^{\alpha}$, respectively.  
\end{Proposition}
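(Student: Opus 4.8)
The plan is to reduce the MAR($p,1$) prediction problem to the pure noncausal AR(1) component $u_t=\Phi(L)y_t$, exactly as in the MAR(1,1) analysis behind Proposition \ref{yr}, exploiting that $q=1$ keeps $(u_t)$ first-order noncausal. First I would record the two reductions of the information set. Writing $u_t=\Phi(L)y_t=y_t-\phi_1y_{t-1}-\cdots-\phi_p y_{t-p}$, the conditioning event $\{y_t=y,\ r_t,r_{t-1},\dots,r_{t-p+1}\}$ is equivalent to fixing $(y_t,y_{t-1},\dots,y_{t-p})$, which is the full relevant past since a MAR($p,1$) is Markov of order $p+1$ by Proposition \ref{markovpq}(i). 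Under this event, $u_{t+1}=\Phi(L)y_{t+1}=y_{t+1}-\phi_1 y_t-\cdots-\phi_p y_{t-p+1}$ is an affine, strictly increasing function of the single unknown $y_{t+1}$, while $u_t$ is known. Moreover, with the ratios held fixed and $y\to\infty$ one has $u_t=\kappa\,y$ with $\kappa=1-\phi_1/r_t-\phi_2/(r_tr_{t-1})-\cdots$ a fixed constant, so (generically $\kappa\neq 0$) $u_t\to\infty$.

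Next I would use Proposition \ref{markovpq}(ii) with $q=1$ to collapse the conditioning on the whole past down to conditioning on $u_t$ alone. The transition density reads
\begin{equation*}
\ell(y_{t+1}\mid y_t,\dots,y_{t-p})=f_\epsilon(u_t-\psi u_{t+1})\,\frac{f_u(u_{t+1})}{f_u(u_t)},
\end{equation*}
where $f_\epsilon$ and $f_u$ are the densities of $\epsilon_t$ and of the noncausal AR(1) $u_t=\psi u_{t+1}+\epsilon_t=\sum_{h\ge 0}\psi^h\epsilon_{t+h}$; crucially the conditioning variables enter only through $u_t$. Hence the conditional law of $u_{t+1}$ given the past coincides with its law given $u_t$, i.e. with the one-step reverse-time transition of the pure noncausal AR(1). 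Because $y_{t+1}$ and $u_{t+1}$ differ by the known affine shift $\phi_1y_t+\cdots+\phi_p y_{t-p+1}$, it suffices to determine the limiting law of $u_{t+1}/u_t$ as $u_t\to\infty$.

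Then I would invoke the Single Big Jump principle. In $u_t=\epsilon_t+\psi u_{t+1}$ the two summands $\epsilon_t$ and $\psi u_{t+1}$ are independent and, by Lemma \ref{ultimatemonotone}, have Paretian densities with $f_{\psi u_{t+1}}(z)\sim\tfrac{\psi^\alpha}{1-\psi^\alpha}f_\epsilon(z)$, so their density ratio is $\xi=(1-\psi^\alpha)/\psi^\alpha$. Applying Lemma \ref{technicallemma} in its ``$S=s$'' form to $S=u_t$ and $R=\epsilon_t/u_t$, conditionally on $u_t=s\to\infty$ the mass concentrates: with probability $\tfrac{\xi}{\xi+1}=1-\psi^\alpha$ the jump is carried by $\epsilon_t$, forcing $u_{t+1}/u_t\to 0$; with probability $\tfrac{1}{\xi+1}=\psi^\alpha$ it is carried by $\psi u_{t+1}$, forcing $u_{t+1}/u_t\to\psi^{-1}$.

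Finally I would translate back. Dividing $u_{t+1}=y_{t+1}-\phi_1 y_t-\cdots-\phi_p y_{t-p+1}$ by $y_t$ and letting $y\to\infty$, the branch $u_{t+1}/u_t\to 0$ gives $r_{t+1}\to$ the solution of $\Phi(L)y_{t+1}=0$, and the branch $u_{t+1}/u_t\to\psi^{-1}$ gives $r_{t+1}\to$ the solution of $\Phi(L)y_{t+1}=\psi^{-1}\Phi(L)y_t$, with weights $1-\psi^\alpha$ and $\psi^\alpha$, as claimed. The main obstacle I anticipate is making the reduction to $u_t$ fully rigorous: one must check that the ``$S=s$'' convergence of Lemma \ref{technicallemma} transfers through the density ratio uniformly enough that the conditional law of $u_{t+1}\mid u_t=s$ converges, and that the equivalent-density hypothesis of Lemma \ref{ultimatemonotone} (ultimate monotonicity of the relevant density) genuinely holds for the infinite combination $u_{t+1}$; the passage from $u_{t+1}/u_t$ to $r_{t+1}$ is then routine provided the fixed-ratio normalizer $\kappa$ is nonzero.
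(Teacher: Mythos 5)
Your proof is correct and follows essentially the same route as the paper, whose formal proof is omitted (``similar as for Proposition 8'') but whose surrounding text sketches exactly your reduction: collapse the conditioning on the past to the pure noncausal AR(1) component $u_t=\Phi(L)y_t$, apply the single-big-jump Lemmas 2 and 3 to $u_t=\epsilon_t+\psi u_{t+1}$ to obtain the Bernoulli limit of $u_{t+1}/u_t$ with masses at $0$ and $\psi^{-1}$ and weights $1-\psi^{\alpha}$ and $\psi^{\alpha}$, then translate back to $r_{t+1}$ via the known affine change of variable. Your write-up in fact makes explicit two details the paper leaves implicit, namely the $q=1$ form of the Fries--Zakoian transition density showing the past enters only through $u_t$, and the genericity caveat that the normalizer $\kappa=u_t/y_t$ be nonzero.
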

The formal proof of Proposition 9 is similar to the proof of Proposition \ref{yr} and is omitted.  

For a general MAR($p, q$) processes with $q \geq 2$,  one can transform the problem of predicting $y_{t+1}$ given its own past into predicting $u_{t+1}$ by its own past,  with $(u_t)$ being a pure noncausal AR($q$) process. A systematic treatment of these higher order noncausal processes is out of scope of this paper [see e.g.  Rhee,  Blanchet and Zwart (2019), Dombry, Tillier and Winterberger (2022)].  However,  we conduct an informal discussion of this problem in Appendix B.5.

\subsection{Online Updating : the Conditioning Set $y_{T+1}> y_T>y$,  $y$ large}

Let us now consider an online framework, where at date $T$ the observation is such that $y_T > y$,  where $y$ is large. Therefore we can apply the CEV framework. Let us assume that a new observation is available at date $T+1$ and that we observe that $y_{T+1} > y_T$. Therefore we are possibly in an increasing episode of the bubble and this additional information will change our view on the turning point. By Proposition 1, we have:
$$
\mathcal{L}(y_{T+1}/y_T, \ldots, y_{T+h} / y_T |y_T > y) \stackrel{d}{\longrightarrow} \mathcal{L} (X^{(T)}_{1},\ldots, X^{(T)}_{h}),
$$

\noindent where the tail process $(X^{(T)}_h)$ is indexed by the exogenous date $T$.\footnote{It is important to write explicitly the index $T$ to highlight the difference with the myopic updating considered in Section 4.4.} We deduce that:
\begin{align}
&\mathcal{L} (y_{T+1} / y_T, \ldots, y_{T+h} / y_T| y_{T+1} > y_T > y) \nonumber \\  
=& \mathcal{L} (y_{T+1} / y_T, \ldots, y_{T+h} /y_T|y_T > y, y_{T+1}/y_T>1) \nonumber \\  
&\stackrel{d}{\longrightarrow} \mathcal{L} (X^{(T)}_{1}, \ldots, X^{(T)}_{h}| X^{(T)}_{1} > 1).\label{lessgood}
\end{align}

This limiting distribution can be written under closed form by applying Propositions 1 or 2. For expository purpose let us assume that the moving-average coefficients are nonnegative and apply Proposition 2. We have:
$$
Z^{(T)}_{h} = c_{N^{(T)} + h} /c_{N^{(T)}+ h-1},
$$
where the stochastic drift $N^{(T)}$ driving the tail process also depends on $T$. 

The conditioning event $X^{(T)}_{1} > 1$ is equivalent to $c_{N^{(T)} +1} > c_{N^{(T)}}$, or to an event written on the stochastic drift: $N_T \in A := \{ j : c_{j+1} > c_j\}$. Therefore the link between the stochastic drift and the tail process is unchanged, but the distribution of $N^{(T)}$ has to be updated. \vspace{1em}

The usefulness of such online updating depends on the model.  For MAR(1,1) and MAR($p,1$),  such an updating is less accurate than an updating using the full past information as the conditioning set (see Propositions 8 and 9). For these processes,  the limiting distribution given all past information has a simple form and the latter updating can be applied.  This is not the case  for more general MAR($p, q$) processes with $q \geq 2$. Then the updating in eq. \eqref{lessgood} allows to account for more information than just the current observation,  and hence provides more accurate prediction than a straightforward application of Proposition 1.  

As a toy example,  let us consider again the MAR(1,1) process 
(see Subsection 3.1.4). With positive parameters $\lambda, \mu$, we get:

$$
Z^{(T)}_{1} > 1 \;\mbox{iff}\; N^{(T)} \leq -1.
$$

Then the initial extreme distribution of $N^{(T)}$ in Subsection 3.2.4 is updated to a single geometric distribution in reversed time that weights only the values $j \leq -1.$ The interpretation is that since $T-N^{(T)}$ indicates the time index of the single big jump,  the fact that $Y_{T+1}>Y_T$ suggests that this big jump occurs posterior to date $T$.  Thus,  while under the conditioning set $y_T>y$ only,  the bubble can either collapse or further accumulate at date $T+1$,  under the updated conditioning set $y_{T+1}> y_T>y$,  the bubble can only further accumulate between $T$ and $T+1$.  

\vspace{1em}

The analysis above can be easily extended to an additional observation available at $T+2$ with $y_{T+2} > y_{T+1}$. Then we know that we are in an increasing episode of the bubble since two periods, and so on.

\subsection{Comparison with Myopic Updating}
Let us consider the same setting as in Section 4.2,  and compare the exogenous  limiting distribution obtained in eq. \eqref{lessgood} with a naive application of Proposition 2, either at exogenous date $T$,  or $T+1$.  In the first case,  we use the distribution of the tail process:
\begin{equation*}
X^{(T)}_{h}=c_{N^{(T)}+h}/c_{N^{(T)}}, h \geq 1,
\end{equation*}
to approximate the distribution of $y_{T+1}/y_{T},y_{T+2}/y_{T}, ...$ given $y_{T}>y$.  Note that the tail process is indexed by $T$. Then we use $%
y_{T+h+1}/y_{T+1}=\left( y_{T+h+1}/y_{T}\right) /\left( y_{T+1}/y_{T}\right)
$ to deduce the distribution of $y_{T+2}/y_{T+1},y_{T+3}/y_{T+1}, ...$

In the second case, we use the new tail process $X^{(T+1)}_{h}=c_{N^{(T+1)}+h}/c_{N^{(T+1)}}$, indexed by the new date $T+1$,  to approximate $y_{T+2}/y_{T+1}, y_{T+3}/y_{T+1}, ...$

As a consequence,  the two predictive distributions:
\begin{equation*}
\ell \left( \frac{y_{T+2}}{y_{T+1}},...,\frac{y_{T+h}}{y_{T+1}}%
|y_{T}>y\right) \quad \text{and}\quad \ell \left( \frac{y_{T+2}}{y_{T+1}}%
,...,\frac{y_{T+h}}{y_{T+1}}|y_{T+1}>y\right)
\end{equation*}%
are approximated by the two tail distributions:%
\begin{equation}
\label{l12}
\ell\left( \frac{X^{(T)}_{2}}{X^{(T)}_{1}},...,\frac{X^{(T)}_{h}}{X^{(T)}_{1}}\right)  \quad \text{and}\quad \ell \left( X^{(T+1)}_{1},...,X^{(T+1)}_{h-1}\right).
\end{equation}

We easily check that $(X^{(T+1)}_{1}, \ldots, X^{(T+1)}_{h-1})$ and $(X^{(T)}_{2}/X^{(T)}_{1}, \ldots, X^{(T)}_{h}/X^{(T)}_{1})$ are deduced from a same deterministic transformation of the stochastic drift variables $N^{(T+1)}$ and $ 1+N^{(T)}$, respectively.  These two variables have the same distributions, as both are interpreted as the time index of the Single Big Jump given $Y_{T+1}>y$ and $Y_T>y$,  respectively.  Thus the two distributions in eq.  \eqref{l12} are equal,  and clearly,  both are less accurate than the distribution in eq. \eqref{lessgood},  which uses more conditioning information. 
 



\vspace{1em}

\begin{Remark}The results are derived assuming that the date $T$ is ``exogenous". Therefore, they are not valid if for instance $T$ is the first observed exceedance  date, since the conditioning set would be $y_T > y$ and $y_t <y, t=1,\ldots, T-1$ (see Section 4.5 below).
\end{Remark}
\subsection{CEV for the First Large Exceedance}

The CEV framework in Proposition 1 assumes an exogenous date $t$ and standardizes the data by the value $y_t$ in the positive case, by $|y_t|$ in the general case. It is also possible to perform a CEV analysis, when the date $t$ is the first large exceedance date, such that $y_t > y$, $y_{t-h} < y, \forall  h > 0$, and with $y$ large.

In this framework, the date becomes endogenous (this is a stopping time $\tau$ of the history of the process and then of the tail process $(X_h)$.  This induces a change of the conditional p.d.f. that has to account for the evolution of the binary process $\mathbbm{1}_{y_{t-h}<y_t}, h>0$.  This change has an effect on the analysis of the single big jump occurrence, and more specifically on the asymptotic behaviour of the distribution $\mathbb{P}[y_1+\cdots+y_{\tau}>s] \sim \mathbb{P}[\max_{t \leq \tau} y_{\tau} >s]$ for large $s$ [see Holl and Barkai (2021), eqs.  (21)-(22)].  It also has an effect when analyzing the tail process.  

Such results have been derived in Basrak and Segers (2009), Planinic and Soulier (2018) (PS (2018)). We follow below their notations and consider a positive process for expository purpose. 
 Then we get a weak convergence to another tail process in $h$ denoted $(Q_h)$. This result is valid under a condition [PS (2018), condition (3.1)], satisfied for mixed causal-noncausal processes. Moreover the distribution of the tail process $(Q_h)$ is linked to the distribution of  the tail process $(X_h)$.\vspace{1em}

\begin{Proposition}[PS (2018), Th 3.1 and Lemma 3.7] The distribution of process $(Q_h)$ is obtained from the distribution of process $(X_h)$  by a transformation and a change of probability proportional to $1/[\sum_{h \in \mathbb{Z}} X_h^{\alpha}]^{1/\alpha}.$ More precisely:

\noindent for any shift invariant and homogeneous function $A$ defined on $\mathbb{R}^Z$, we have:

$$
\vartheta E [A (Q)] = E [A(X)/ \|X\|^{\alpha}_{\alpha}],
$$
\noindent where $\vartheta  $ is a constant whose value depends on the dynamics of the process, and  $ \|X\|^{\alpha}_{\alpha}=\sum_{h \in \mathbb{Z}}|X_h|^{\alpha}$. 
\end{Proposition}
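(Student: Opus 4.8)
The plan is to read this statement as a direct instance of the duality, established by Planinić and Soulier (2018), between the spectral tail process anchored at an exogenous date and the tail process anchored at the first large exceedance. All of the probabilistic content sits in their Theorem 3.1 and Lemma 3.7, so the task reduces to checking that a MAR process meets their hypotheses and then specializing their abstract change-of-measure formula to the present setting.

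First I would record the two structural facts that place the MAR process inside their framework. By Proposition 1 the sequence $(y_t)$ is stationary and regularly varying of index $\alpha$, with spectral tail process $(X_h)=(X_0\,c_{N+h}/c_N)$. Since the summability condition \eqref{Eq_Cond_c_k} forces $c_h\to 0$ as $|h|\to\infty$, we get $X_h\to 0$ almost surely as $|h|\to\infty$, and moreover
\[
\|X\|_\alpha^\alpha=\sum_{h\in\mathbb{Z}}|X_h|^\alpha=\frac{\sum_{m\in\mathbb{Z}}|c_m|^\alpha}{|c_N|^\alpha}\in(0,\infty)\quad\text{a.s.},
\]
so the weight $1/\|X\|_\alpha^\alpha$ is a.s. finite and strictly positive. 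Together with the geometric decay of $(c_h)$ (the roots of $\Phi$ and $\Psi$ lie strictly outside the unit circle), these are exactly the properties needed to verify PS (2018) condition (3.1) for the MAR class.

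Next I would invoke their representation. Under condition (3.1), Theorem 3.1 guarantees that the first-exceedance tail process $(Q_h)$ exists, and Lemma 3.7 identifies its law as the image of the law of $(X_h)$ under the map that shifts a trajectory so as to anchor it at its first record and renormalizes by $\|X\|_\alpha$, composed with the change of probability whose density is proportional to $1/\|X\|_\alpha^\alpha$. Because $A$ is shift invariant and $0$-homogeneous, it is unaffected both by the anchoring shift and by the renormalization, so in the change of variables $A$ may be evaluated directly on $(X_h)$; the Basrak--Segers time-change formula, summed over all admissible anchor positions, then converts the record-anchoring into precisely the reweighting by $1/\sum_h|X_h|^\alpha$. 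This produces the identity $\vartheta\,\mathbb{E}[A(Q)]=\mathbb{E}\!\left[A(X)/\|X\|_\alpha^\alpha\right]$. Taking the admissible choice $A\equiv 1$ identifies the normalizing constant as $\vartheta=\mathbb{E}[\|X\|_\alpha^{-\alpha}]$, which manifestly depends on $(c_h)$ and $\alpha$, hence on the dynamics.

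The main obstacle is not the algebra of the change of measure but the verification of condition (3.1) for a general MAR$(p,q)$ process: one must confirm the anticlustering (finite-cluster) behaviour and that $X_h$ vanishes at infinity with enough uniformity that no mass escapes when the time-change formula is summed over anchor positions. For finite-order MAR processes this will follow from the exponential decay of $c_h$, but it is the step that genuinely ties the abstract theorem to the model at hand; everything downstream is a transcription of Theorem 3.1 and Lemma 3.7.
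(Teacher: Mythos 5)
Your overall strategy is the same as the paper's: the paper gives no independent proof of this proposition, it cites PS (2018, Theorem 3.1 and Lemma 3.7) and, in a footnote, verifies applicability exactly as you do, namely by observing that the MAR tail process tends to zero at $\pm\infty$ (the paper deduces this from the deterministic recursions of Proposition 3, you from the decay of $(c_h)$ implied by \eqref{Eq_Cond_c_k}; both are fine), so that PS's condition (3.1)/anticlustering holds. Up to that point your proposal is a faithful reconstruction.

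The genuine gap is your last step, the identification $\vartheta=\Esp\bigl[\|X\|_\alpha^{-\alpha}\bigr]$ obtained by declaring $A\equiv 1$ admissible. You read ``homogeneous'' as $0$-homogeneous, but in PS (2018) the constant $\vartheta$ of Theorem 3.1 is the candidate extremal index $\Prob[\sup_{h\leq -1}X_h\leq 1]$ --- which is precisely what the paper's footnote states --- and their change-of-measure identity holds for shift-invariant functionals that are homogeneous of degree $\alpha$, not degree $0$. The restriction is not cosmetic: in the tail-measure decomposition underlying Lemma 3.7 one integrates $H(r\,\cdot)$ against $\alpha r^{-\alpha-1}\,\mathrm{d}r$, and this integral diverges for $0$-homogeneous $H$, so constants are excluded. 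The two identifications of $\vartheta$ genuinely differ. Take the paper's pure causal AR(1) with $\lambda\in(0,1)$: then $X_h=\lambda^h\mathbbm{1}_{\{h\geq -N\}}$ with $\Prob[N=j]=(1-\lambda^\alpha)\lambda^{\alpha j}$, $j\geq 0$, so that $\|X\|_\alpha^{-\alpha}=(1-\lambda^\alpha)\lambda^{\alpha N}$ and
\begin{align*}
\Esp\bigl[\|X\|_\alpha^{-\alpha}\bigr]&=(1-\lambda^\alpha)\,\Esp\bigl[\lambda^{\alpha N}\bigr]=\frac{1-\lambda^\alpha}{1+\lambda^\alpha},\\
\Prob\Bigl[\sup_{h\leq -1}X_h\leq 1\Bigr]&=\Prob[N=0]=1-\lambda^\alpha,
\end{align*}
which are unequal for every $\lambda\in(0,1)$. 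One can also check directly on this example that the displayed identity, with $\vartheta=1-\lambda^\alpha$, holds for every $\alpha$-homogeneous shift-invariant $A$ and fails for $A\equiv 1$. So with $\vartheta$ as in PS (2018) and in the paper's footnote, your normalizing constant is wrong, and the step that produces it is exactly where the proof breaks. (You are in good company: the paper's own remark after the proposition, which rewrites the change of probability as $(1/\|X\|_\alpha^\alpha)/\Esp[1/\|X\|_\alpha^\alpha]$, makes the same conflation; but that remark is inconsistent with the footnote and with PS's theorem, and the computation above shows the discrepancy is real rather than a matter of convention.)
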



Since the tail process $(X_h)$ has a countable support of possible trajectories, the tail process $(Q_h)$ has the same support. Then their distributions differ by the distribution of the stochastic drift $N$.\vspace{1em}

Proposition 10 shows that we move from the distribution of $(X_h)$ to the distribution of $(Q_h)$ by a change of probability measure given by $1/(\vartheta \|X\|^{\alpha}_{\alpha})=(1/  \|X\|^{\alpha}_{\alpha})/E [1/ \|X\|^{\alpha}_{\alpha}]$. \footnote{In the MAR framework, we know from the deterministic recursive equations in Proposition 3 that the tail process $(X_h)$ tends to zero at $\pm \infty$. Thus it satisfies the so-called anticlustering condition in Basrak and Sergers (2009, Proposition 4.2).  In this framework, $\vartheta=\mathbb{P}[\sup_{h \leq -1} X_h \leq 1]$.} This change is due to the new conditioning that involves a density ratio of the type:
$$
\displaystyle \frac{\prod_{k=1}^t  \mathbbm{1}_{y_{t -k}<y_t}}{\mathbb{P}[y_{t-k}<y_t,  k=1,...,t]}=\prod_{k=1}^t \Big\{  \mathbbm{1}_{y_{t -k}<y_t/ \mathbb{P}[y_{t-k}<y_t \mid  y_0<y_t, \cdots, y_{t-1}<y_t]} \Big\},
$$
in which the last conditioning $y_t>y$ is not introduced for expository purpose.  This change of probability measure has no effect on the support of the distributions of $(Q_h)$ and $(X_h)$, in particular on the deterministic relationships satisfied by their components.  In other words,  Proposition 1 and Corollary 1 are also valid for the tail process associated with this first exceedance conditioning,  
that are $(Q_h)$,  $(U_h^*)$ and  $(V_h^*)$.  For instance we have:
\begin{align*}
U_h^*=0,  & \text{ if } h \geq 1-N^{*}, \\ 
V_h^*=0,  & \text{ if } h \leq -1-N^{*}, 
\end{align*}
where $N^{*}$ is the stochastic drift associated with $(Q_h)$. 


To summarize the results of Section 4, in the framework of MAR processes,  CEV theory can lead to different tail processes for which the distributions have sometimes a same support, but the law of the stochastic drift depends on the normalization and on the conditioning set. These differences can lend to fallacies and pitfalls when interpreting the results [see Dress and Janssen (2017) for a discussion].

\section{Pure Residual Plots}
Let us now explain how the previous asymptotic results on conditional extreme behaviours can be used to construct diagnostic tools to detect and analyze bubble episodes. In an online perspective of this use, we assume that the parameters in the causal/noncausal polynomials have been estimated from a set of observations $y_1,...,y_T$, and fixed at their estimated values later on, that is at date $t=T+1,T+2,...$ Then the diagnostic tools will be figures, plots,  or scalars computed in a sequential way at each exogenous time $t$ when a new observation arises. The detection will be based on the expected evolution of these tools with $t$. Different tools can be proposed.
\begin{enumerate}[$i)$]
	\item Gouri\'eroux, Hencic and Jasiak (2021), Section 4.5, propose to follow over time $t$ the isodensity curves of the conditional distribution of $(y_{t+1},y_{t+2})$ given $y_t$. This approach is difficult to extend to a horizon larger than 2 and such figures are not standard for practitioners. 
	\item Other authors [see e.g. Phillips, Shi and Yu (2015a, b), Giancaterini et al. (2025)] propose to follow scalar test statistics of a ``bubble hypothesis". However, by summarizing the tool to a scalar, a lot of information could be lost.\footnote{As usual, when a residual plot is replaced by a scalar portmanteau statistic.}
\end{enumerate}
In this section, we consider more standard diagnostic tools, that are conditional residual plots for the pure causal and noncausal components. 
\subsection{The problem}
In MAR processes, different types of errors are involved, that are the i.i.d. errors $\epsilon_t$ appearing in the strict moving average representation, the pure noncausal and causal components $u_t$, $v_t$, respectively, and the pure causal innovations $\eta_t$, say [see Section 4, Gouri\'eroux and Jasiak (2024)]. They all depend on observations and true value of parameters in $\Phi$ and $\Psi$, and can be approximated by replacing the parameters by consistent, asymptotically normal estimators computed on all observations $y_1,...,y_T$. Thus we can construct different types of residuals: $\hat{\epsilon}_{t,T}$, $\hat{u}_{t,T}$, $\hat{v}_{t,T}$, $\hat{\eta}_{t,T}$, say, that are doubly indexed by $t$ and $T$, and define different triangular arrays. The $\hat{\epsilon}_{t,T}$ and the $\hat{\eta}_{t,T}$, $t=1,...,T$, have been used in the literature to define several specification tests of the MAR hypothesis, generally based on different portmanteau statistics, each of them being computed on all the observations [see e.g. Fries and Zako\"ian (2019)]. Their asymptotic behaviour is analyzed, assuming that the number of observations $T$ increases to infinity. Thus these approaches do not distinguish the dates of extreme risks from the other dates. 

The pure residuals $\hat{u}_{t,T} $, $\hat{v}_{t,T} $ can be used in a different way, that is date by date, in a double asymptotic framework, when $T$ increases to infinity and a future date $t=T+1,...,T+h-1$ such that $y_t>y$, with $y$ large, \textit{i.e.}, $y$ tending to infinity. Let us now explain the asymptotic behaviour of $\hat{u}_{t,T}$, $\hat{v}_{t,T} $ and how these new conditional residual plots can be used. 

For expository purpose, we focus on a positive process $(y_t)$ and on pure normalized noncausal residuals at a date $t$ when an extreme observation may arise. 
\subsection{Pure noncausal residual plots}
The pure noncausal residuals at date $t$ are given by:
\begin{equation}
	\label{purenoncausal}
	\hat{u}_{t,T} =\hat{\Phi}_T(L)y_{t}=\Big(1,-\hat{\phi}_{1,T},\cdots, -\hat{\phi}_{p,T} \Big) \Big(y_{t},y_{t-1},\cdots, y_{t-p}\Big)',
\end{equation}
where the parameters are replaced by their estimators. At each given exogenous date $t$, we can associate a series of residuals normalized by the current value $y_t$. They are:
\begin{equation}
	\label{purenoncausal2}
\hat{U}_{t+h,t,T}=\hat{u}_{t+h,T}/y_t, \qquad h=-H,..., H,
\end{equation}
indexed by $h$. Then we have:
\begin{equation}
	\label{eq5.3}
	\hat{U}_{t+h,t,T}= \frac{\hat{u}_{t+h,T}-u_{t+h}}{y_t} +\frac{u_{t+h}}{y_t},
\end{equation}
where $$\frac{\hat{u}_{t+h,T}-u_{t+h}}{y_t}=-\Big(\hat{\phi}_{1,T}-\phi_1,..., \hat{\phi}_{p,T}-\phi_p \Big)\Big( \frac{y_{t+h-1}}{y_t}, ...,\frac{y_{t+h-p}}{y_t} \Big)'.$$
When $T$ is large,  and the future date $t$ is such that $y_t>y$, with $y$ large, we see that: 
\begin{align}
\frac{\hat{u}_{t+h,T}-u_{t+h}}{y_t} & \approx_{d} -\Big(\hat{\phi}_{1,T}-\phi_1,..., \hat{\phi}_{p,T}-\phi_p \Big) (X_{h-1},..., X_{h-p})', \\
\frac{u_{t+h}}{y_t}& \approx_{d} {U}_h,
\end{align}
where the approximation errors depend on the number of observations $T$ used for estimation, since $\Phi$ is replaced by $\hat{\Phi}_T$, as well as on the (large) value of $y$. Interestingly, if $h$ is such that $h \geq 1-N$, then ${U}_h=0$ by Corollary 1. Then it becomes deterministically known; in particular, this limit does not depend on the observed value of $y_{t+h}/y_t, h=-H,..., H$. More generally, under these asymptotic conditions and for standard estimation methods, such as approximate maximum likelihood, or generalized covariance approach,  we know that the asymptotically Gaussian variables $\sqrt{T}(\hat{\phi}_{1,T}-\phi_1,..., \hat{\phi}_{p,T}-\phi_p) \approx_d \kappa \sim N(0,\Omega_p)$, as well as the stochastic drift $N$ can be chosen independently of any finite number of  $y_{t+h}/y_t, h=-H,..., H$ around the exogenous future date $t$.  This can be used to derive the asymptotic distribution of the normalized residuals $\hat{U}_{t+h,t,T}, h=-H,..., H$. 

\begin{Proposition}
\label{band}
If $T$ tends to infinity and if at future date $t$, $y_t>y$, with $y$ tending to infinity sufficiently fast compared to $\sqrt{T}$, then 
conditional on $y_{t+h}/y_t, h=-H,..., H$,  we have:
$$
\sqrt{T} \Big[ \hat{U}_{t+h,t,T} \Big] \ \xrightarrow{d} -\Big[(\kappa_1,..., \kappa_p) (\frac{y_{t+h-1}}{y_t},..., \frac{y_{t+h-p}}{y_t})'\Big].
$$
with $h \geq 1-N$,  where $(\kappa_1,..., \kappa_p)'$ is a Gaussian vector with zero mean and a covariance matrix $\Omega_p$. 

If $(y_t)$ is MAR(1,1) Cauchy, and $y/\sqrt{T}$ tends to infinity, then the above weak convergence holds. 
\end{Proposition}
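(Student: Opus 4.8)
The plan is to start from the exact decomposition \eqref{eq5.3}, multiply through by $\sqrt T$, and split the result as
\[
\sqrt T\,\hat U_{t+h,t,T}=A_T+B_T,\qquad
A_T:=\sqrt T\,\frac{\hat u_{t+h,T}-u_{t+h}}{y_t},\quad
B_T:=\sqrt T\,\frac{u_{t+h}}{y_t}.
\]
I would show that, conditionally on the observed ratios $y_{t+h}/y_t$, $h=-H,\dots,H$, the estimation-error term $A_T$ carries the announced Gaussian limit while the pure-component term $B_T$ vanishes in probability, and then conclude by Slutsky's lemma.

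For $A_T$ I would use its explicit linear form
\[
A_T=-\sqrt T\,(\hat\phi_{1,T}-\phi_1,\dots,\hat\phi_{p,T}-\phi_p)\Big(\tfrac{y_{t+h-1}}{y_t},\dots,\tfrac{y_{t+h-p}}{y_t}\Big)'.
\]
Conditionally on the ratios, the row vector appearing in $A_T$ is fixed, so $A_T$ is a fixed linear functional of $\sqrt T(\hat\phi_T-\phi)$. The estimator being $\sqrt T$-consistent and asymptotically normal, $\sqrt T(\hat\phi_T-\phi)\xrightarrow{d}\kappa\sim N(0,\Omega_p)$, and, as recalled just before the statement, the limit $\kappa$ can be taken asymptotically independent of the tail process and of the stochastic drift $N$. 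Hence the conditional law of $A_T$ converges to that of $-(\kappa_1,\dots,\kappa_p)(y_{t+h-1}/y_t,\dots,y_{t+h-p}/y_t)'$, which is exactly the target limit.

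The main work is to prove $B_T=o_P(1)$, and this is where the regime restriction $h\ge 1-N$ and the joint rate $y/\sqrt T\to\infty$ are used. Since $u_{t+h}=\sum_{j\ge 0}b_j\,\epsilon_{t+h+j}$ only involves errors dated $\ge t+h$, while under $y_t>y$ the single--big--jump principle places the dominating error at date $t-N$, the constraint $h+N\ge 1$ forces $t-N<t+h$: the big jump is strictly earlier than every error entering $u_{t+h}$. Consequently $u_{t+h}$ is made of ordinary (non-dominating) errors only, and is stochastically bounded, $u_{t+h}=O_P(1)$; this is the quantitative form of $U_h=\Phi(\tilde L)X_h=0$ given by Corollary 1. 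Because $y_t\ge y$, it follows that
\[
|B_T|=\sqrt T\,\frac{|u_{t+h}|}{y_t}\le\frac{\sqrt T}{y}\,|u_{t+h}|=o(1)\cdot O_P(1)=o_P(1),
\]
using $y/\sqrt T\to\infty$. Combining this with the previous paragraph, Slutsky's lemma yields $\sqrt T\,\hat U_{t+h,t,T}\xrightarrow{d}-(\kappa_1,\dots,\kappa_p)(y_{t+h-1}/y_t,\dots,y_{t+h-p}/y_t)'$, conditionally on the ratios, as claimed.

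The delicate point to make fully rigorous is the uniform-in-$y$ tightness of $u_{t+h}$ on the event $\{h+N\ge 1\}$. One should condition on the location of the big jump and verify, via the single--big--jump decomposition of Proposition 1, that the remaining sum of ordinary errors entering $u_{t+h}$ has a proper limiting distribution with a finite $\delta$-moment. This upgrades the mere convergence $u_{t+h}/y_t\to U_h=0$ to the rate $O_P(1/y)$, which is precisely what the calibration $y/\sqrt T\to\infty$ is designed to exploit.
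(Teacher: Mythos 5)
Your proposal follows the same route as the paper --- the decomposition \eqref{eq5.3} into an estimation-error term, which conditionally on the ratios is a fixed linear functional of $\sqrt{T}(\hat{\phi}_{T}-\phi)$, plus the pure-component term $\sqrt{T}\,u_{t+h}/y_t$ --- but you should know that the paper never actually completes the second step: its Appendix A.4 only records that ``the asymptotic error of the convergence for large $y$'' is assumed negligible compared to the $1/\sqrt{T}$ estimation error, and then stops with ``(to be completed)''. Your argument supplies precisely the missing quantitative step. You make explicit where the regime restriction $h\geq 1-N$ enters: the single big jump sits at date $t-N<t+h$, hence outside the set of errors $\epsilon_s$, $s\geq t+h$, generating $u_{t+h}=\sum_{j\geq 0}b_j\epsilon_{t+h+j}$, so that $u_{t+h}$ is conditionally tight; and you make explicit where the calibration $y/\sqrt{T}\to\infty$ is used, via the bound $\sqrt{T}\,|u_{t+h}|/y_t\leq (\sqrt{T}/y)\,|u_{t+h}|=o_P(1)$. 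The residual technical point you flag (upgrading $u_{t+h}/y_t\to U_h=0$ to uniform-in-$y$ tightness of $u_{t+h}$ on the event $\{h+N\geq 1\}$) is exactly the point the paper defers to its unfinished appendix, so your proof is at least as complete as the paper's own and more informative about the mechanism. One interpretive caveat is worth stating explicitly in your write-up: conditionally on the ratios $y_{t+j}/y_t$, $j=-H,\dots,H$, the quantity $u_{t+h}/y_t=\Phi(\tilde{L})(y_{t+h}/y_t)$ is itself (when $h-p\geq -H$) a deterministic function of the conditioning variables, so your statement $B_T=o_P(1)$ must be read under the joint law given $y_t>y$ and $h\geq 1-N$; the conditional-on-ratios convergence in distribution then holds on a set of ratio realizations whose probability tends to one, which is the sense in which the proposition should be understood.
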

Note that we do not know \textit{a priori} what is a large value of $y$ and what is the realization of $N_t$ for date $t$. 
\begin{proof}
	See Appendix A.4.
\end{proof}
We can use this result to plot these sequences of normalized residuals (or transformations of such residuals),  possibly with their estimated confidence bands at 95\%, that are:
$$
\Big[ \hat{{U}}_{t+h,t,T} \pm \frac{1.96}{\sqrt{T}} \hat{\sigma}_{t+h,t,T}\Big]=\Big[ \frac{\hat{u}_{t+h}}{y_t} \pm \frac{1.96}{\sqrt{T}} \hat{\sigma}_{t+h,t,T}\Big],
$$
where $\hat{\sigma}_{t+h,t,T}^2=(\frac{y_{t+h-1}}{y_t},..., \frac{y_{t+h-p}}{y_t}) \hat{\Omega}_{p,T}(\frac{y_{t+h-1}}{y_t},..., \frac{y_{t+h-p}}{y_t})'$, and $\hat{\Omega}_{p,T}$ is a consistent estimator of ${\Omega}_{p}$.

\begin{Remark} The confidence intervals have been derived separately for each $h$, but joint confidence regions can be easily derived due to the asymptotic normality. 
\end{Remark}

At each date $t$, we get several confidence intervals $CI_{t+h, t,T}(u)$ and $CI_{t+h, t,T}(v)$, for the pure noncausal and pure (normalized) causal residuals.  Then at each exogenous date $t$, and maturity $h$, we can introduce indicator functions:
$$
\hat{I}_{t,h,T}(u)= \begin{cases}
1, & \text{ if the observations are such that } 0 \in CI_{t+h, t,T}(u)\\
0,& \text{ otherwise }
\end{cases},
$$
and a similar definition for $\hat{I}_{t,h,T}(v)$ for the pure causal normalized residuals.  Thus,  at each date $t$, we get an adjacency matrix of dimension $2 \times (2H+1)$ that summarizes the behaviour of these pure residuals.  

By Proposition \ref{band}, we expect the following pattern for the estimated adjacency matrices and their summaries.  If the MAR model is well specified and the date $t$ such that $y_t>y$, with $y$ large enough, then the estimated adjacency matrix is expected to have
a first row (resp. second row) with values 1 first (resp. 0 first), followed by 0 values (resp. 1 values) and coherent breaking maturities corresponding to the opposite $-N_t$ of the maturity of the peak.  Note also that, when such an estimated date $-\hat{N}_{t,T}$ appears, it can depend on $t$, under well-specified MAR process.  
Indeed, even if the distribution of $N_t$ given $y_t>y,$ $y$ large, does not depend on $t$, its realization $N_t$ can differ with the environment at date $t$. 

Similarly, we can also define the pure causal normalized residual by $\hat{V}_{t+h,t,t}/y_t$. 
Obviously, if a process $(y_t)$ is pure noncausal (resp. pure causal), then there is no need define its pure causal (resp. noncausal) normalized residual process. 
 
\subsection{Alternatives}
The results in Proposition 11 are valid if the MAR($p,q$) model is satisfied and at date $t$ with $y_t>y$ large. Let us now discuss what will arise under an alternative MAR($p^a, q^a$) with $p^a\geq p, q^a\geq q$.  Under this alternative,  we have:
$$
u_t^a= \Phi^a(L)y_t, \qquad v_t^a=\Psi^a(L^{-1}) y_t,
$$
and
$$
y_t=L^{q^a}b_1^a(L) v_t^a+b_2^a(L)u_t^a.
$$
We can still apply Proposition 1 and define the tail processes:
$$
(X^a_h),  \qquad (U_h^a)= (\Phi^a(\tilde{L})X^a_h),  \qquad (V_h^a)= (\Psi^a(\tilde{L}^{-1})X^a_h).
$$

Let us now consider what is arising when we apply Proposition 1 with the possibly mis-specified MAR($p, q$) model.  There are two effects:
\vspace{0.5em}

\noindent $i).$ The lag polynomials are mis-specified with the coefficients $\phi_j, \psi_j$ replaced by pseudo-true values $\phi^*_j, \psi^*_j$, say.  We denote by $\Phi^*(L)$,  $\Psi^*(L^{-1})$ these pseudo lag polynomials. 
\\

\noindent $ii). $ The pure tail process $(U^*_h)$,  $(V^*_h)$, computed as if the MAR($p,q$) model was satisfied, are such that:
$$
U^*_h=\Phi^*(\tilde{L})X^a_h, \qquad V^*_h=\Psi^*(\tilde{L}^{-1})X^a_h,
$$
under the alternative.  They differ from $(U_h)$,  $(V_h)$.  More precisely,  by eq. \eqref{pqvux},  we have:
\begin{align*}
U^*_h&=\Phi^*(\tilde{L})\Big[ \tilde{L}^{q^a} b_1^{a}(\tilde{L})V_h^a+b_2^{a}(\tilde{L}) U_h^a  \Big],\\
V^*_h&=\Psi^*(\tilde{L}^{-1})\Big[ \tilde{L}^{q^a} b_1^{a}(\tilde{L})V_h^a+b_2^{a}(\tilde{L}) U_h^a  \Big].
\end{align*}
In particular  these mis-specified pure tail processes will not take zero value for some $h$ at the difference of the well-specified $U_h^a$, $V_h^a$. 
\subsection{An illustration}
To facilitate the comparison with the applied literature on causal-noncausal process, we consider the same series of futures on corn used in Gouri\'eroux et al. (2021), estimated as a MAR(1,1) process with $\hat{\Phi}_T=-0.006$, $\hat{\Psi}_T=0.977$. \footnote{The dataset is downloaded from Professor Joann Jasiak's website: www.jjstats.com/papers/cornprice.txt} The series is plotted in Figure 3 and we will follow the conditional normalized causal and noncausal residual plots during the bubble episode.  The evolution of these residual plots is provided in Figure 2. 

\begin{figure}[H]
\centering
\includegraphics[scale=0.2]{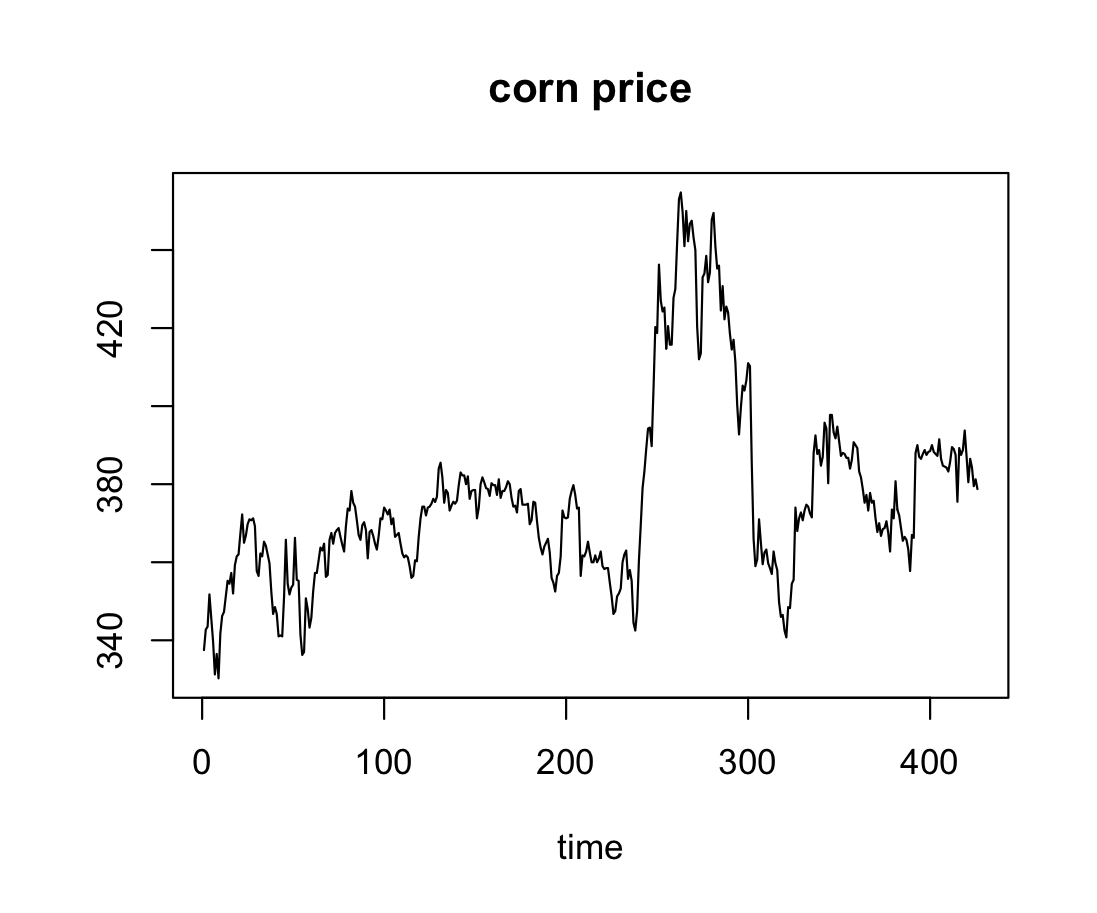}
\caption{Corn Futures Price between July 02, 2018,
and February 03, 2020. }
\end{figure}
In mid June 2019 (corresponding roughly to $t=260$ in Figure 3), corn prices experienced a significant surge, driven by a US Department of Agriculture report on June 26 about an unexpectedly slashed crop forecasts.  However, price fell sharply in August due to expectation of increased supply from the US and South America,  along with a faster-than-expected harvest in the US. 

\begin{figure}[H]
\centering
\includegraphics[scale=0.2]{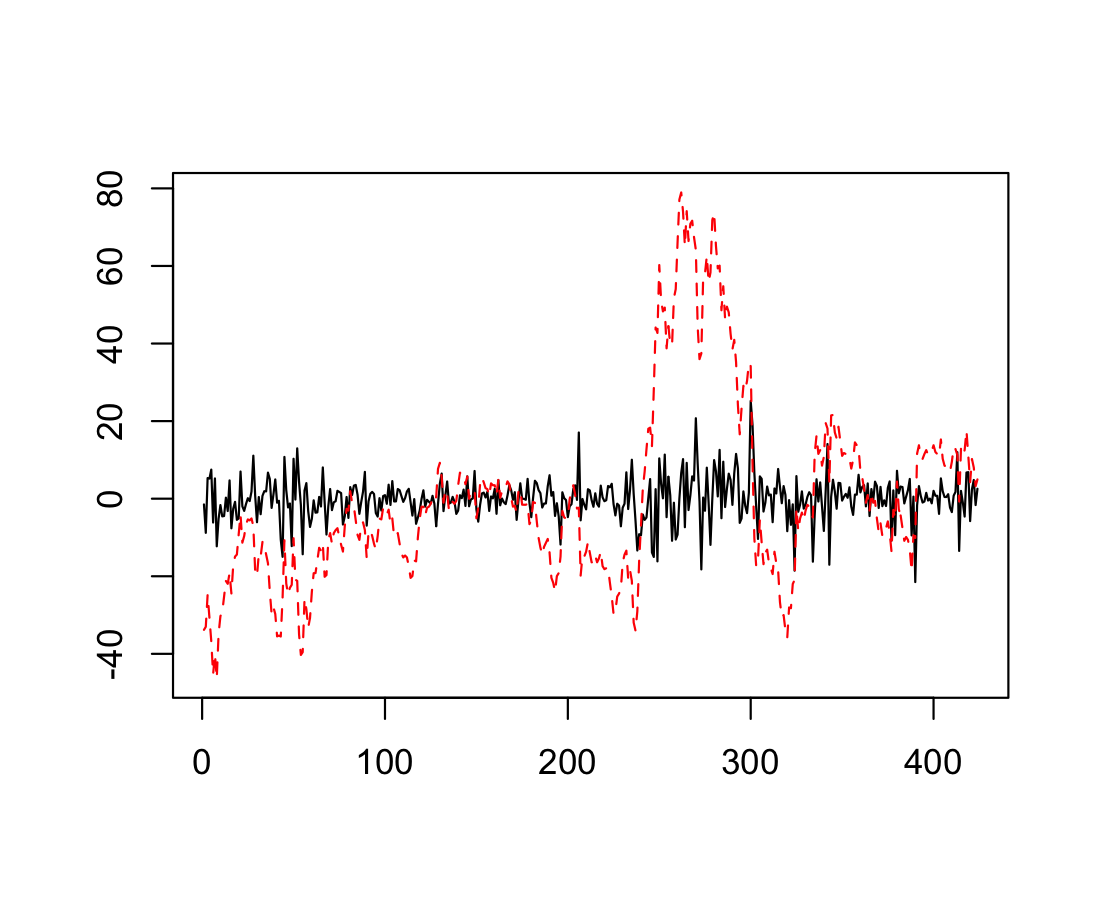}
\caption{Causal and Noncausal Residuals.  Red dashed line: noncausal residual $(\hat{u}_{t,T})$.  Black full line: causal residual $(\hat{v}_{t,T})$.}
\end{figure}
Let us now illustrate the normalized noncausal (resp. causal) residuals $\hat{U}_{t+h,t,T}$ (resp.$\hat{V}_{t+h,t,T}$) as functions of horizon $h$.  We choose three dates $t=210,  240,  350$,  where $t=240$ corresponds roughly to the peak of the corn future prices.  
Then we compute  $|\hat{U}_{t+h,t,T}|$ and $|\hat{V}_{t+h,t,T}|$ using eq. \eqref{purenoncausal2},  for $h=-30,...,+30$.  Figure  5  provide these plots for the three dates,  respectively. We take the absolute value in order to better visualize those values that are close to zero. The figures for $\hat{U}_{t+h,t,T}$ (resp. $\hat{V}_{t+h,t,T}$) are on a same scale to make more apparent the values close to zero.

\begin{figure}[H]
\centering
\includegraphics[width=12cm, height=10cm]{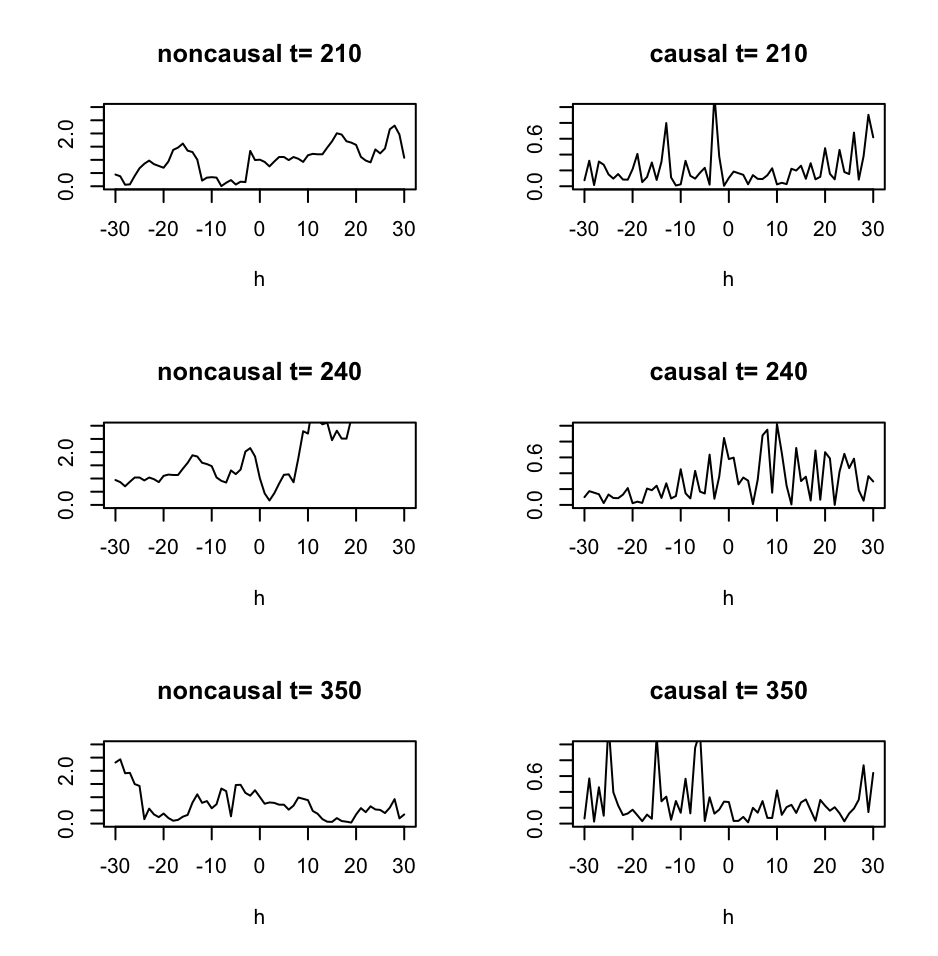}
\caption{Normalized Noncausal and Causal Residuals $|\hat{U}_{t+h,t}|$ and $|\hat{V}_{t+h,t}|$  with $h=-30,...,30$.   Upper panel: $t=210$;  Mid panel: $t=240$;  Lower panel: $t=350$.}
\end{figure}

According to Corollary 1,  $U_h$ is zero and $\hat{U}_{t+h,t}$ is also expected to be close to zero,  if $h+N_t \geq 1$,  that is when $h$ is large and positive.   This is the case in the noncausal plot in the lower panel ($t=350$,  corresponding to $N_t \approx 80$),  as well as the first half of the causal plot in the mid panel ($t=240$,  corresponding to $N_t \approx 0$).  The noncausal plot in the upper panel ($t=210$,  corresponding to $N_t \approx -30$),  on the other hand,  is not close to zero.  

Note that the information content of these normalized residual plots also depends on the goodness-of-fit of the model to the data at hand.  In the following,  we consider a limiting, hypothetical situation without misspecification,  nor estimation error (that is,  $\kappa_1$ in Proposition 1 is zero,  and the process of normalized residuals $(\hat{U}_{t+h, t, T})$ converges in distribution to process $(X_h)$).  We use the parameter values estimated by Gouri\'eroux et al. (2021) to simulate a MAR(1,1) Cauchy process,  and compute the residual plots on these simulated data.  We plot these residual plots for three dates,  respectively before, during and after the biggest realized bubble period.  Figures 6 and 7 are the counterparts of Figures 4 and 5, respectively.  

\begin{figure}[H]
\centering
\includegraphics[scale=0.2]{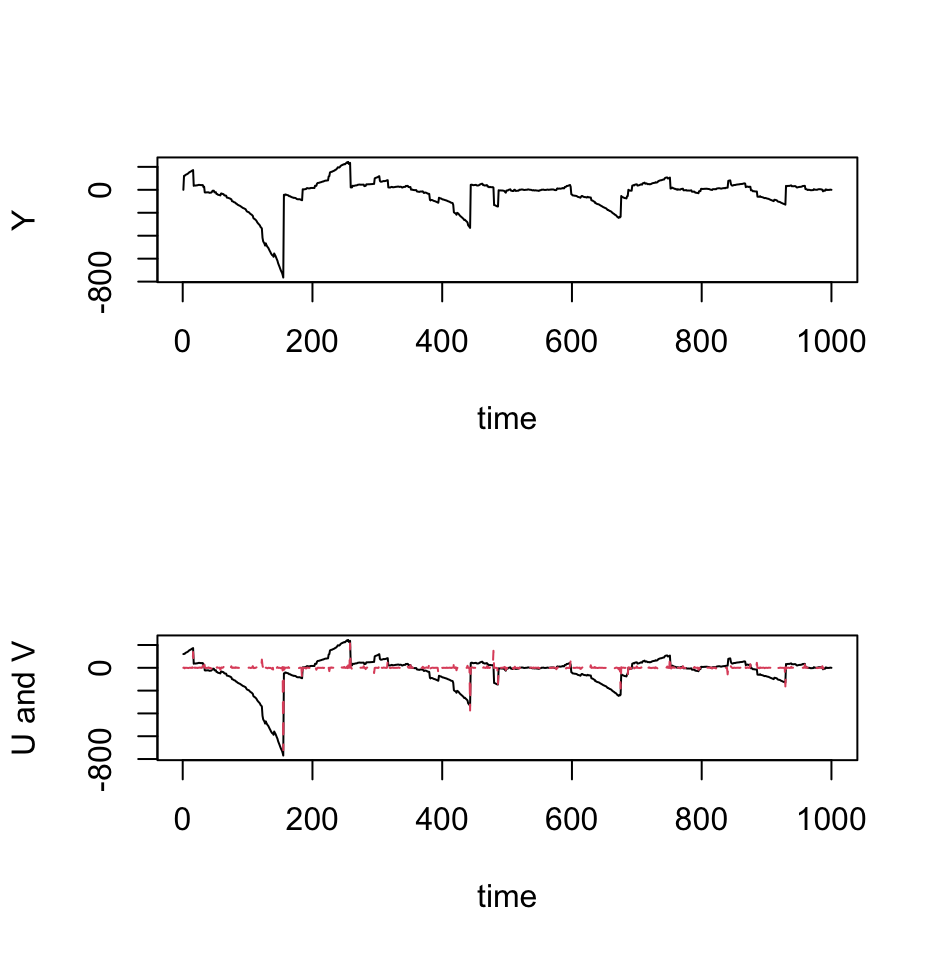}
\caption{Simulated trajectory of a MAR(1,1) Cauchy proces with ${\Phi}=-0.006$, ${\Psi}=0.977$ . }
\end{figure}

\begin{figure}[H]
\centering
\includegraphics[width=12cm, height=10cm]{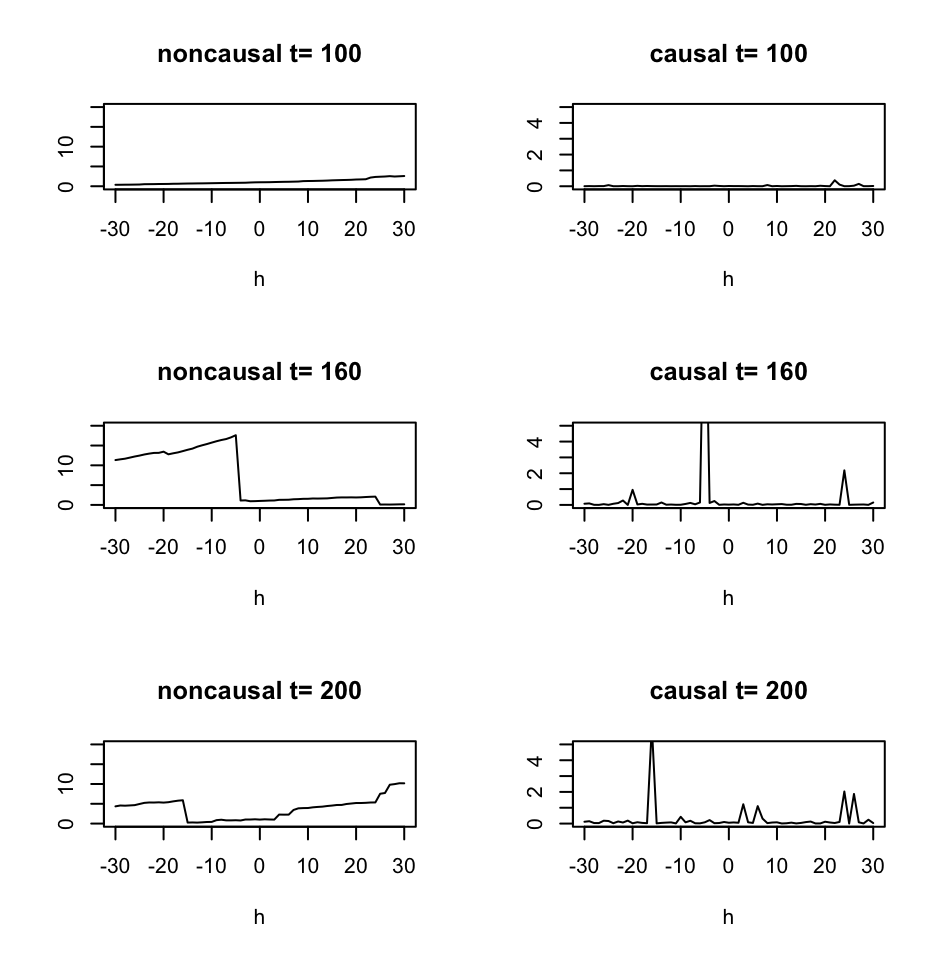}
\caption{Normalized Noncausal and Causal Residuals $|\hat{U}_{t+h,t}|$ and $|\hat{V}_{t+h,t}|$  with $h=-30,...,30$ computed on the simulated data.    Upper panel: $t=210$;  Mid panel: $t=240$;  Lower panel: $t=350$.}
\end{figure}

%


\section{Remarks and Further Developments}

All results of our paper have been derived and discussed for univariate causal-noncausal processes. However, there is an increasing literature on multivariate causal/noncausal processes both from the theoretical perspective [Lanne and Saikkonen (2013),  Gouri\'eroux and Jasiak (2016, 2024), Fries and Zako\"ian (2019), Davis and Song (2020), Fries (2022),  De Truchis et al. (2025)] and applied perspective [Cubbada et al. (2019, 2023)]. It is known that conditional extreme value theory is more difficult to develop in the multivariate framework. In our special framework the following questions will have to be solved:

i) How to account for the dimensions and dynamics of the pure causal and pure noncausal components?

ii) How to define conditioning sets that allow for deriving tail processes, while being interpretable? Do they have to be written by component of the process, on specific combinations (portfolios), or on some underlying factors?

iii) How to deal with the possibility of different tail indexes in the errors,  or the possibility of cointegrated bubbles?
\newpage

\newpage

\appendix
\section{Appendix}
\subsection{Proof of Lemma 1}
 i) We have:

$
\begin{array}{lcl}
u_t = \Phi (L) y_t & = & \Phi (L) (\Sum^{+\infty}_{h=-\infty} c_h \epsilon_{t-h}) \\
                   & = & \Sum^p_{j=0} \varphi_j [ \Sum^{+\infty}_{h=-\infty} c_h \epsilon_{t-h-j}] \\
                   & = & \Sum^p_{j=0} \varphi_j ( \Sum^{+\infty}_{h=\infty} c_{k-j} \epsilon_{t-h}), \; \mbox{by a drift on the index},\\
                   & = & \Sum^{+\infty}_{k=-\infty} [(\Sum^p_{j=0} \varphi_j c_{k-j}) \epsilon_{t-k}] \\
                   & = & \Sum^{+\infty}_{h=-\infty} \{ [\Phi (\tilde{L}) c_h] \epsilon_{t-h}\}.
\end{array}
$

Therefore we deduce that: $ b_h = \Phi (\tilde{L}) c_h$.

ii) Similarly, we have:

$$
\begin{array}{lcl}
v_t = \Psi (L^{-1}) y_t & = & \Psi (L^{-1}) [\Sum^{+\infty}_{h=-\infty} c_h \epsilon_{t-h}]\\
                        & = & \Sum^q_{j=0} [\psi_j (\Sum^{+ \infty}_{k=-\infty} c_{k+j} \epsilon_{t-k})] \\
                        & = & \Sum^{+ \infty}_{k=-\infty} [ ( \Sum^\infty_{j=0} \psi_j c_{k+j}) \epsilon_{t-k}] \\
                        & = & \Sum^{+\infty}_{h=-\infty} \{ [\psi (\tilde{L}^{-1}) c_h] \epsilon_{t-h}\}.
\end{array}
$$

We deduce that: $a_h = \Psi (\tilde{L}^{-1}) c_h$.

iii) These equalities correspond to the truncations of the $(a_h)$ and $(b_h)$ moving-average series.

\subsection{Extremal Behaviour of Moving Average Processes when the Conditioning Event is a Large Exceedance}

\subsubsection{Extremal behaviour of the MAR(2,1)}

Let us assume that:

\begin{equation*}
(1-\lambda_1 L) (1-\lambda_2 L) (1-\mu L^{-1}) y_t = \epsilon_t,
\end{equation*}%

\noindent where $\lambda_1, \lambda_2, \mu $ are real of modulus smaller than one, or equivalently, %

\begin{equation*}
y_{t}=\sum_{h\in \mathbb{Z}}c_{h}\epsilon _{t-h},
\end{equation*}%

with

\begin{eqnarray*}
c_{h} &=&\frac{1}{(1-\lambda_1 \mu) (1-\lambda_2 \mu)}\mu^{-h}, h \leq 0, \\
c_{h} &=&\frac{1}{(1-\lambda_1 \mu) (1-\lambda_2 \mu)} b_h, h \geq 0, \\
\end{eqnarray*}%

and%

\begin{equation*}
b_{h}=\frac{\lambda^{h+1}_1 (1-\lambda_{2}\mu )-\lambda_{2}^{h+1}(1-\lambda_{1}\mu )}{%
\lambda_{1}-\lambda_{2}}.
\end{equation*}.

We have:

\begin{equation*}
\left\Vert c\right\Vert _{\alpha }^{\alpha }=\frac{1}{(1-\lambda_{1}\mu
)^{\alpha }(1-\lambda_{2}\mu)^{\alpha }}\left( \frac{1}{1-\mu^{\alpha }}%
+\gamma _{\alpha }\right),
\end{equation*}%

with%

\begin{equation*}
\gamma _{\alpha }=\sum_{h=1}^{\infty }b_{h}^{\alpha },
\end{equation*}%

and%

\begin{eqnarray*}
p_{j} &=&\left( \frac{1}{1-\mu ^{\alpha }}+\gamma _{\alpha }\right)
^{-1}\mu^{\alpha j},\quad j \leq 0, \\
p_{j} &=&\left( \frac{1}{1-\mu ^{\alpha }}+\gamma _{\alpha }\right)
^{-1}b_{j}^{\alpha },\quad j\geq 0. \\
\end{eqnarray*}

\subsubsection{Closed form sequence of moving average coefficients of the MAR($p,q$) process}

The aim of this subsection is to derive the closed form expression of the moving average coefficients $(c_h)$. This is a consequence of Lemma 1 iii), that is the fact that the $(c_h)$ satisfy backward/forward recursive equations. Then they can be derived from the roots, i.e. $\lambda_i, \mu_j$, their multiplicity orders and a set of initial/terminal conditions. More precisely, let us denote by $\lambda_i, i=1, \ldots, p^*, \mu_j, j=1,\ldots, q^*$ the distinct values of the $\lambda, \mu,$ respectively,  and $n_i, m_j$ their multiplicity orders, then the moving average coefficients have the form of exponential times polynomial functions of $h$:

\begin{eqnarray*}
  c_h & =& \sum^{p^*}_{i=1} [\lambda^h_i (\sum_{k=0}^{n_i+1} \alpha_{ik} h^k)], \mbox{for}\; h \geq p-1, \\
  c_h & =& \sum^{q^*}_{i=1} [\mu^{-h}_i \sum_{k=0}^{m_j+1} \beta_{jk} (-h)^k], \mbox{for}\; h \leq - q.
\end{eqnarray*}

These expressions are jointly valid for the indexes $h$ such that $1-p \leq h \leq -1+q$. By writing the equality of the two expressions for these values of $h$, we get a system of $p+q$ equations that can be solved to get the values of the $p+q$ parameters $\phi, \psi$.


\subsection{Extremal Behaviour of MAR  Processes when the Conditioning Event is a Large Value} 

We provide the limiting behaviour of the distribution of process $(y_t)$, conditional on a large value $y$ at date $t$. Let us consider the MAR$(p,q)$ process:

\begin{equation*}
\Phi \left( L\right) \Psi \left( L^{-1}\right) y_{t}=\epsilon _{t},
\end{equation*}
where:
\begin{eqnarray*}
\Phi \left( L\right) &=&1-\phi _{1}L-...-\phi _{p}L^{p}, \\
\Psi \left( L^{-1}\right) &=&1-\psi _{1}L^{-1}-...-\psi _{q}L^{-q},
\end{eqnarray*}%
where the roots of polynomial $\Phi $ and $\Psi $ are all outside the unit circle,  and their coefficients have alternating signs:
\begin{equation*}
\phi _{1}>0,\phi _{2}<0,\phi _{3}>0,...\quad \text{and}\quad \psi
_{1}>0,\psi _{2}<0,\psi _{3}>0,...
\end{equation*}%

These conditions ensure that the coefficients $c_h$ of the two-sided moving average representation of $y_t$ are nonnegative. We also assume that $\epsilon_t$ is almost surely positive.  

\vspace{1em}

Let us now prove that the conditional distribution of $(y_{t+h})_h/y_t$ converges to a discrete distribution with masses at $c_{n+h}/c_h$ as $y_t$ increases to infinity.  It suffices to show that for any integer $K$,  the finite dimensional process $(y_{t+h}, h=-K,..., K)/y_t$ converges.  

For $\varepsilon>0$,  we can find a positive integer $M$,  larger than $K$,  and such that $\displaystyle \frac{\sum_{h=-M}^M  c_h^{\alpha}}{\sum_{h=-\infty}^{\infty}  c_h^{\alpha}}>1-\alpha$.  In other words,   the total contribution of the $2M+1$ terms $\epsilon_{t+h}, h=-M,..., M$ to the tail of $y_t$ is at least $1-\varepsilon$.  Then we write $y_t$ into:
$$
y_t= \sum_{h=-M}^{M} c_h \epsilon_{t-h}+ \sum_{|h|>M}  c_h \epsilon_{t-h} . 
$$
The $2M+2$ terms on the right hand side are independent and have equivalent,  Paretian tails.  Thus we can apply Lemma $2'$ in Online Appendix B.3,  which says that, as $y_t$ increases to infinity,  the joint distribution of:
$(c_h\epsilon_{t-M},..., c_h\epsilon_{t+M}, \sum_{|h|>M}  c_h \epsilon_{t-h})/y_t$ converges to a multinomial distribution.  Thus the conditional distribution of $(y_{t+h}, h=-K,..., K)/y_t$ also converges to a discrete distribution with point masses at $c_{n+h}/c_n$.

\subsection{Asymptotic Distribution of the Residuals}
Proposition 11 assumes that the asymptotic error of the weak convergence in Proposition 1 for large $y$ is negligible compared to the asymptotic error due to the number $T$  of observations, which is known to be of order $1/\sqrt{T}$ under standard regularity conditions. The magnitude of the asymptotic error in $y$ in Proposition 1, however, is more complicated and requires more assumptions, such as in eq. \eqref{limiteqtail}, at which rate the ratio of pdf's $\frac{f_{\epsilon_t+\psi \epsilon_{t+1}(z) }}{f_{\epsilon}(z)}$ converges to its limiting value $1+\psi^\alpha$, as $z$ increases to infinity. This kind of property is called second-order regular variation in the extreme value literature and is well beyond the scope of the paper. However, from the proof of Lemma 2 given in Online Appendix B.1, we see that the weak convergence is based on the fact that for a given pdf $f$, the function $|s|f(s r)$ is the pdf of the random variable $\frac{R}{s}$, where $R$ is a random variable with pdf $f$. This latter random variable converges weakly to the point mass at zero, at the rate of $1/s$  as $s$ increases to infinity. As a consequence, asymptotic error of the weak convergence in Proposition 1 for large $y$ is at least of order $1/y$, if not slower. 

Note that this rate of ${1}/{y}$ can be attained in some cases. For instance, in the Cauchy MAR(1,1), we have explained in section 4.2.1 that the convergence rate is exactly ${1}/{y}$.  As a consequence, Proposition 11 holds, if $1/y$ goes to zero more quickly than $1/\sqrt{T}$. That is, in a double asymptotic framework where both $y$ and $T$ increase to infinity, the threshold $y$ should satisfy: $y/\sqrt{T} \rightarrow \infty$. 

\section{Online Appendix B:  Additional Proofs and Technical Details}
\subsection{Proof of Lemma 2}
\subsubsection{ Sketch of the proof}

Let us consider the change of variable $(S=Z_1+Z_2,R=\frac{Z_1}{Z_1+Z_2})$. The Jacobian is $|S|$, and the joint density of $(S, R)$ is:
\begin{equation}
	|s|f_1(sr)f_2((1-r)s),
\end{equation}
where $f_1$ and $f_2$ are the p.d.f.'s of $Z_1$ and $Z_2$, respectively. Thus the conditional distribution of $R$ given $S$ is:
\begin{equation}
	\label{conditionalrs}
	\ell(r|s)=|s|	\frac{f_1(sr)f_2((1-r)s)}{f(s)},
\end{equation}
where $f(\cdot)$ is the density of $Z_1+Z_2$.

Consider the conditional distribution \eqref{conditionalrs}. When $r$ is close to 0, we have: 
\begin{equation}
	\label{approximationrs}
\frac{f_2((1-r)s)}{f(s)} \approx \frac{s^{\alpha+1}}{(1+\xi) (1-r)^{\alpha+1}s^{\alpha+1}} = \frac{1}{(1+\xi) (1-r)^{\alpha+1}}. 
\end{equation} 
Because $(1-r)^{\alpha+1} \approx 1$ when $r$ is close to zero, eq. \eqref{conditionalrs} becomes:
$\ell(r|s) \approx \frac{1}{1+\xi}  |s|f_1(sr)$. 

Similarly, for $r \approx 1$, the conditional distribution becomes:
\begin{equation}
	\label{approximationrs2}
	\frac{f_1(rs)}{f(s)} \approx \frac{1}{1+\xi} \frac{1}{r^{\alpha+1}} ,
\end{equation}
hence $	\ell(r|s) \approx \frac{\xi}{1+\xi}  sf_2((1-r)s)$. 

Then we remark that $|s|f_1(sr)$ (resp. $|s|f_1(s(1-r))$ ) is the density of $\frac{R}{s}$ (resp. $\frac{1-R}{s}$). When $s$ goes to infinity, $\frac{R}{s}$ converges in probability to zero; it converges also weakly to the constant variable at 0. Thus the corresponding density converges weakly to the point mass at zero, in the sense that:
	$$
	\int s f_1(sr) g(r) \mathrm{d} r\rightarrow g(0),
	$$  
as  $s$ increases to infinity, 	for any integrable continuous,  bounded function $g$.  Here,  the domain of integration could be the entire real domain,  if $Z_1, Z_2$ are real valued,  or can be the domain of positive numbers,  if $Z_1$ and $Z_2$ are positively valued.  
	
\begin{Remark} The sequence of p.d.f.'s $|s|f_1(sr)$, indexed by $s$, is called Dirac sequence of measures in the literature, see e.g.  Kanwal (1998), section 3.3. 
\end{Remark}
\subsubsection{Formal Proof of Lemma 2} 
First, recall that the regular variation property $f(x)=\frac{l(x)}{x^{\alpha}}$ holds actually uniformly on any open set [see  Resnick (2008), Proposition 0.5] so long as it holds pointwise. Therefore, properties \eqref{approximationrs} (resp.  \eqref{approximationrs2}) also holds uniformly in $r$ for any $r$ such that $|r|<\delta$ (resp. $|r-1|<\delta$). 
when $s$ goes to infinity. 

Thus for any given $\epsilon>0$, we can choose a suitable $\delta>0$ ($\delta$ small) and a value $s_0>0$ ($s_0$ large) such that for any $s>s_0$,  
\begin{equation}
	\label{firstterm}
	|\frac{f_2(s(1-r))}{f(s)}-\frac{\xi}{1+\xi} |<\epsilon,
\end{equation}
for any $r $ such that $|r|<\delta$, and similarly,
\begin{equation}
	\label{secondterm}
	|\frac{f_1(rs)}{f(s)}- \frac{1}{1+\xi}|< \epsilon,
\end{equation}
 for any $r$ such that $|r-1|<\delta$. 

Let us now consider a function $g(\cdot)$ that is integrable, continuous, and bounded.  By continuity,  we can also assume, without generality, that $|g(r)-g(0)|<\epsilon$ for any $|r|<\delta$, and similarly $|g(r)-g(1)|<\epsilon$ for any $|r-1|<\delta$. Then we write the integral into three terms:
\begin{align*}
\int \ell(r|s)g(r) \mathrm{d}r&=\int_{|r|<\delta} \ell(r|s)g(r) \mathrm{d}r + \int_{|r-1|<\delta} \ell(r|s)g(r) \mathrm{d}r + \int_{|r| \geq \delta, |r-1| \geq \delta} \ell(r|s)g(r) \mathrm{d}r. 
\end{align*}
Let us evaluate separately the three terms. If $|r|<\delta$, then by eq. \eqref{approximationrs}, 
\begin{align}
|\int_{|r|<\delta} \ell(r|s)g(r) \mathrm{d}r -\frac{\xi}{1+\xi} g(0)| & \leq \int_{|r|<\delta} \ell(r|s)|g(r)-g(0)| \mathrm{d}r + |g(0)|  \Big| \int_{|r|<\delta} \ell(r|s) \mathrm{d}r -\frac{\xi}{1+\xi}\Big| \nonumber \\
&  \leq \epsilon \int_{|r|<\delta} \ell(r|s)  \mathrm{d}r + |g(0)|  \int_{|r|<\delta}  |s|f_1(sr)	\big|\frac{f_2((1-r)s)}{f(s)}-\frac{\xi}{1+\xi}\big|\mathrm{d}r \nonumber \\
& \qquad + |g(0)| \frac{\xi}{1+\xi} \int_{|r|\geq \delta}  |s|f_1(sr)	 \mathrm{d}r \nonumber \\
& \leq \epsilon+|g(0)|\epsilon+ |g(0)| \frac{\xi}{1+\xi} \int_{|z|\geq s\delta}   f_1(z)	 \mathrm{d}z, \nonumber
\end{align}
where in the last inequality we have used the change of variable $z=rs$. By tending $|s|$ to infinity, we have: $|\int_{|z|\geq s\delta}   f_1(z)	 \mathrm{d}z| \rightarrow 0 .$ Thus
\begin{equation}
		\label{around0}
\Big|\int_{|r|<\delta} \ell(r|s)g(r) \mathrm{d}r -\frac{\xi}{1+\xi} g(0) \Big|  \leq (2+|g(0)|)\epsilon, 
\end{equation}  for $|s|$ large enough. 

Similarly, we have: 
\begin{equation}
	\label{around1}
\Big|\int_{|r-1|<\delta} \ell(r|s)g(r) \mathrm{d}r -\frac{1}{1+\xi} g(1) \Big| \leq (2+|g(1)|)\epsilon,
\end{equation}
 for $|s|$ large enough. 
 
It suffices now to check that:
\begin{equation}
	\label{outside0and1}
	\int_{|r| \geq \delta, |r-1| \geq \delta} \ell(r|s)g(r) \mathrm{d}r \leq \epsilon,
\end{equation}
for $s$ large enough. This is due to the fact that outside 1 and 0, we have, uniformly in $r$: 
$\ell(r|s) \rightarrow 0$ when $s$ goes to infinity.  By the dominating convergence theorem, \eqref{outside0and1} is satisfied for $|s|$ large enough. 

As a consequence, by combining \eqref{around0}, \eqref{around1} and \eqref{outside0and1}, we have shown that:
$$
\int \ell(r|s)g(r) \mathrm{d}r \longrightarrow \Big[\frac{\xi}{1+\xi} g(0)+\frac{1}{1+\xi} g(1)\Big],
$$
as $|s|$ goes to infinity. 
\subsection{Proof of Lemma \ref{ultimatemonotone}}
First,  by Cline (1983, Theorem 2.3),  we have: 
$$
\lim_{z \to +\infty } \frac{S_{u}(z)}{S_{\epsilon}(z)} =
1+\psi^{\alpha}+\psi^{2\alpha}+\cdots =\frac{1}{1-\psi^{\alpha}},
$$
where $S_u$ (resp.  $S_{\epsilon}$) denotes the survival function of $u$ (resp. $\epsilon$).  

Then,  since $f_{\epsilon}$ is equivalent to an ultimately monotone function,  by the Monotone Density Theorem [Bingham, Goldie, Teugels (1989), Theorem 1.7.2,  page 39],  we have:
\begin{equation}
	\label{limiteqtail}
\lim_{z \to +\infty } \frac{S_{\epsilon}(z)}{f_{\epsilon}(z)} \frac{\alpha}{z} =1 ,
\end{equation}
 

Similarly,  we have:
$$
\lim_{z \to +\infty }  \frac{S_u(z)}{f_u(z)}\frac{\alpha}{z} =1.
$$

Combining the above three limits leads to:
$$
\lim_{z \to +\infty } \frac{f_{u}(z)}{f_{\epsilon}(z)} =\frac{1}{1-\psi^{\alpha}},  \text{ if } \psi \in (0,1).
$$
\subsection{Extension of Lemma 2}
Lemma 2 can be easily extended to include any finite number of independent variables. 
We have:

\textbf{Lemma $2'$: } If $Z_1,..., Z_n$ are independent,  with Paretian tails $f_i(z)=z^{-\alpha-1}l_i(z), i=1,..., n$,  where $l_1, ...,  l_n$ are slowly varying functions,  and if moreover they have equivalent p.d.f.'s: 
$$
\lim_{z \to \infty} \frac{f_1(z)}{f_i(z)} = \frac{\xi_1}{\xi_i}>0,  i=2,..., n,
$$
then the conditional distribution of: $$(\frac{Z_1}{Z_1+\cdots+Z_n}, \frac{Z_2}{Z_1+\cdots+Z_n}, \cdots, \frac{Z_n}{Z_1+\cdots+Z_n} ),$$ given $ S=Z_1+\cdots+Z_n=s$ converges weakly to the multinomial distribution with probabilities: $$\frac{\xi_1}{\xi_1+\cdots+\xi_n}, \frac{\xi_2}{\xi_1+\cdots+\xi_n}, ..., \frac{\xi_n}{\xi_1+\cdots+\xi_n}$$  respectively,  as $s$ increases to infinity. 

The proof of Lemma $2'$ has the same spirit as Lemma 2 and is omitted.  Lemma $2'$ leads immediately to Proposition \ref{h12}. 

\subsection{Proof of Proposition \ref{h12}}
We write:
\begin{align*}
y_t&=\frac{1}{1-\phi \psi} (v_t+ \psi u_{t+1})=\frac{1}{1-\phi \psi} (v_t+ \psi \epsilon_{t+1}+  \psi^2 u_{t+2}), \\
y_{t+1}&=\frac{1}{1-\phi \psi} (\phi v_{t}+\epsilon_{t+1}+ \psi u_{u+2}),\\
y_{t+2}&=\frac{1}{1-\phi \psi}(\phi^2 v_{t}+\phi \epsilon_{t+1} + u_{u+2}).
\end{align*}
Then we apply Lemma $2'$ to the three terms on the right hand side of the first equation,  which are independent and have equivalent Paretian tails,  and get Proposition \ref{h12}. 

\subsection{A discussion of the DBJ principle for noncausal processes with more than one noncausal roots}
Proposition 9 says that for any MAR$(p,1)$ process, the limiting distribution of $r_{t+1}$ given $y_t$ large, and known previous rates of increase $r_t, r_{t-1}$ behave in pretty much the same way as for a MAR$(1,1)$ process (see Proposition 8).  Indeed,  as explained at the beginning of section 4.2.2,  the analysis of the initial process $(y_t)$ is easily transformed into the analysis of its noncausal part $(u_t)$.

The aim of this subsection is to discuss the case where the number of noncausal roots $q$ is strictly larger than 1. To fix the ideas,  let us start with the case of MAR$(0,2)$ process,  with the infinite MA representation:
$y_t=\sum_{h=0}^{\infty} c_{-h} \epsilon_{t+h}$,  where $c_{-h}=b_h$,  whose expression is given by eq. \eqref{expressionbh}. In the following, we will show that,  contrary to a MAR(0,1), for which the limiting distribution of $r_{t+1}$ can only weight two values, in the MAR(0,2) case this limiting distribution is discrete but with an infinity of possible values. 
\subsubsection{First order approximation: the SBJ principle}
We start by analyzing the case where $r_t$ takes one of the limiting values predicted by Proposition 1.  By the SBJ principle,  since $y_t$ is large,   exactly one error among $\epsilon_t,\epsilon_{t+1},...,$ is large,  and we can distinguish two cases: 
\begin{enumerate}
\item if the SBJ is $\epsilon_t$,  then by Proposition 2,  the distribution of $(y_{t-1},y_{t+1})/y_t$ is approximately the point mass at $(b_1, 0)$.  In this case, we have:
\begin{equation}
\label{limit1}
r_{t+1} \approx 0, 
\end{equation}
and the bubble will collapse at the next period.  Thus,  as $y_t$ increases to infinity and $r_t$ tends to $b_1^{-1}$,  the conditional distribution of  $r_{t+1}$ given $ y_t, r_t$ converges to the point mass at zero.    
\item if the SBJ is $\epsilon_{t+h}$,  with a positive $h$,  then the distribution of $(y_{t-1},y_{t+1})/y_t$ is approximately the point mass at $(b_{h+1}/b_{h}, b_{h-1}/b_{h})$.  By the linear recursion between $b_{h-1}, b_{h}, b_{h+1}$ (see Proposition 3),  we get: $r_{t}^{-1} \approx \psi_1+\psi_2 r_{t+1}$,  or 
\begin{equation}
\label{limit2}
r_{t+1} \approx \frac{r_{t}^{-1}-\psi_1}{\psi_2},
\end{equation}
  which depends on the past value $r_t$,  as in Proposition \ref{yr}.  In other words,  if $r_t$ is close to any of the values $b_{h+1}/b_{h}, h=1,...$, the conditional distribution of $r_{t+1}$ given $ y_t, r_t$ converges to the point mass at $\frac{r_{t}^{-1}-\psi_1}{\psi_2}$. 
\end{enumerate}

\subsubsection{Second order approximation: the DBJ framework} Again,  in practice,  the realized value of $r_t^{-1}$ is almost surely different from $b_1$ and $b_{h+1}/b_{h}, h=1,...$. In other words, none of the above two limiting cases is accurate enough. We are back in a DBJ framework, where at least two errors are large and of comparable order. Indeed,  if among $\epsilon_t, \epsilon_{t+1},...$,  there were only one big jump, which we denote by $\epsilon_{t+i}$, then we would have approximately: 
$$y_t \approx b_i  \epsilon_{t+i} ,  \qquad y_{t-1} \approx b_{i+1}  \epsilon_{t+i},$$
which means that $r_t^{-1} \approx b_{i+1}/b_{i}.$

Moreover,  this DBJ framework is more complicated than in the MAR(1,1) case discussed in Section 4.2.1. Indeed, for a MAR(1,1) process, one of the big jumps concerns a past date and thus can be identified.  Hence the DBJ analysis of the initial process $(y_t)$ can be easily transformed into the SBJ of its noncausal part.   In the MAR(0,2) case, however, since each $y_t$ only depends on the future or current errors, both big jumps concern current or future dates and their exact dates are hence ``unknown" yet. More precisely,  we have to distinguish between two cases: 
\begin{enumerate}
\item Both big jumps concern future dates,  $t+i$ and $t+j$, with $0<i<j$.  Then we have:
\begin{align}
y_t & \approx  b_i \epsilon_{t+i}+ b_j \epsilon_{t+j}, \label{eq11}\\
y_{t-1} & \approx b_{i+1} \epsilon_{t+i}+ b_{j+1}\epsilon_{t+j} \label{eq12} ,\\
y_{t+1} &\approx  b_{i-1} \epsilon_{t+i}+ b_{j-1} \epsilon_{t+j} ,
\end{align}
with $b_{i-1}  \neq 0,$ since $i-1 \geq 0$.  By applying the recursive relationship between the coefficients: 
$$
b_{i+1}= \psi_1 b_i+ \psi_2 b_{i-1},  \qquad b_{j+1}= \psi_1 b_j+ \psi_2 b_{j-1},  
$$
we get:
$$
y_{t-1} \approx \psi_1 y_t+ \psi_2 y_{t+1},
$$
or 
$$
r_{t+1} \approx \frac{r_{t}^{-1}-\psi_1}{\psi_2}.
$$
That is,  eq. \eqref{limit2} also holds,  without $r_t^{-1}$ necessarily being of the form $b_{j+1}/b_j$.  Thus this first case is the extension of case 2 above under the SBJ framework.  
\item The two big jumps are $\epsilon_{t}$ and $\epsilon_{t+j}$, with $j>0$.  Then we have:
\begin{align}
y_t & \approx  b_0 \epsilon_{t}+ b_j \epsilon_{t+j}, \label{yt}\\
y_{t-1} & \approx b_{1} \epsilon_{t}+ b_{j+1}\epsilon_{t+j}, \label{ytm1}\\
y_{t+1} &\approx  b_{j-1} \epsilon_{t+j},\label{ytp1}
\end{align}
where $\epsilon_t$ no longer enters into the last equation for $y_{t+1}$.  This is the extension of case 1 under the SBJ framework. Then solving for $\epsilon_{t+j}$ by using the first two equations and plugging into the last equation leads to:
\begin{equation}
\label{hrv}
r_{t+1} =\frac{y_{t+1}}{y_t}\approx \frac{r_t^{-1}-b_1}{b_{j+1}-b_1b_j} b_{j-1}.
\end{equation}
Moreover,  as $j$ increases to infinity,  we have: $b_j/b_{j-1} \rightarrow \mu_1$.  Thus the right hand side of eq. \eqref{hrv} converges to the right hand side of eq. \eqref{limit2}. 
\end{enumerate}

 

Thus the limiting distribution of $r_{t+1}$ given $r_t$ is a countable discrete mixture,  with one point mass given by eq. \eqref{limit2},   as well as an infinity of point masses given by eq. \eqref{hrv} $j=1,2,...$ varying.  Note that whether or not we are in case 1 and 2 is uncertain even given $y_t$ and $r_t$,  and in case we are in case 2,  the exact value of the index $j$ is also uncertain.  

This finding is to be compared with the SBJ analysis obtained in eqs. \eqref{limit1} and \eqref{limit2},  where the limiting distribution of $r_{t+1}$ is essentially deterministic [see also De Truchis et al. (2025) for similar findings].  However,  it is easily checked that: $i)$ if $r_t=b_1^{-1}$,  then the right hand side of eq. \eqref{hrv} becomes 0,  and we recover case 1 above; $ii)$ if $r_t= \frac{b_h}{b_{h+1}}$,  then the right hand side of eq. \eqref{hrv} becomes $\frac{b_{h-1}}{b_h}$.  Therefore,  for these special values,  the DBJ reduces to the standard SBJ.  This is expected,  since the DBJ is a kind of refinement of the SBJ.

While the limiting distribution of the MAR(0, 2) is significantly more complicated than the limiting distribution of the MAR(0,1),  it also provides more flexibility than the latter for the trajectory of a noncausal process during a bubble,  since a MAR(0, 1) can only allow the same rate of increase $\psi^{-1}$ of the bubble.  


The analysis and results remain essentially the same for the MAR($0, q$) processes with $q \geq 3$ and with conditioning set $y_t=y$ large, and given the previous $q-1$ ratios $r_t, r_{t-1},...,r_{t+2-q}$.  Instead of considering DBJ for $q=2$,  we would need triple big jump (TBJ) for $q=3$, and so on.  

Finally,  as in Proposition 9,  the derivation of the limiting distribution of a MAR($p, q$) process $(y_t)$ can be obtained from the limiting distribution of the MAR($0, q$) process ($u_t$),  through a known  change of variable.  

\subsubsection{An illustration}
As an illustration, we consider a MAR(0,2) model in which $(\epsilon_t)$ is Cauchy $\mathcal{C}(0,1)$, and the two noncausal roots are $\mu_1=0.4$ and $\mu_2=0.6$. The predictive distribution does not have a tractable expression, thus we resort to Monte Carlo simulations.  We simulate a trajectory from this model, with $T=1000000$.  We first plot the conditional distribution of $r_{t+1}$ given $y_t>50$,  where the threshold 50 corresponds roughly to the 97.5 \% quantile of the marginal distribution of the process.  
\begin{figure}[H]
\centering
\includegraphics[scale=0.2]{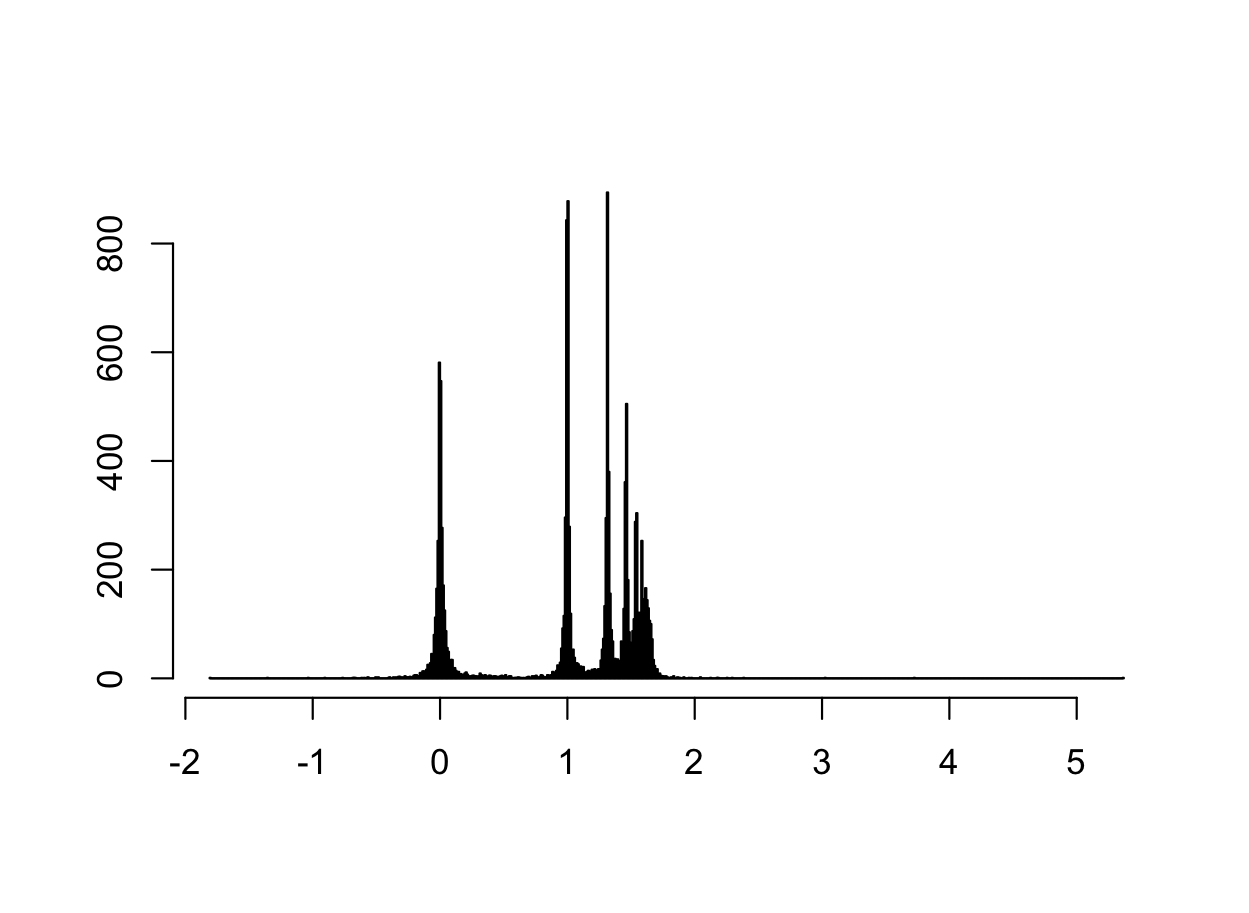}
\caption{Empirical Histogram of $r_{t+1}$ given $y_t>50$.}
\end{figure}
Figure 8 illustrates Proposition 1 in the case of MAR(0,2).  According to section 3.2.5,  this conditional distribution converges to a discrete distribution weighting the following values: 0 (corresponding to collapse of the bubble), $\mu_1^{-1}=1.66,  (\mu_1+\mu_2)^{-1}=1$,  as well as $\frac{\mu_1-\mu_2}{\mu_1^h-\mu_2^h}$, $h>0$ varying.  In particular,  as $h$ increases to infinity,  this latter ratio converges to $(\mu_1+\mu_2)^{-1}=1$.     These features are quite well reflected by the empirical histogram,  which has large peaks at 0,  1,  1.6,  but also several smaller peaks between 1 and 1.6. 

Let us now illustrate the SBJ of Section B.5.1.  Figure 9 plots the empirical histogram of $r_{t+1}$, given $y_t>50$,  and $1<r_t<1.1$,  that is close to $b_1^{-1}$.  By eq. \eqref{limit1}, we expect this conditional distribution to be concentrated around zero.  This is consistent with the histogram below.  
\begin{figure}[H]
\centering
\includegraphics[scale=0.2]{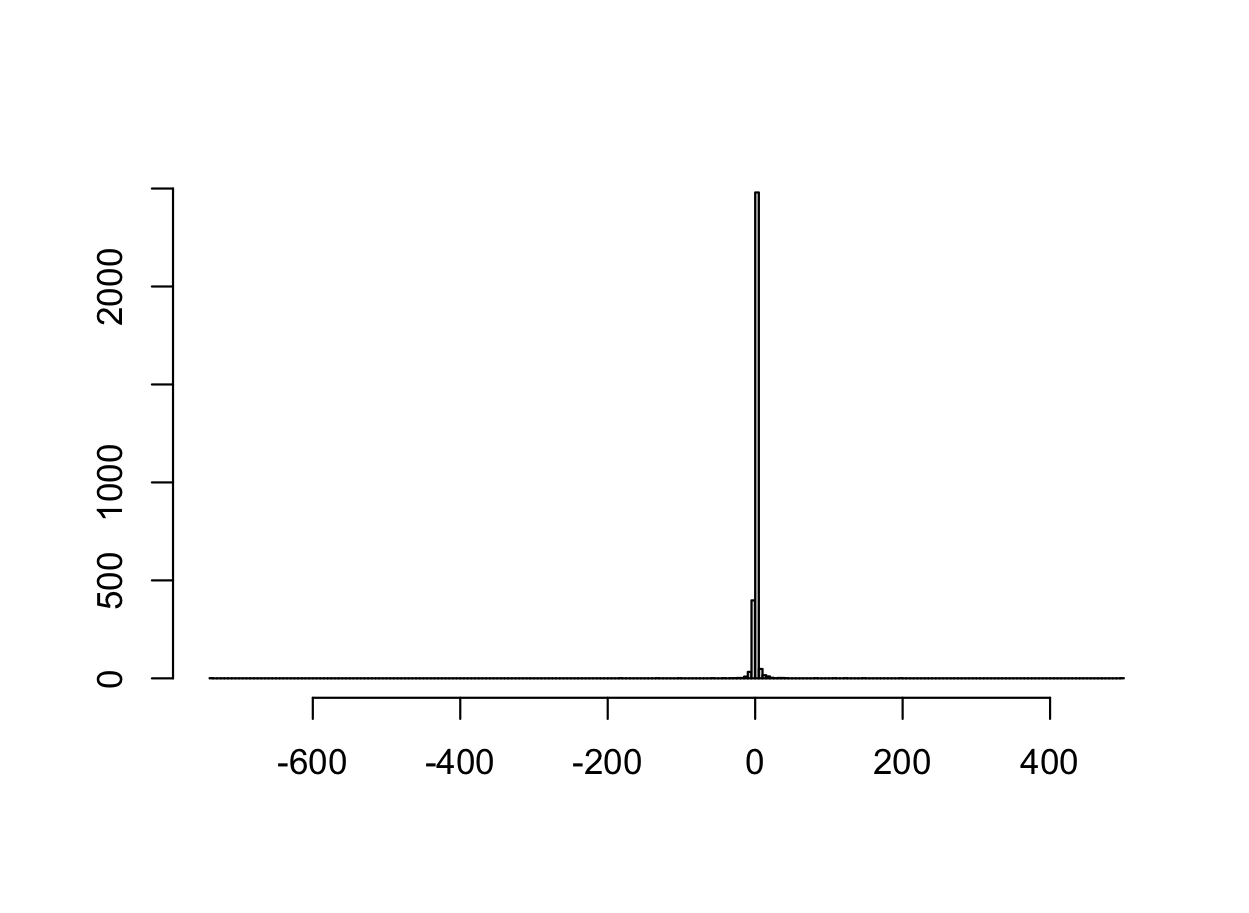}
\caption{Empirical Histogram of $r_{t+1}$ given $y_t>50$ and $1<r_t<1.1$.  Note that due to the extra conditioning event $1<r_t<1.1$,  the number of observations at our disposal to plot this histogram is significantly smaller than for the previous histogram. }
\end{figure}

Let us finally illustrate the DBJ of Section B.5.2.  Figure  10 is similar to Figure 9,  except that we replace $1<r_t<1.1$ by $2<r_t<2.1$.  Thus the value of $r_t$ is chosen to be different from $b_1$ and $b_{h+1}/b_h$, $h$ varying.  Section B.5.2. predicts that this conditional distribution is likely concentrated around $\frac{r_{t}^{-1}-\psi_1}{\psi_2} \approx 2$,  as well as $\frac{r_t^{-1}-b_1}{b_{j+1}-b_1b_j} b_{j-1}$, $j$ varying, with the latter sequence converging to 2 as $j$ increases.  This is consistent with Figure 10.  

\begin{figure}[H]
\centering
\includegraphics[scale=0.2]{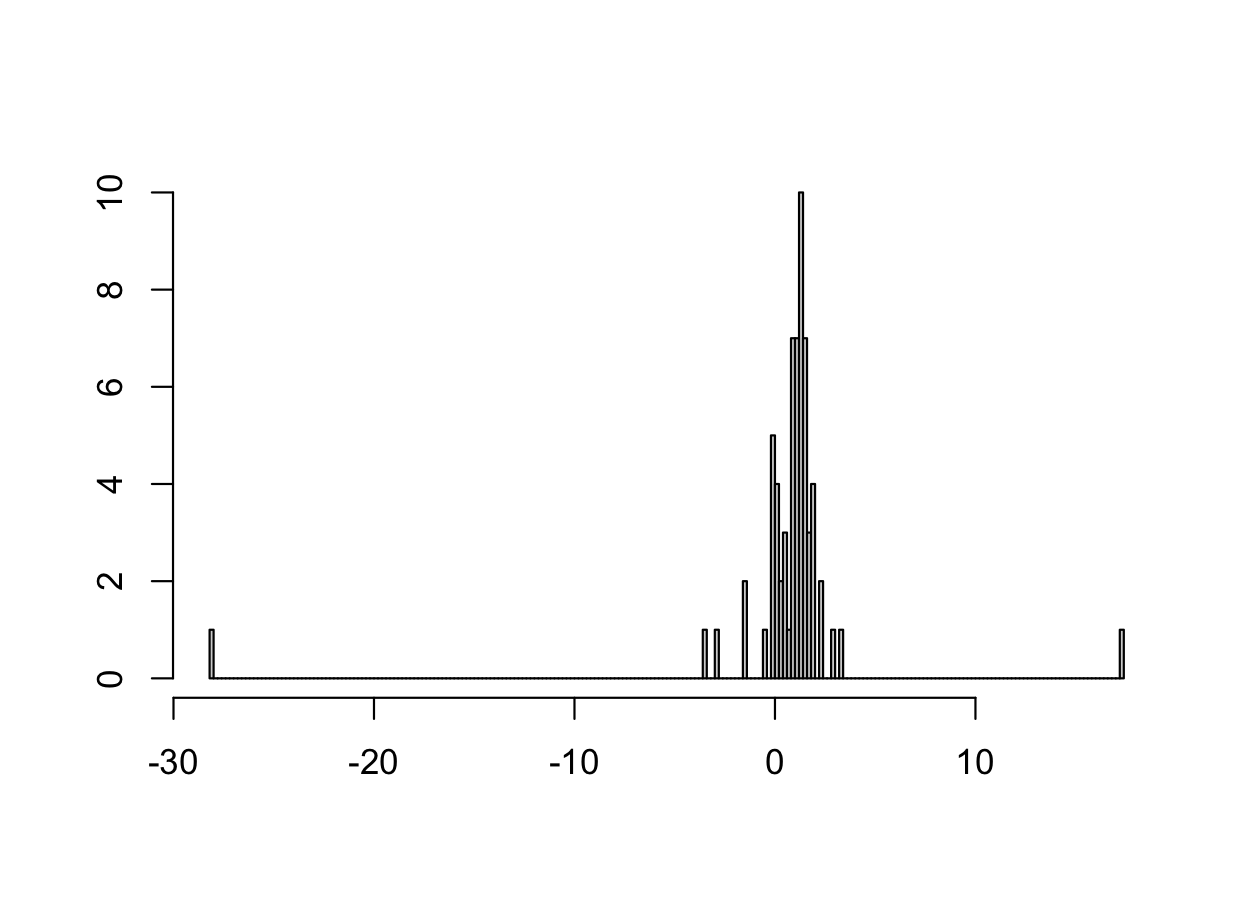}
\caption{Empirical Histogram of $r_{t+1}$ given $y_t>50$ and $2<r_t<2.1$. }
\end{figure}
By comparing these three histograms,  we see clearly the huge impact of the value of $r_t$ on the limiting distribution of $r_{t+1}$, even after conditioning upon $y_t$ large.

\end{document}